\newcommand{\HL}[1]{\textcolor{magenta}{#1}}
\newcommand{\ZS}[1]{\textcolor{blue}{\it [ZS: #1]}}
\newcommand{\newtext}[1]{\textcolor{blue}{#1}}
\newcommand{\SV}[1]{\textcolor{red}{#1}}
\def\Im{\mathop{\rm Im} }
\newcommand\half{{\ensuremath{\frac{1}{2}}}}
\newcommand\vev[1]{{\ensuremath{\left\langle{#1}\right\rangle}}}
\newcommand\myeq{\stackrel{\mathclap{\normalfont\tiny\mbox{uncorrelated}}}{=}}
\newcommand{\be}{\begin{equation}}
\newcommand{\ee}{\end{equation}}
\newcommand{\bea}{\begin{eqnarray}}
\newcommand{\eea}{\end{eqnarray}}
\newcommand{\bega}{\begin{gather}}
\newcommand{\eega}{\end{gather}}
\newcommand{\nn}{\nonumber\\}
\newcommand{\bi}{\begin{itemize}}
\newcommand{\ei}{\end{itemize}}
\newcommand{\ben}{\begin{enumerate}}
\newcommand{\een}{\end{enumerate}}
\newcommand{\bca}{\begin{cases}}
\newcommand{\eca}{\end{cases}}
\newcommand{\bln}{\begin{align}}
\newcommand{\eln}{\end{align}}
\newcommand{\bst}{\begin{split}}
\newcommand{\est}{\end{split}}
\def\ie{\begin{equation}\begin{aligned}}
\def\fe{\end{aligned}\end{equation}}
\newcommand{\bma}{\le(\begin{matrix}}
\newcommand{\ema}{\end{matrix}\ri)}
\newcommand{\bwt}{\begin{widetext}}
\newcommand{\ewt}{\end{widetext}}
\def\b{{\beta}}
\newcommand\sig{\sigma}
\newcommand\om{\omega}
\newcommand\Ga{{\ensuremath{{\Gamma}}}}
\newcommand\de{{\ensuremath{{\delta}}}}
\newcommand\De{{\ensuremath{{\Delta}}}}
\def\th{{\theta}}
\newcommand\ov{\over}
\newcommand\ha{{\half}}
\def\le{\left}
\def\ri{\right}
\newcommand\sA{{\ensuremath{{\mathcal A}}}}
\newcommand\sH{{\ensuremath{{\mathcal H}}}}
\newcommand\sL{{\ensuremath{{\mathcal L}}}}
\newcommand\sN{{\ensuremath{{\mathcal N}}}}
\newcommand\sO{{\ensuremath{{\mathcal O}}}}
\newcommand\sS{{\mathcal S}}
\newcommand\sZ{{\mathcal Z}}
\newcommand{\Tr}{\text{Tr}}
\newcommand{\eij}{\overline{E}_{ij}}
\newtheorem{definition}{Definition}[section]
\newtheorem{claim}[definition]{Claim}
\begin{document}

\title{Local dynamics and the structure of chaotic eigenstates}

\author[a,*]{ Zhengyan Darius Shi,} 
\author[a,b,*]{ Shreya Vardhan,}
\author[a]{and Hong Liu}
\affiliation[a]{Center for Theoretical Physics, 
Massachusetts Institute of Technology, Cambridge, MA 02139}
\affiliation[b]{Stanford Institute for Theoretical Physics, Stanford University, Stanford, CA 94305}
 \note[*]{These authors contributed equally to the work. }

\emailAdd{zdshi@mit.edu}
\emailAdd{vardhan@stanford.edu}
\emailAdd{hong\_liu@mit.edu}

\preprint{MIT-CTP/5567}

\abstract{We identify new universal properties of the energy eigenstates of chaotic systems with local interactions, which distinguish them both from integrable systems and from non-local chaotic systems. 
We
study the relation between the energy eigenstates of the full system and products of energy eigenstates of two extensive subsystems, using  a family of spin chains in (1+1) dimensions as an illustration. The magnitudes of the coefficients relating the two bases have a simple universal form as a function of $\omega$, the energy difference between the full system eigenstate and the product of eigenstates. This form explains the exponential decay with time of the probability for a product of eigenstates to return to itself during thermalization. 
We also find certain new  statistical properties of the coefficients.
 While it is generally expected that the coefficients are uncorrelated random variables, we point out that correlations  implied by unitarity are important for understanding the transition probability between two products of  eigenstates, and the evolution of operator expectation values during thermalization. Moreover, we find  
that there are additional correlations resulting from locality, 
which lead to a slower growth of the second Renyi entropy than the one predicted by an uncorrelated random variable approximation. 
}

\maketitle

%\tableofcontents

\section{Introduction}\label{sec:intro}

%General quantum many-body systems of physical interests are not integrable. 

Characterizing universal properties of chaotic quantum many-body systems is central to many areas of physics, ranging from condensed matter to high-energy physics. A number of universal features are known for the spectrum and energy eigenstates. % of quantum chaotic systems.
%The spectrum and energy eigenstates of quantum chaotic systems are characterized by a number of universal properties. %For example, 
Early works concentrated on the distribution of energy eigenvalues, whose local statistics have widely been observed to obey the eigenvalue statistics of Gaussian random matrices \cite{wigner1951,mehta2004random,Dyson1962es, bohigas, berry_tabor, macdonald, berry,rabson2004_wignerdyson,kudo2005_wignerdyson,santos2010_wignerdyson}. %\ZS{Added more references here. Need more probably}
Physical properties of individual energy eigenstates are elucidated by 
%One universal property of chaotic eigenstates is 
the eigenstate thermalization hypothesis (ETH)~\cite{deutsch_1991_ETH,srednicki_1994_ETH} and subsystem ETH~\cite{dymarsky_2018_subsystemETH, grover_eig}, according to which a system in a generic energy eigenstate behaves macroscopically like a thermal system. 

\begin{comment}
More explicitly, the matrix elements of a few-body operator $W$ between any  energy eigenstates $\ket{a}, \, \ket{b}$ of the system with extensive energies $E_a$, $E_b$ are such that
\be 
|\braket{a|W|a} - W (\, \epsilon =  E_a/V \,)| \sim \mathcal{O}\left(e^{-S(E_a)/2}\right) , \quad |\braket{a|W|b}| \sim \mathcal{O}\left(e^{-S(\overline{E})/2}\right), \, \, a \neq b  \label{eth}
\ee
where $\overline{E} = (E_a+E_b)/2$,~~$S(E)$ is the thermodynamic entropy of the system at energy $E$, and 
$W(\epsilon)$ is the expectation value of $W$ in the microcanonical ensemble at energy density $\epsilon$, which is given by a smooth function. A stronger version of this statement, known as the subsystem eigenstate thermalization hypothesis \cite{dymarsky_2018_subsystemETH}, applies to the matrices $\rho_a^A = \text{Tr}_{B} \ket{a}\bra{a}$ and $\rho_{ab}^A = \text{Tr}_{B} \ket{a}\bra{b}$, where $A$ is any subsystem smaller than half of  the system and $B$ is its complement:
\be 
||\rho_a^A -  \, \rho^A(E=E_a)||_1 \sim \mathcal{O}\left( e^{N_A -S(E_a)/2}\right), \quad \quad ||\rho_{ab}^A||_1\sim \mathcal{O}\left( e^{N_A -S(\overline{E})/2}\right), \quad  a \neq b \, . 
\label{subsystem}
\ee
In the above estimates, $\rho^A(E)$ is a universal density matrix that depends smoothly on $E$ and %\newtext{
$N_A$ is the number of degrees of freedom in $A$. 
\end{comment}

%} \ZS{I modified these statements to reflect dependence on the size of $A$.%}

The spectral statistics and ETH properties %~\eqref{eth}--\eqref{subsystem 
 do not  distinguish whether a system is local or not. However, the locality of interactions in realistic physical systems  has important consequences.  For example, there are bounds on the speed at which local few-body operators grow or the rate at which  quantum information spreads between spatial regions~\cite{lieb_1972_LRbound, bravyi_2007_boundent_bipartite, acoleyen_2013_entrates_arealaw,marien_2016_entbound,vershynina_2019_renyibound, shi2022_renyibounds}.  Since the dynamics of a system are fully determined by the energy eigenstates and eigenvalues, locality must be encoded in the spectral properties. On the other hand, since the projector onto an  energy eigenstate 
is a highly non-local operator, any imprints of locality on it must be subtle. We identify examples of such imprints in this paper.
 
 %\SV{[}but locality should have imprints on the structure of energy eigenstates.  Since an energy eigenstate is concerned with properties of the whole system at a given energy, such imprints must be extremely subtle. \SV{]} 

%and are reflected only in collective behavior of a large number of eigenstates.  

One can probe constraints from local %dynamics
interactions on the structure of energy eigenstates by examining the behavior of several dynamical quantities. %\ZS{I think it makes more sense to put eq 1.3-1.5 to below 1.9? }
Consider, for example, the evolution of the Renyi entropies of a subsystem $A$ during the equilibration process of some non-equilibrium state $\ket{\psi}$
expressed in terms of the energy eigenbasis $\{\ket{a}\}$,\footnote{The $n=1$ case is understood as a limit and gives the von Neumann entropy.} 
\bega \label{ren1}
S_{n,A}(t) = - \frac{1}{n-1} \log \text{Tr}[\rho_A(t)]^n, \quad n = 1, 2, ... \\
 \label{ren2}
 \rho_A(t) = \text{Tr}_B\le[e^{-iHt}\ket{\psi} \bra{\psi} e^{iHt}\ri] = \sum_{a,b} e^{- i (E_a -E_b) t} c^a_\psi c^{b*}_\psi  \text{Tr}_B (\ket{a} \bra{b}) , \\ 
 \ket{\psi} = \sum_a c^a_\psi \ket{a}  \ . \label{cp}
\end{gather} 
 For local systems, $S_{n,A}(t)$ is expected to exhibit linear growth in $t$ \cite{calabresecardy_2005_evolent_1d, kimhuse_2013_entballistic,liusuh_2014_entTsunami}, which can be used to constrain  the collective behavior of coefficients $c^a_\psi$. 

A particularly useful setup to probe imprints of local dynamics is to divide the system into two extensive subsystems $A$ and $B$, and take $\ket{\psi} = \ket{i}_A \ket{j}_B$ to be a product of energy eigenstates $\ket{i}_A$ and $\ket{j}_B$ of $A$ and $B$. Below we abbreviate $\ket{ij} \equiv \ket{i}_A\ket{j}_B$. In this case, the  equilibration of $\ket{ij}$ originates solely from local interactions at the interface of $A$ and $B$, and it may be expected that sharper statements can be made about the collective behavior of the corresponding coefficients $c^a_{ij}$. 
For definiteness, we will consider a one-dimensional spin chain whose Hamiltonian $H$ can be decomposed as 
$H = H_A + H_B + H_{AB}$, with $H_{AB}$ the local interaction between $A$ and $B$. See Fig. \ref{fig:hab_local}. 

\begin{figure}[!h]
    \centering
    \includegraphics[width=8cm]{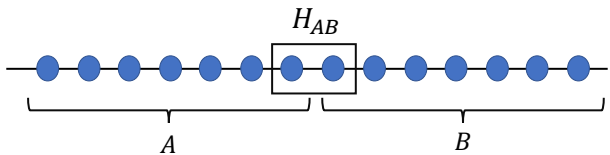}
    \caption{In (1+1) dimensions, the interaction term $H_{AB}$ is supported on an $\sO(1)$ number of sites across the boundary between $A$ and $B$.}
    \label{fig:hab_local}
\end{figure}

%In this context, there are new dynamical quantities that can be used to probe the approach to equilibrium. 

We can consider a few interesting dynamical quantities during the approach to equilibrium in this context, in addition to \eqref{ren1}. One such quantity is 
the return probability as a function of time, 
 \be 
P(t) = |\bra{ij} e^{-i H t} \ket{i j}|^2 =  \sum_{a, b} |c^a_{ij}|^2 |c^b_{ij}|^2 e^{- i (E_a-E_b) t}
 \, . \label{ptdef}
\ee 
For a generic state $\ket{\psi}$, the return probability $\vev{\psi|e^{- i H t} |\psi}$ is exponentially small in the system size $L$ at any time $t$ of order $\sO(L^0)$. For example, for a product state between all sites of the system, the return probability shows a Gaussian decay of the form \cite{torres}
\be \label{prod_return}
\vev{\psi|e^{- i H t} |\psi} = c \,  e^{- \Gamma_{\rm prod}^2t^2}
\ee
where $c$ is an $\sO(1)$ constant, and 
\be 
\Gamma_{\rm prod}^2 = \braket{\psi | H^2 |\psi} = \sO(L) \, . 
\ee
If we ignore exponentially small values in the thermodynamic limit, this quantity therefore does not show a non-trivial evolution at any $\sO(1)$ or larger time scale.~\footnote{If we do consider values of $\braket{\psi|e^{-iHt}|\psi}$ that are exponentially small in $L$, then then there are some interesting features of the time-evolution at long time scales which are studied in \cite{torres}.}
However, for states of the form $\ket{i}_A \ket{j}_B$, since $H$ differs from $H_A + H_B$ only by a local term, we may expect  \eqref{ptdef}  to have non-trivial dynamics for $\sO(1)$ times. By studying this quantity in local chaotic spin chains in this paper, we find that this is indeed the case. As we discuss in Sec.~\ref{sec:eig}, $P(t)$ exhibits exponential decay with an $\sO(1)$ decay time scale $1/\Gamma$. Due to the one-dimensional nature and translation-invariance of the system, $\Gamma$ can be interpreted as an emergent energy scale characterizing the local dynamics of the system. %\SV{[Discussion of Loschmidt echo moved to conclusions, as there does not seem to be a very sharp statement in the literature about how it decays in many-body systems.]}

%\SV{moved Loschmidt echo to section 3.2}

Another quantity we consider is the time evolution of transition probability from one product of eigenstates to another, %~(below $\ket{ij} \equiv \ket{i}_A \ket{j}_B$)
\bega \label{tra}
P_{ij, xy}(t)= \le|\vev{xy|e^{- i Ht} |ij} \ri|^2 =   \sum_{a,b} e^{- i (E_a - E_b) t} c_{ij}^a c_{xy}^{a*} c_{ij}^{b*} c_{xy}^b \ . 
\end{gather}
Again, since the time-evolution is governed by  $H_{AB}$, this quantity turns out to have a non-trivial evolution in the thermodynamic limit, as we discuss in Sec.~\ref{sec:phco}. %\SV{[removed the comment about reaching saturation value for generic states at any $\sO(1)$ time as that might be too strong]}
From~\eqref{ptdef} to~\eqref{tra} to~\eqref{ren2}, the dynamical  quantities defined above probe increasingly more subtle correlations along $c^a_{ij}$: 
$P(t)$ probes only the collective behavior of $|c^a_{ij}|$, while the transition probabilities~\eqref{tra} are also sensitive to the phases of $c^a_{ij}$ and probe correlations among four of them. The Renyi entropies  $S_{n,A}(t)$  probe correlations among at least eight such coefficients (for $n=2$). 

%similar to~\eqref{ptdef}, the story is more interesting, 

The structure of $c^a_\psi$ has been discussed in earlier works in different 
contexts for different types of $\ket{\psi}$~\cite{goldstein2006_EBansatz,popescu2006_EBansatz,GroverLu_2019, murthy_2019_structure, reimann_fast, centrallim, israelev, deutsch_1991_ETH, deutsch_3, deutsch_therment}. %\ZS{Added citations related to canonical typicality here.} 
%\SV{[Added two of the product state references. For the Reimann references I think we should remove \cite{reimann_2012_localequilibration1,reimann_2012_localequilibration2,reimann_2019_transportlessequil} and replace with \cite{reimann_fast}.]} 
A common ansatz is that $c^a_\psi$ can be treated as {\it independent and identically  distributed (iid) complex Gaussian random variables}.  %(see e.g.~\cite{GroverLu_2019,reimann_2012_localequilibration1,reimann_2012_localequilibration2,reimann_2019_transportlessequil}).
In particular, for $c^a_{ij}$, the ``ergodic bipartition (EB)'' ansatz of~\cite{GroverLu_2019} states that 
 \be 
\overline{c^a_{ij}}=0, \quad  \overline{c^a_{ij}  c^{b\ast}_{kl}} = \bca \frac{1}{\sN} \delta_{ab} \delta_{ik} \delta_{jl}  &
|E_a - E_{A,i} - E_{B,j}|\leq \Ga \cr
0 & |E_a - E_{A,i} - E_{B,j}| > \Ga
\eca
\label{erg}
 \ee
 where $\sN$ is a normalization constant and $\Ga$ is an energy scale which does not scale with the volume of the system.  
The EB ansatz leads to simple analytic expressions for the $n$-th Renyi entropies in an energy eigenstate $\ket{a}$ of the full system, which have been confirmed numerically in chaotic spin chains~\cite{GroverLu_2019}. %\SV{[} A similar ansatz was also used \SV{for the coefficients that relate the energy eigenbasis to some simple reference basis in \cite{reimann_fast}}. This ansatz was used  to predict the time-evolution of local operator expectation values during equilibration (\HL{in random states?}) \SV{in a state of the reference basis}, and was confirmed to match \SV{[}numerical \SV{] experimental} results at low temperatures (\HL{at early times?}) (\HL{carefully check refs in this paragraph.})  \SV{]}

%\SV{[See Fig. 1 and 2 of \cite{reimann_fast}: they are comparing the experimental results to their low-temperature expansion in equation (20). In the discussion in \cite{reimann_fast}, it is assumed that the observable of interest has a simple representation in some reference basis, and the energy eigenbasis is related to that basis by a random unitary -- so not quite uncorrelated Gaussian random variables. But the correlations only give exponentially small contributions to this quantity. In any case, we could also remove this point from this paragraph and mention briefly in the conclusions that the random unitary ansatz can sometimes work well for reproducting operator expectation values at low temperatures, even though it does not work for Renyi entropies.]}

However, on applying the EB ansatz to $P(t)$, $P_{ij, xy} (t)$ and $S_{n,A}(t)$ (for $\ket{\psi} = \ket{i}_A \ket{j}_B$), we find unphysical predictions for all three quantities. See Fig.~\ref{fig:S2_growth} for a cartoon comparison for $S_2 (t)$. 
The evolution from the EB ansatz shows large oscillations, and first reaches the equilibrium value at a time scale which is independent of the volume of the system. This is incompatible with locality, as it has been shown~\cite{Bravyi_2006_LRbounds_toporder,bravyi_2007_boundent_bipartite,acoleyen_2013_entrates_arealaw,marien_2016_entbound,vershynina_2019_renyibound,shi2022_renyibounds} that in a system with local interactions, %the von Neumann entropy of a subsystem, and hence 
the Renyi entropies can grow at most linearly with time. %\ZS{This ref doesn't actually show this. The short paper I wrote on entanglement growth bounds has a proof of this though.}
Similarly, applying the EB ansatz to the calculation of $P(t)$ leads to large oscillations and is not compatible with exponential decay observed in local systems. For $P_{ij,xy}(t)$, \eqref{erg} predicts that this quantity reaches its saturation for any $\sO(1)$ time scale, in contrast to the non-trivial evolution at $\sO(1)$ times that we observe in Sec.~\ref{sec:phco} below.

%Hence the ansatz \eqref{erg}, according to which the $n$-th Renyi entropy can reach an $O(V)$ value in an $O(1)$ time, cannot be consistent with locality. 

\begin{figure}[!h]
    \centering
    \includegraphics[width = 0.7\textwidth]{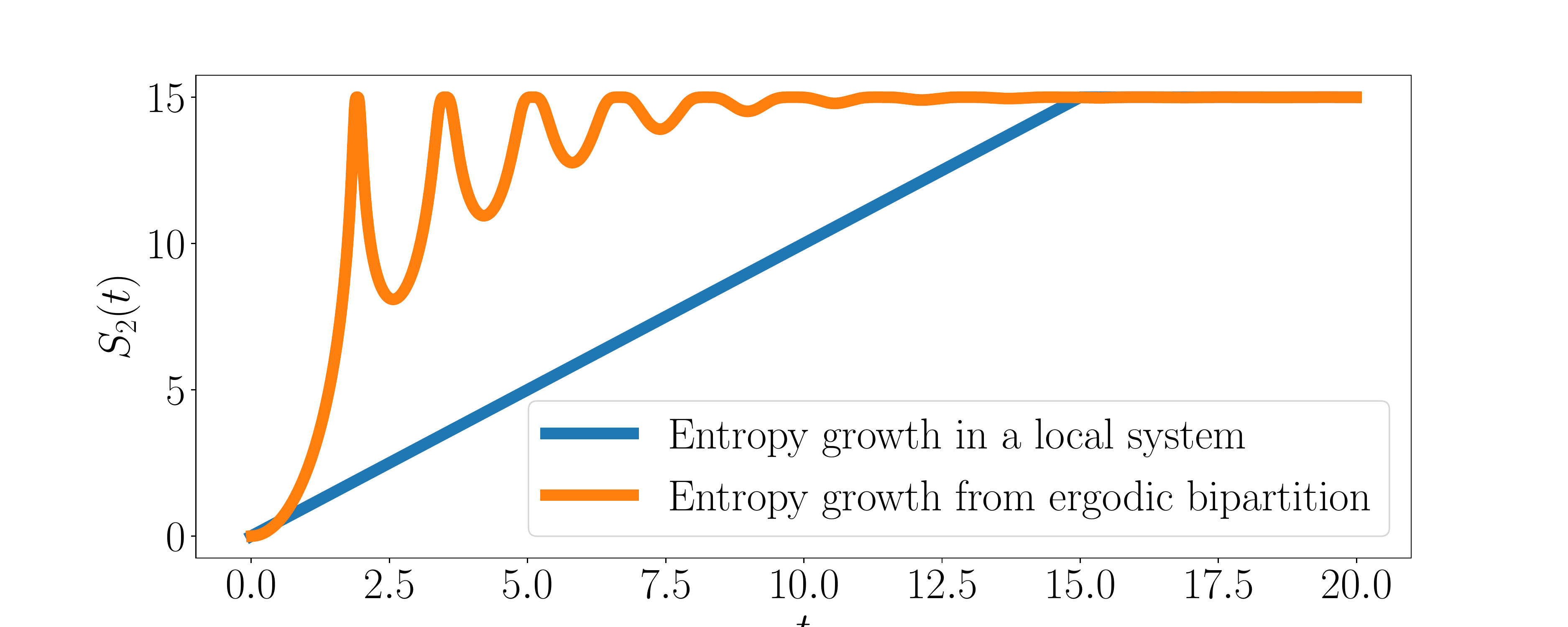}
    \caption{A cartoon comparison between the behavior of $S_{2,A}(t)$ for an extensive subsystem $A$ in a chaotic spin chain and the behavior implied by the EB ansatz. From  the EB ansatz, the $\sO(L)$ saturation value is first reached at an $\sO(1)$ time, which is inconsistent with locality.}
    \label{fig:S2_growth}
\end{figure}

In this paper, we make a number of observations about the universal collective properties of the coefficients $c^a_{ij}$ from studies of a class of chaotic spin chain systems. These properties help explain the evolution of the above dynamical quantities during thermalization. We summarize our main results below: 

\ben 

\item  We find that $|c^a_{ij}|^2$ can be approximated in a simple universal form as a function of $\omega \equiv E_a - \bar E_{ij}$, where $ \bar E_{ij}=\vev{ij|H|ij} $. 
The function is given by 
\be \label{fsum}
f(\omega) = e^{S\le(\frac{E_a + \bar{E}_{ij}}{2}\ri)} |c^a_{ij}|^2= 
\begin{cases} 
\frac{C_0}{\omega^2 + \Gamma^2} & |\omega| \lesssim \sigma \\
C_1 e^{-\lambda |\omega|} & |\omega| \gtrsim \sigma   
\end{cases}  \, \, \, . 
\ee 
where $S(E)$ is the thermodynamic entropy, $\sigma> \Gamma$, and  in 1+1 dimensions, all constants appearing in the above expression are $\sO(1)$. We provide a general analytic argument for this functional form with inputs from random matrix theory, which also applies in higher dimensions. In higher dimensions, $\Gamma$ and $\sigma$ both scale as some power of the area of the boundary of $A$. 

The time scale for the  exponential decay of $P(t)$ at intermediate times is  $1/\Gamma$. 

\item   To explain the evolution of the transition probability $P_{ij,xy}(t)$  using 
\eqref{tra}, we propose the following minimal model for the average of four $c^a_{ij}$. Here $m_1$ should be seen as shorthand for $i_1, j_1$, and the overline denotes averaging over an $\sO(1)$ number of energy levels close to $a_1$, $i_1$, $i_2$, and other indices.  
\begin{align} \label{c2_intro} 
\overline{c^{a_1}_{m_1} {c^{a_2}_{m_2}}^{\ast} c^{b_1}_{n_1} {c^{b_2}_{n_2}}^{\ast}} &= \overline{ |c^{a_1}_{m_1}|^2} \, \overline{ |c^{b_1}_{n_1}|^2 } (\delta_{a_1 a_2} \delta_{b_1b_2} \delta_{m_1m_2} \delta_{n_1 n_2} + \delta_{a_1 b_2} \delta_{b_1a_2} \delta_{m_1n_2} \delta_{n_1 m_2} )\cr
& - k_{a_1, b_1, m_1, n_1} (\delta_{a_1 a_2} \delta_{b_1 b_2} \delta_{m_1n_2}\delta_{n_1m_2} + \delta_{a_1 b_2} \delta_{b_1 a_2} \delta_{m_1m_2}\delta_{n_1n_2} ) \, \nn
& + ...
\end{align}
Here $k_{a, b, m, n}$ is $\sO(e^{-3S})$, and can be parameterized by a smooth function $g(\omega_1, \omega_2, \omega_3)$ of the  energy differences $\omega_1=E_a - E_b$, $\omega_2=\bar E_m - \bar E_n$, $\omega_3=(E_a+E_b)- (\bar E_m +\bar E_n)$.

If we assumed that the $c^a_{ij}$ were iid Gaussian random variables, we would only have the first line in the above expression. This assumption would lead to the unphysical prediction that $P_{ij,xy}(t)$ is equal to its saturation value for any $\sO(1)$ time.

The correlations in the second line are needed to ensure that $\overline{|\braket{m|n}|^2} = \overline{|\braket{a|b}|^2}=0$ for $a \neq b$ and $m \neq n$. They are also sufficient to describe the evolution of the transition probability at leading order in $e^{-S}$, which depends on both $f(\omega)$ and $g(\omega_1, \omega_2, \omega_3)$.

%Such correlations, despite being exponentially suppressed in the system size, have important implications for the evolution of the transition probability~\eqref{tra}. 
%We see that there is a nontrivial interplay between correlations required by unitarity and constraints from energy conservation and locality.

\item
The average in the first two lines of \eqref{c2_intro} is non-zero only for a particular set of index structures, which is motivated by averages over  random unitary matrices. We find for a local chaotic spin chain that the average is also non-zero at $\sO(e^{-3S})$ for certain  other index structures not explicitly included in \eqref{c2_intro}. While these additional correlations do not play a role in the evolution of $P(t)$ and $P_{ij,xy}(t)$, they explain the evolution of 
\be C^{(2)}(x_1y_1, ij, ij, x_2y_1; t) = \overline{\braket{x_1 y_1| e^{-iHt}|ij} \braket{ij| e^{iHt}|x_2 y_1} }, \quad x_1 \neq x_2 \, .  
\ee
If there were no further contributions to \eqref{c2_intro} at $\sO(e^{-3S})$, we would predict that this quantity is zero at $\sO(e^{-S})$. 
We find numerically that it is non-zero at $\sO(e^{-S})$ and has a non-trivial time-evolution at $\sO(1)$ times.  

\item For the average of eight factors of $c^a_{ij}$, we again have correlations beyond the minimal ones required by the unitarity of the change of basis from $\{\ket{a}\}$ to $\{\ket{ij}\}$. Such correlations play a dominant role in the evolution of the second Renyi entropy $S_{2,A}(t)$. As a result of these correlations, the rate of entanglement growth is slower than it would be on assuming the generalization of the first two lines of  \eqref{c2_intro} for  eight factors of $c^a_{ij}$.   Since the rate of entanglement growth is constrained by the locality of interactions, these unexpected  correlations are  important imprints of locality on the eigenstates. 

%By studying the evolution of $S_2^{(A)} (t)$, we argue that there are significant additional correlations among the $c^a_{ij}$'s implied  by local dynamics. Such correlations are subtle and have surprising features. We make a proposal for how to parameterize them. 

\item The correlations implied by unitarity also play a role in the evolution of expectation values of local operators in the state $\ket{ij}$. We find a simple expression for $\braket{ij|W(t)|ij}$ for a local operator $W$ in terms of the functions $f(\omega)$ and $g(\omega_1, \omega_2, \omega_3)$. We also discuss the interplay of the properties of $c^a_{ij}$ with the eigenstate thermalization hypothesis. 

 \een

The plan of the paper is as follows. We introduce the setup and model in Sec.~\ref{sec:setup}. In Sec.~\ref{sec:eig}, we discuss the structure of the absolute values $|c^{a}_{ij}|$ and how it explains the  evolution of the return probability. In Sec.~\ref{sec:phco}, we discuss the phase correlations among the $c^a_{ij}$ needed to explain the time-evolution of the transition probability. We study the evolution of the second Renyi entropy and the further phase correlations implied by it in Sec.~\ref{sec:ent}. In Section \ref{sec:corr}, we discuss the evolution of correlation functions and the interplay with the eigenstate thermalization hypothesis.  We discuss some future directions  in Sec.~\ref{sec:disc}. A detailed comparison to an earlier ansatz of \cite{murthy_2019_structure} is discussed in Appendix \ref{app:MSlimitation}. Appendices \ref{app:therm} and \ref{app:EDF_derivation_details} justify some of the approximations used in the main text.

\section{Setup}
\label{sec:setup}

In this paper, we will use as an illustration the following family of spin chain models in (1+1) dimensions: 
\begin{equation}
    H = \sum_{i=1}^{L-1} S^z_i S^z_{i+1} + h \sum_{i=1}^L S^z_i + g \sum_{i=1}^L S^x_i \, .\label{ham} 
\end{equation}
With a choice of the parameter $g$ (we will use $g = -1.05$), the system is integrable for $h=0$, and chaotic for sufficiently large $h$. 
We are interested in the thermodynamic limit with the number of lattice sites $L$ going to infinity. 
The energy eigenstates of $H$ are denoted as $\ket{a}$ with eigenvalues $E_a$. 
We use $dE e^{S(E)}$ to represent the number of energy eigenstates of $H$ within the energy interval $[E, E + dE]$. 
In the thermodynamic limit, we have $E_a , S(E) \propto L$, with $E_a$ going from $-\infty$ to $+\infty$, and 
the maximal value of $S(E)$ occurring at $E=0$. We can associate an inverse temperature $\beta (E)$ to energy $E$ by 
\be 
\beta (E) = \frac{d S (E)}{d E} ,\quad \beta (E+ \delta E) = \beta (E) +O(1/L), \quad \text{for} \quad \delta E \sim O(L^0) \ .
\ee
While our explicit numerical results will be restricted to~\eqref{ham}, we expect 
our conclusions may be applicable to general chaotic systems including those in higher dimensions.

Now divide the system into two extensive subsystems, $A$ and $B$. The total Hamiltonian can then be split as 
\be 
H = H_A + H_B + H_{AB}
\ee
where $H_A$ and $H_B$ are supported only on $A$ and $B$ respectively, and  $H_{AB}$ is a local interaction term supported on the boundary between $A$ and $B$. Let $\ket{i}_A$, $\ket{j}_B$ denote the eigenstates of $H_A$, and $H_B$ respectively, with 
eigenvalues $E_{A,i}$ and $E_{B,j}$. 
The expectation value of $H$ in $\ket{ij}\equiv\ket{i}_A \ket{j}_B$ is given by 
\be \label{mean}
\overline E_{ij} \equiv \bra{ij} H \ket{ij} = E_{A, i} + E_{B, j}+\Delta_{ij}, \quad \Delta_{ij} \equiv \bra{ij} H_{AB} \ket{ij}
 \ee 
and the variance is given by %\ZS{I changed notations here because the previous notation $\Gamma$ is now used in Section 3.4}
 \be 
\sigma_{E,ij}^2 \equiv \bra{ij} (H - \overline E_{ij})^2  \ket{ij}
= \bra{ij} H_{AB}^2 \ket{ij} - {\Delta_{ij}}^2 \, .
\label{var}
\ee
Given that $H_{AB}$ is $L$-independent, $\Delta_{ij}$ and $\sigma_{E,ij}$ are both of order $\sO(L^0)$.\footnote{In higher dimensions, they should scale with the area of the boundary of subsystem $A$.}

The evolution of a product of subsystem eigenstates $\ket{ij}$ by the full Hamiltonian $H$ is an example of equilibration to finite temperature, where the effective temperature $T$ is associated with the average energy density of $\ket{ij}$.
Here the  equilibration process is driven by the presence of local interactions $H_{AB}$, so that this setup provides an ideal laboratory for exploring consequences of locality. From~\eqref{mean}--\eqref{var}, $\ket{ij}$ is expected to equilibrate to the microcanonical ensemble at energy $\overline{E}_{ij}$, in the sense that at late times it macroscopically resembles a universal equilibrium density matrix $\rho(\overline{E}_{ij})$.

 %\ZS{I think this statement might be slightly too strong.}
%In contrast, initial product states between all sites of the system, whose evolution has been more extensively studied, will in general equilibrate to infinite temperature~\cite{kimhuse_2013_entballistic,lensky_2019_chaoshighT}. 

At infinite temperature,  aspects of the equilibration process can deduced from universal properties of operator growth in chaotic systems~\cite{wenwei_abanin_2017_entdynanmics, mezei_2017_entspreadchaotic, lensky_2019_chaoshighT}. This approach has the advantage that locality is built in at the outset, but it has not been clear how to generalize it to finite temperature. Moreover, this approach does not provide insight into the effects of locality on the structure of energy eigenstates.  

Instead, we will consider the decomposition 
\be
\ket{i}_A \ket{j}_B = \sum_a c^a_{ij} \ket{a} 
\ee
and study the evolution of $\ket{i}_A \ket{j}_B$ in terms of properties of the coefficients $c^a_{ij}$. 
As discussed in Sec.~\ref{sec:intro}, the behavior of quantities such as the return probability~\eqref{ptdef} and Renyi entropies~\eqref{ren1}--\eqref{ren2} can be used to infer the imprints of locality on the energy eigenstates. 
We explore the extent to which the coefficients $c^a_{ij}$ have a universal structure, and which aspects of this structure are related to various dynamical properties.\footnote{In the presence of global symmetries, it is more appropriate to consider a decomposition of $\ket{a}$ into subsystem eigenstates in the same charge sector as $\ket{a}$. However, since the conceptual lessons we learn will not be sensitive to this subtlety, we will restrict our attention to systems with no global symmetry.}

%\section{Amplitude structure and eigenstate distribution function}

\section{Amplitude structure and return probability}  \label{sec:eig}

In this section we will first describe our main results for the structure of the amplitude $|c^a_{ij}|$ in a local chaotic system, and then provide numerical evidence and analytic arguments for these properties.

\subsection{Universal structure of the amplitude \texorpdfstring{$|c^a_{ij}|$}{}}

Our main proposal is that
%result is that
 in a chaotic system with local interactions, 
 %the quantity $e^{S\le(\frac{E_a + \bar{E}_{ij}}{2}\ri)} |c^a_{ij}|^2$ can be approximated as a smooth function of the average energy $\frac{E_a + \bar{E}_{ij}}{2}$ and the energy difference $E_a-\bar E_{ij}$:  
\be
e^{S\le(\frac{E_a + \bar{E}_{ij}}{2}\ri)} |c^a_{ij}|^2 \approx f\le(\frac{E_a + \bar E_{ij}}{2}, E_a -  \bar E_{ij}\ri)\, \label{eq:f_universal}
\ee
where $f$ is a smooth function of the average energy $\frac{E_a + \bar{E}_{ij}}{2}$ and the energy difference $E_a-\bar E_{ij}$.
We refer to $f$ as the {\it eigenstate distribution function} (EDF), as it characterizes the spread of the full system eigenstates in a reference basis of products of subsystem eigenstates. In more general contexts, this function is sometimes referred to as the local density of states (LDOS) in the literature.

In the model~\eqref{ham},  we find that the approximation of the LHS of \eqref{eq:f_universal} by a smooth function improves rapidly on increasing the value of $h$ (with $g=-1.05$ fixed), consistent with the expectation that the system becomes more chaotic. As we discuss below, a quantitative measure of the smoothness of the function can thus provide a diagnostic for the onset of chaos in this family of Hamiltonians. 

We will provide evidence for the following properties of $f$: 
\ben

\item  $f$ has a finite limit in the thermodynamic limit $L \to \infty$.

\item $f$ has support only for $\om \equiv E_a-\bar E_{ij} \sim \sO(L^0)$ in the $L \to \infty$ limit. 

\item $f$ depends weakly on $\frac{E_a + \bar E_{ij}}{2}$. More explicitly, we expect that 
$f (E + \de E , \om) \approx f (E, \om)  + \sO(1/L)$ for $\de E \sim \sO(L^0)$. So below for notational simplicity we will often write $f(\om)$, only keeping   the second argument explicit. 

\item In the chaotic regime of the Hamiltonian \eqref{ham}, we find that $f$ fits well to the following simple functional form: 

\be \label{eq:f_form}
f(\omega) = \begin{cases} 
\frac{C_0}{\omega^2 + \Gamma^2} & |\omega| \lesssim \sigma \\
C_1 e^{-\lambda |\omega|} & |\omega| \gtrsim \sigma   
\end{cases}  \, \, , 
\ee 
where the constants $\Gamma$, $C_i$, $\sigma$ and  $\lambda$ are all $\sO(L^0)$, and in particular $\sigma$ is an  energy scale a few times larger than $\Gamma$.  
We expect that $f(\omega)$ is a smooth function that interpolates between the two regimes in \eqref{eq:f_form}. 

A  Lorentzian regime has previously been seen for  $|\braket{a|\psi}|^2$ for other choices of $\ket{\psi}$, in particular for product states, using ideas from random matrix theory \cite{Flambaum2000, deutsch_1991_ETH, deutsch_3, kota_2014_embeddedRMT_quantum}. We provide an analytic argument for the Lorentzian regime for the case of $\ket{\psi}= \ket{ij}$ in Sec.~\ref{subsec:derivation_f_form}, which makes use of similar ideas from random matrix theory. We argue that the Lorentzian also applies in higher-dimensional systems, where $\Gamma$ and $\sigma$ both grow with the area of the boundary of $A$. We therefore conjecture that  is universal for sufficiently small $\omega$ in 
any local chaotic system, although the value of the constants depend on microscopic details.

\item Using general arguments for any system with local interactions along the lines of the Lieb-Robinson bounds, we show that the EDF must obey the following upper bound in any spatial dimension 
\be \label{ub_1}
f\le( \om \ri) \leq C e^{- \le(\frac{|\beta|}{2}+\gamma \ri) |\om|} \, ,  \quad \beta = S'\le(\frac{E_a + \bar E_{ij}}{2}\ri)  \approx S' (\bar E_{ij})
\ee
where $\gamma$ is an $\sO(1)$ constant determined by details of the interactions in the system, and $C$ is a constant which does not scale with volume $V$ but can scale with the area of the boundary of $A$ in higher dimensions. Eq.~\eqref{ub_1} constrains the asymptotic behaviour for large $\omega$, and is consistent with the numerical result \eqref{eq:f_form}. 

\een
We discuss the implications of~\eqref{eq:f_form} for $P(t)$ in Sec.~\ref{sec:pt}, where we will show that it leads to exponential decay, which is consistent with explicit numerical computation of $P(t)$. The direct numerical supports for~\eqref{eq:f_universal} and~\eqref{eq:f_form} are given in Sec.~\ref{subsec:numerical_f_form}. Heuristic analytic arguments for the two regimes of~\eqref{eq:f_form} are given in Sec.~\ref{subsec:derivation_f_form}, and the proof for~\eqref{ub_1} is given in Sec.~\ref{subsec:loc}.

A proposal similar to~\eqref{eq:f_universal} was previously made by Murthy and Sredniki in~\cite{murthy_2019_structure}. Here we point out some differences with their approach and results.  \cite{murthy_2019_structure} used a statement similar to~\eqref{eq:f_universal} as a starting point, %, approximating the quantity $e^{S(E_{A,i}+E_{B,j})}|c^a_{ij}|^2$ with a smooth function
%$f_{\rm MS}(E_a-\bar E_{ij})$
 and then argued  for certain properties of the EDF using the eigenstate thermalization hypothesis. %\footnote{\HL{Perhaps we should also emphasize the difference in using $\bar E_{ij}$?} \SV{[This is mostly a notational difference; they shift the Hamiltonian by a constant so that the expectation value of $H_{AB}$ is zero. I changed the notation to look more like ours.]}} 
In this work, we numerically test the validity of using a smooth function on the RHS of \eqref{eq:f_universal}, and find it to be a non-trivial property of chaotic systems which distinguishes them from integrable systems.  
\cite{murthy_2019_structure} argued that $f(\omega)$ is a Gaussian in two or higher spatial dimensions. Here, we find explicitly for one spatial dimension that $f(\omega)$ is given by a Lorentzian at small $\omega$ and an exponential at large $\omega$, and argue that we should also see  the Lorentzian regime in higher dimensions. While there is no discrepancy in one spatial dimension, where the arguments of \cite{murthy_2019_structure} do not apply, there does seem to be a discrepancy in higher dimensions. We examine the argument of \cite{murthy_2019_structure} in more detail in Appendix \ref{app:MSlimitation}, and discuss some potential  limitations of it.  %\HL{
%As we discuss in Sec.~\ref{subsec:derivation_f_form} and Appendix \ref{app:MSlimitation}, we expect that the form \eqref{eq:f_form} continues to hold for extensive regions $A$ and $B$ in higher dimensions for certain ranges of $\omega$.%, without leading to inconsistency with the arguments of \cite{murthy_2019_structure}.

%\HL{The number of energy eigenstates of $H$ in an energy interval $[E, E+dE]$ is denoted as $dE e^{S(E)}$,
%Let $S(E)$ be the thermodynamic entropy associated with the Hamiltonian $H$, so that the number of energy eigenstates of $H$ in an energy interval $[E, E+dE]$ is $dE e^{S(E)}$. 
 %$E_a$ be \HL{the energy eigenvalue of $\ket{a}$}, and
%and $\bar E_{ij} = \bra{i}\bra{j} H \ket{i} \ket{j}$.}

We close this subsection by mentioning some kinematic constraints on $f$. From  
 the normalization of $\ket{i j}$,  
\be 
1 = \sum_a |c^a_{ij}|^2 = \int d \omega \, e^{{\frac{\omega}{2}} \beta (\bar E_{ij})} \,  f\le(\omega\ri)  ,
\label{norm1}
\ee
where we have approximated the sum over $a$ by $\int dE_a e^{S (E_a)}$, and used that $f$ is supported 
for $\omega \equiv E_a-\bar E_{ij} \sim O(L^0)$ 
and $S(E_a)- S\left(\frac{E_a + \bar E_{ij}}{2}\right) \approx {\frac{\omega}{2}} \beta (\bar E_{ij}) $, with 
$\beta (\bar E_{ij})  = \frac{dS}{dE} |_{\bar E_{ij}} \approx \beta (\frac{E_a + \bar E_{ij}}{2}) + O(1/L)$. 
Assuming that the product of the density of states of $H_A$ and $H_B$ is approximately equal to the density of states of $H$ (we justify this approximation in Appendix \ref{app:therm} below), the normalization of $\ket{a}$ leads to a similar constraint, 
\be 
1 = \sum_{ij}|c^a_{ij}|^2 = \int d\omega \, e^{-\frac{\omega}{2} \beta(E_a)}f(\omega) \label{norm2} \, . 
\ee
From~\eqref{mean}--\eqref{var}
\bega
0 = 
 \int d E_a \,  e^{S(E_a)} (E_a - \bar E_{ij})  |c^a_{ij}|^2 = \int d \omega \, e^{{\omega \ov 2} \beta (\bar E_{ij}) } \,  f\le( \omega \ri) \omega ,\label{mean2}
\\
\sigma_{E,ij}^2 = \int d E_a \,  e^{S(E_a)} (E_a - \bar E_{ij})^2 |c^a_{ij}|^2 = \int d \omega \, e^{{\omega \ov 2} \beta (\bar E_{ij}) } \,  f\le( \omega \ri) \omega^2 \ .
 \label{var2}
\end{gather}
Equation~\eqref{ub_1} ensures that the integrals~\eqref{norm1}--\eqref{var2} are convergent. All equations can be satisfied for arbitrary $i, j$ due to the weak dependence of $\beta(\bar E_{ij})$ and $f$ on $\bar E_{ij}$.

\subsection{Evolution of \texorpdfstring{$P(t)$}{}}
\label{sec:pt}

In this subsection, we discuss the implications of~\eqref{eq:f_universal}--\eqref{eq:f_form} for the return probability~\eqref{ptdef} and compare them with explicit numerical computations. 

From~\eqref{ptdef} and~\eqref{eq:f_universal} we have
\be 
P(t) \approx |\theta(t)|^2, \quad \quad \theta(t) %= \bra{i}\bra{j} e^{-i Ht } \ket{i} \ket{j} e^{i\bar E_{ij}t}
= \int_{-\infty}^{\infty} d \omega \, e^{\beta\omega/2}  f( \omega ) \, e^{-i \omega t}  \, , \label{ptf}
\ee
where $\beta=S'(\bar E_{ij})$. At early times, expanding 
$e^{-i\omega t}$ in the integrand of $\th (t)$ in Taylor series, we find 
\begin{align}
        \theta(t) &=  \int\, d\omega \, e^{\beta/2 \omega} f (\omega) \sum_n \frac{(-i \omega t)^n}{n!} = 1 - \ha \sigma^2_{E,ij} \,  t^2 + O(t^3)
%        &\approx 1 - i t \int d \omega e^{\beta/2 \omega} f( \omega) \,\omega - \frac{t^2}{2} \int d \omega e^{\beta/2 \omega} f( \omega) \,\omega^2 + \mathcal{O}(t^3) ,
   \label{theta} \end{align}
where we have used~\eqref{norm1}--\eqref{var2}. That the small $t$-expansion~\eqref{theta} is well defined is warranted by~\eqref{ub_1}. Hence to quadratic order in $t$,
\begin{equation}
    P(t) \approx 1 - \sigma^2_{E,ij} \,  t^2  + O(t^3) \, .  %, \quad \Gamma^2_{ij} = \int d \omega e^{\beta/2 \omega} f(E_{ij}+\frac{\omega}{2}, \omega) \,\omega^2 -\left( \int d \omega e^{\beta/2 \omega} f(E_{ij}+\frac{\omega}{2}, \omega) \,\omega \right)^2 \,.
\end{equation}
 We verify this quadratic decay of $P(t)$ %\HL{
 at early times in Fig. \ref{subfig:EarlyPt}. %} \HL{Can we also verify the coefficient $\Ga_{ij}^2$?} \ZS{
 The numerical coefficient of the quadratic fit is very close to $\sigma_{E,ij}^2 = 1$. To understand this value, we recall that $\sigma_{E,ij}^2 = \bra{ij} H_{AB}^2 \ket{ij} - \bra{ij}H_{AB}\ket{ij}^2$. Since $H_{AB}$ is a product of two Pauli-$Z$ operators across the cut, the first term is $\bra{ij} H_{AB}^2 \ket{ij} = 1$. As for the second term, $\lim_{L \rightarrow \infty} \bra{ij} H_{AB} \ket{ij} = 0$ whenever $\ket{i}, \ket{j}$ are at effective temperatures above the critical temperature for the $Z_2$ symmetry-breaking phase transition. These facts together imply that for states sufficiently close to the middle of the spectrum, $\sigma_{E,ij}^2 \approx \bra{ij} H_{AB}^2 \ket{ij} = 1$ at finite but large $L$.%}

\begin{figure}[!h] 
\centering
    \begin{subfigure}{\textwidth}
        \centering
        \includegraphics[width = \textwidth]{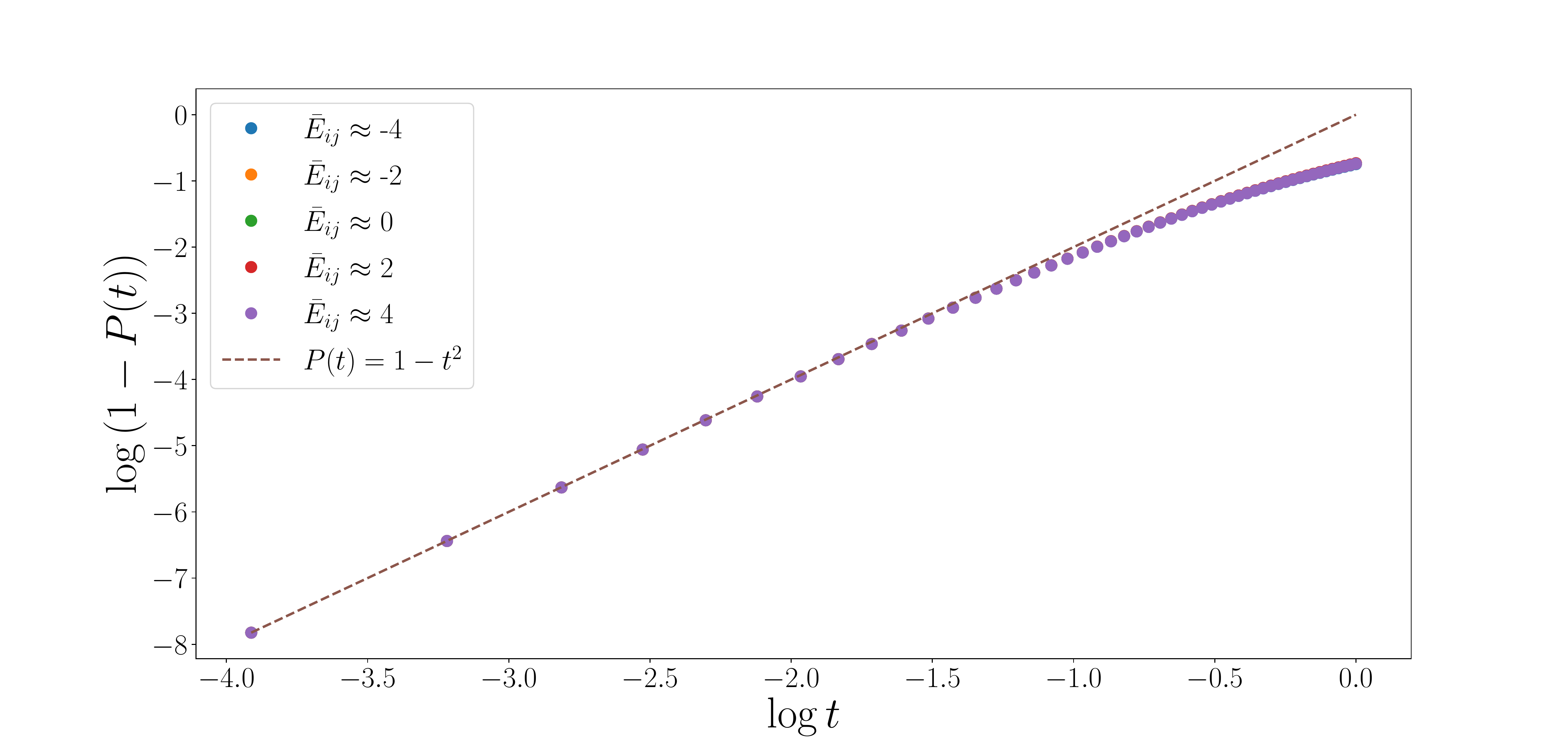}
        \caption{}
        \label{subfig:EarlyPt}
    \end{subfigure}
    \begin{subfigure}{0.49\textwidth}
        \centering
        \includegraphics[width = \textwidth]{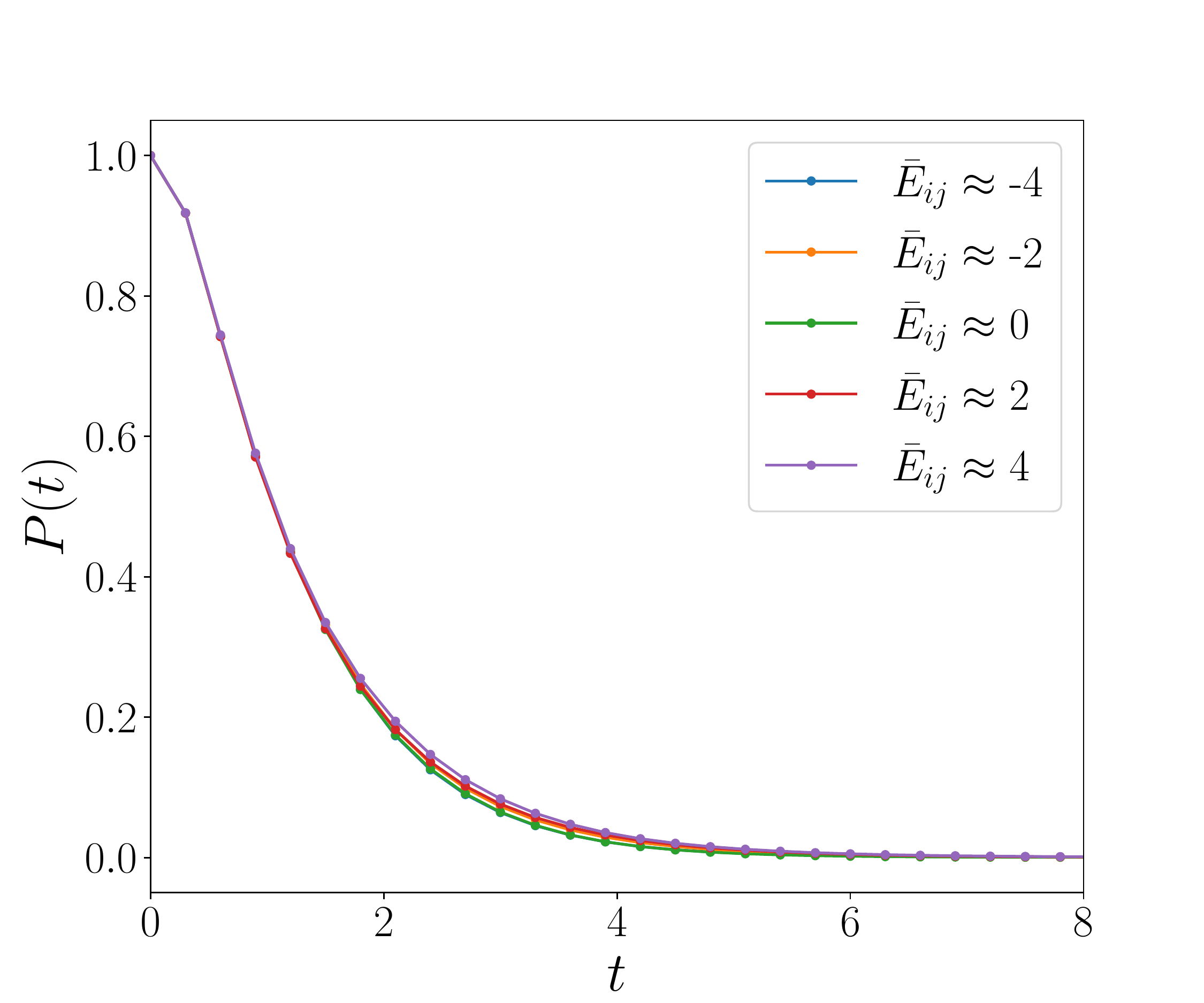}
        \caption{}
        \label{subfig:Pt_L20}
    \end{subfigure}
    \begin{subfigure}{0.49\textwidth}
        \centering
        \includegraphics[width = \textwidth]{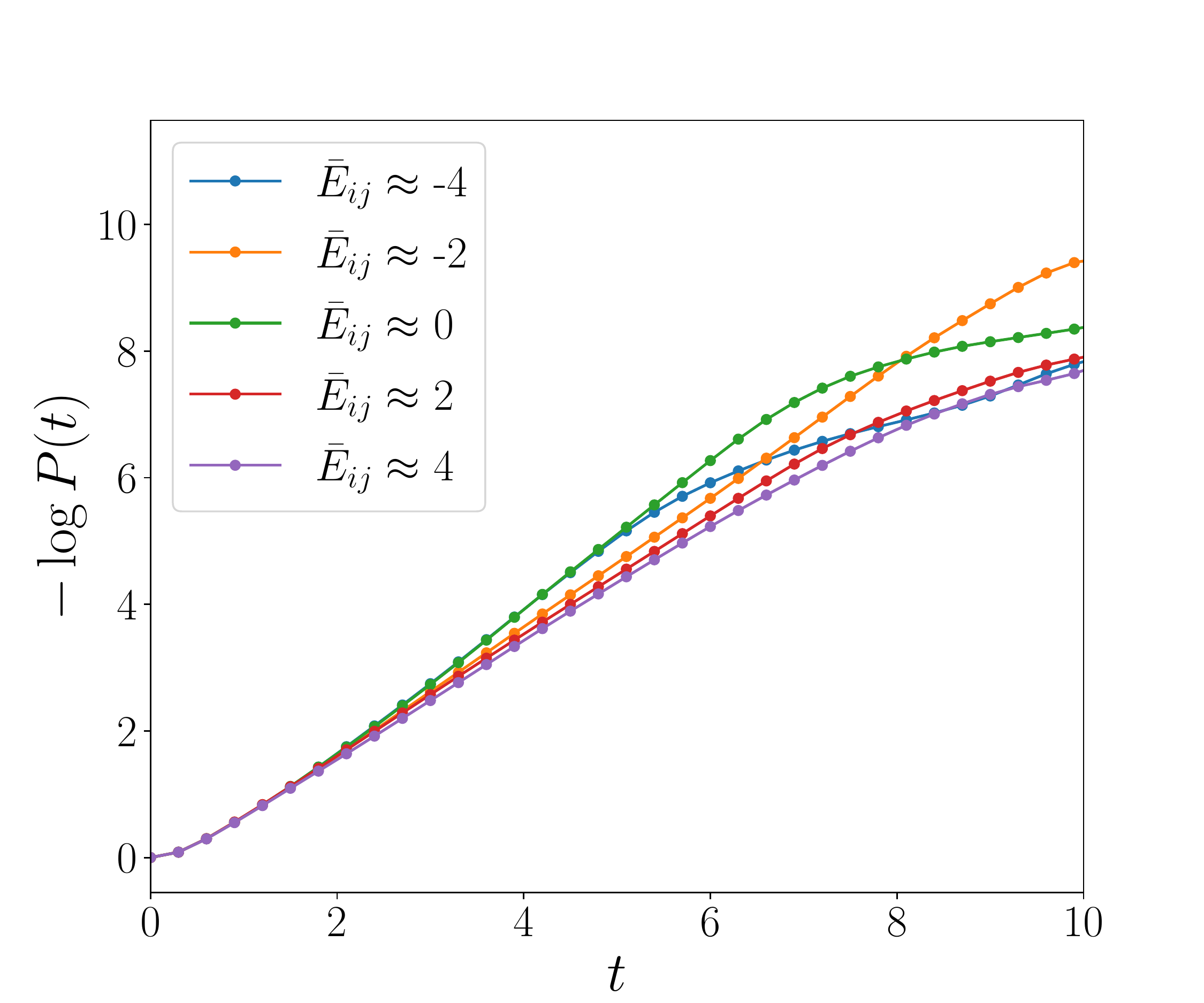}
        \caption{}
        \label{subfig:LogPt_L20}
    \end{subfigure}
    \caption{Here we plot the return probability $P(t)$ for various factorized states with different values of $\bar E_{ij}$. The top/bottom panel shows the early/intermediate time behavior of $P(t)$ at $L = 22$/$L=20$. At early times, curves at different energies perfectly align and fit very well to the expected quadratic form. The coefficient of the quadratic term $\Gamma_{ij}^2 \approx 1$ for $\bar E_{ij} \in [-4,4]$, consistent with the analytic prediction. Following this quadratic regime, $P(t)$ decays exponentially for an extended range of time, until finite size effects begin to dominate at late times. One can numerically estimate the decay rate by fitting $-\log P(t)$ to a linear function for intermediate times (i.e. after the quadratic regime and before saturation). Using data from $\bar E_{ij} \in [-6,6]$, we estimate the decay rate to be $0.98 \pm 0.13$, which is consistent with the Lorentzian prediction $2 \Gamma \approx 0.94$. %\HL{rearrange the figures. label the figures? Check $\Ga_{ij}^2$.}
    }
    \label{fig:Loschmidt}
\end{figure}

%\ZS{I wrote an updated version of the argument below.} 
For $t \gg \frac{1}{\sigma}$, the quadratic scaling breaks down and transitions to an exponential decay. To understand this behavior, we invoke the functional form of the EDF in \eqref{eq:f_form}. For simplicity we consider initial states with $\beta=0$ and make the following decomposition
\begin{equation}
    f(\omega) = f_L(\omega) + f_{\rm res}(\omega) \,, \quad f_L(\omega) = \frac{C_0}{\omega^2+\Gamma^2} \,, \quad |f_{\rm res}(\omega)| \leq |f_L(\omega)|  \,.
\end{equation} 
Upon taking a Fourier transform, 
\begin{align}
    P(t) %&= \left|\int_{-\infty}^{\infty} e^{-i t \omega} \left[\frac{C_0}{\omega^2 + \Delta^2} + f_{\rm res}(\omega)\right] \right|^2 \\
    &= \frac{C_0^2 \pi^2}{\Gamma^2} \, e^{-2\Gamma t} \, \,   + 2 \frac{C_0 \pi}{\Gamma} e^{-\Gamma t}f_{\rm res}(t) + \, \,  f_{\rm res}(t)^2 \,,
\end{align}
where $f_{\rm res}(t) = \int d \omega e^{-i\omega t} f_{\rm res}(\omega)$. Generally, there are two possible scenarios for the large $t$ behavior of $P(t)$: (1) If the singularities of $f_{\rm res}(\omega)$ satisfy $|\Im \omega_{\rm sing}| > \Gamma$, then regardless of the precise functional form of $f_{\rm res}(\omega)$, the single exponential decay $e^{-2\Gamma t}$ of $P(t)$ dominates at late times. (2) If $f_{\rm res}(\omega)$ has singularities closer to the real axis, or if $f_{\rm res}(\omega)$ is not analytic at all, the asymptotic late-time decay will be controlled by $P(t) \approx f_{\rm res}(t)^2$. In Fig.~\ref{subfig:LogPt_L20}, we see that the numerical exponential decay rate of $P(t)$ is close to $2\Gamma$, in favor of scenario (1).

In a finite-size system, the decay of $P(t)$ must eventually be cut off at $t = \mathcal{O}(L)$ as $P(t)$ saturates to an $\sO(e^{-S})$ value: 
\begin{align}  
\lim_{t \rightarrow \infty}P(t) \approx 
& \lim_{T \rightarrow \infty}\frac{1}{T}\int_0^T dt\, \sum_{a,b} |c^a_{ij}|^2 |c^b_{ij}|^2 e^{i t (E_a - E_b) } \nonumber \\ &  \approx \sum_a |c^a_{ij}|^4 \approx e^{-S(\bar E_{ij})} f^2 (0) \, . 
\end{align} 
Thus we expect the exponential decay to hold in the range 
\be 
{1 \ov \sigma} \ll t \ll {S (\bar E_{ij}) \ov 2 \Gamma}  \ .
\ee
%\HL{This is contrast to a generic initial state $\ket{\psi}$ for which we expect that the return probability should become exponentially small $L$ at time of $O(L^0)$. }
%\ZS{Not sure I see why this is true?}
%\HL{Is the latter scale the Ehrenfest time?}

Note that $P(t)$ is a special case of a quantity called the Loschmidt echo 
\be 
\sL%_{\epsilon}
(t) = |\braket{\psi| e^{-i H't} e^{i H t}| \psi}|^2, \quad H' %_{\epsilon}
= H + H_1 %, \quad || H_1|| = \sO(1)  
\label{le}
\ee
%where $||O||$ refers to the operator norm and $\epsilon$ is small. 
%$\sL_{\epsilon}(t)$ 
which characterizes the sensitivity of the system to small perturbations of the Hamiltonian. This quantity has been extensively studied in the context of  quantum chaotic few-body systems \cite{Peres_1984_stability_quantummotion,Prosenreview_2006_Loschmidtecho}, where $||H_1||$ is taken to be sufficiently small that we can apply perturbation theory, and \eqref{le} is found to decay exponentially with time. It has also been studied for quantum many-body systems, where on taking $H_1$ to act on an $\sO(1)$ number of sites, and $\ket{\psi}$ to be a product state, \eqref{le} decays exponentially or faster with time~\cite{vardhan_2017_echo_disorder}. $P(t)$ is special case of \eqref{le} for $\ket{\psi}=\ket{ij}$ and  $H_1= H_{AB}$.

\subsection{Numerical results in spin chain models}\label{subsec:numerical_f_form}

In this section, we numerically test the approximation \eqref{eq:f_universal} for the family of Hamiltonians in \eqref{ham}, with $g=-1.05$, and $h$ varied from 0 to higher values. By the standard measures of chaos through spectral statistics, the $h=0$ case is integrable, while the $h=0.5$ case, and most generic values of $h$, are chaotic \cite{kimhuse_2013_entballistic,atas2013_ratiostats}.~\footnote{We use open boundary conditions in all cases.} 

Fixing some eigenstate $\ket{a}$ of $H$, we evaluate the LHS of \eqref{eq:f_universal} by averaging the quantity $e^{S\le(\frac{E_a + \bar{E}_{ij}}{2}\ri)} |c^a_{ij}|^2$ over $\ket{ij}$ in  a small energy interval $\bar E_{ij}' \in [\overline{E}_{ij} - \sigma/2, \overline{E}_{ij} + \sigma/2]$ with $\sigma = 0.2$. %\SV{$\sigma = $}. 
We then plot this quantity as a function of the energy difference  $\overline{E}_{ij}- E_a$. The results for the cases $h=0.5$ and $h=0$  are shown in Fig. \ref{fig:self_av}, for a few different choices of $E_a$ within a much smaller energy range% \SV{in the range }
. In the chaotic system, the resulting plot turns out to be approximately the same for the different nearby values of $E_a$. In the integrable system, there is much greater fluctuation between plots for nearby energies.

\begin{figure}[!h]
    \centering
    \includegraphics[width=0.45\textwidth]{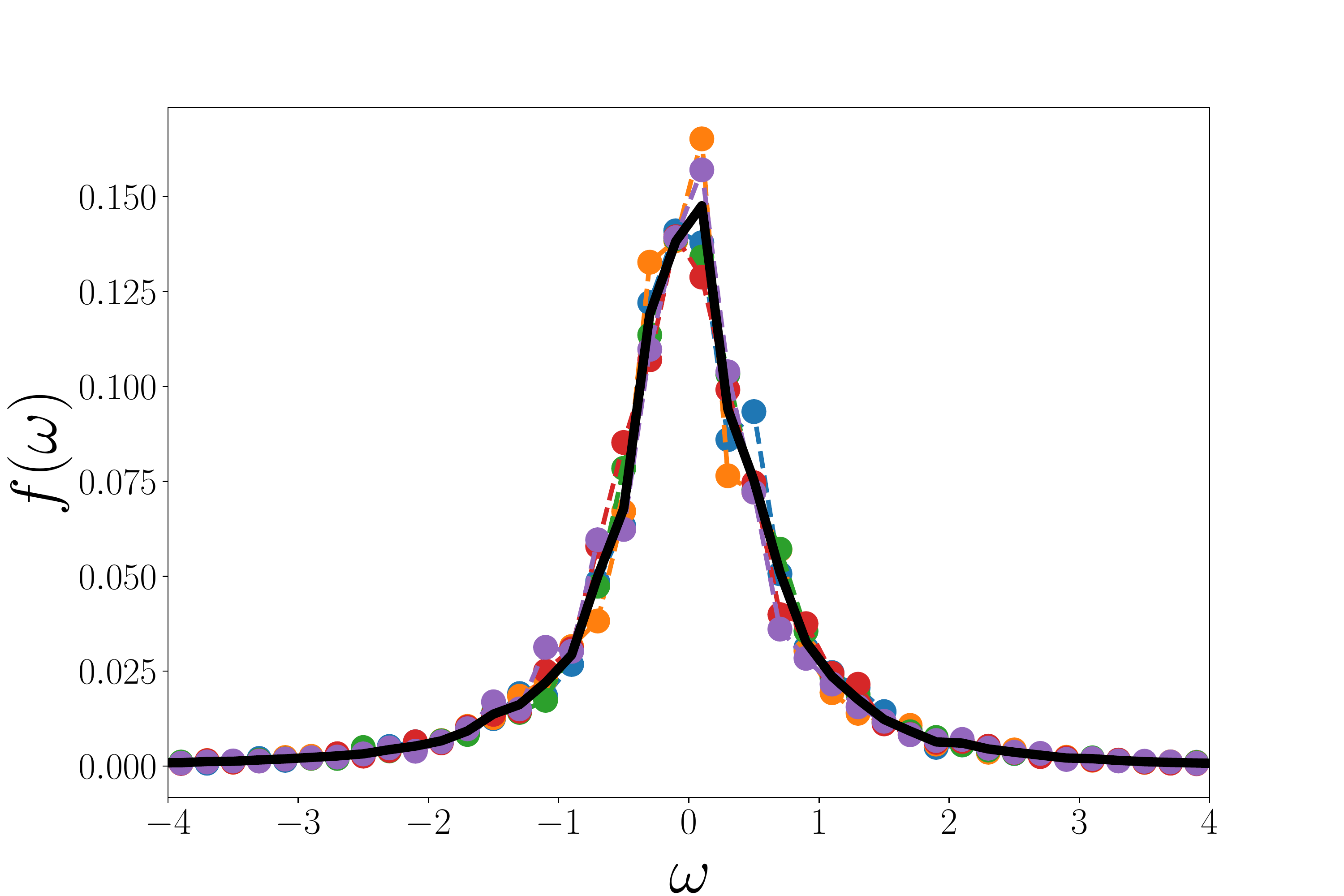}
\includegraphics[width=0.45\textwidth]{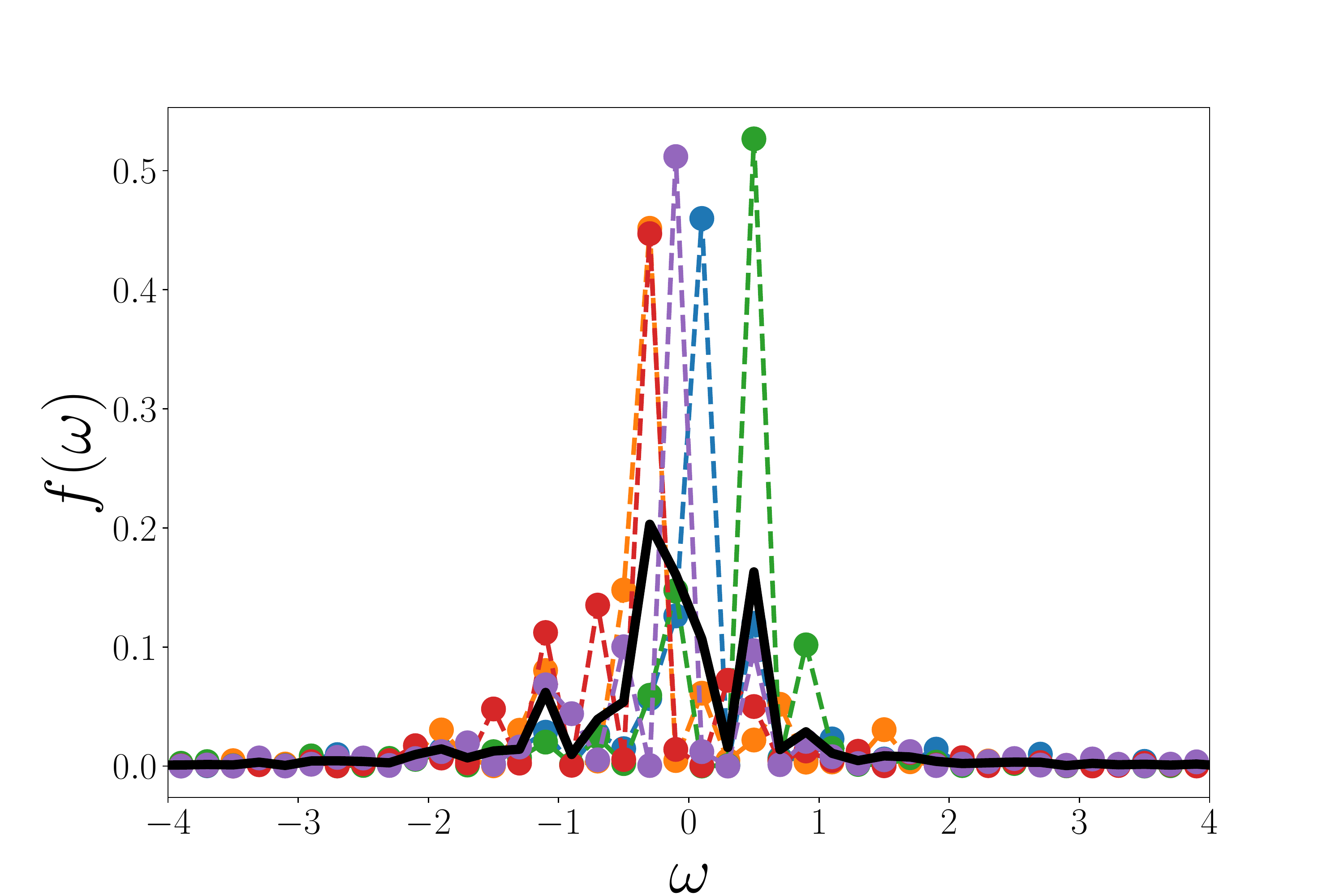} 
    \caption{We scatter plot the LHS of \eqref{eq:f_universal} evaluated with 5 different choices of $E_a$ near $E_a = 0$ for a chaotic (left) and integrable (right) spin chain at system size $L = 14$. The black curve is the average over 5 states for each model. The sample-to-sample fluctuations for different choices of $E_a$ %\SV{for different choices of $E_a$} 
    are much larger in the integrable case. %\SV{Fix $E_a$ instead of $\bar E_{ij}$, specify system size, mention ranges of both $E_a$ and $\bar E_{ij}$ used.}
    }
    \label{fig:self_av}
\end{figure}

\begin{figure}[!h]
    \centering
    \includegraphics[width = \textwidth]{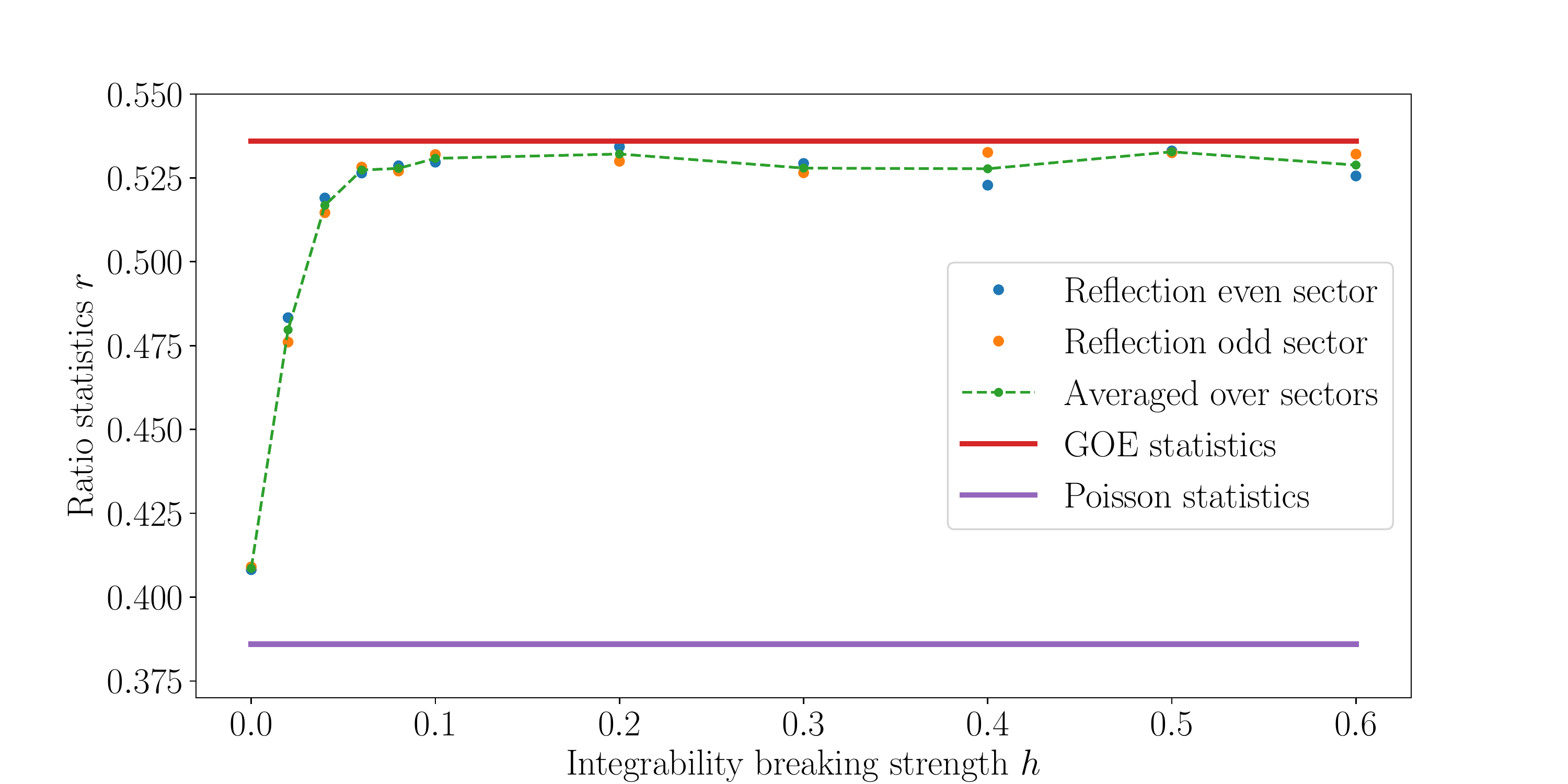}
    \caption{We plot the ratio statistics $r$ of the energy eigenvalues, introduced in  \cite{huse_ogan_fermions},  for the model \eqref{ham} with fixed $g = -1.05$ and $h$ going from 0 to 0.6 at system size $L = 14$. The quantity $r$ is the average over all energy levels $n$ of $r_n =\text{min}(\delta_n, \delta_{n+1})/\text{max}(\delta_n, \delta_{n+1})$, where $\delta_n= E_{n+1}-E_{n}$. 
    The red/purple lines show the characteristic values of this average in the GOE/Poisson distributions, which are expected to match the values in chaotic/integrable systems.}
    \label{fig:ratio_stats_varyh}
\end{figure}

\begin{figure}[!h]
    \centering
    \includegraphics[width=0.45\textwidth]{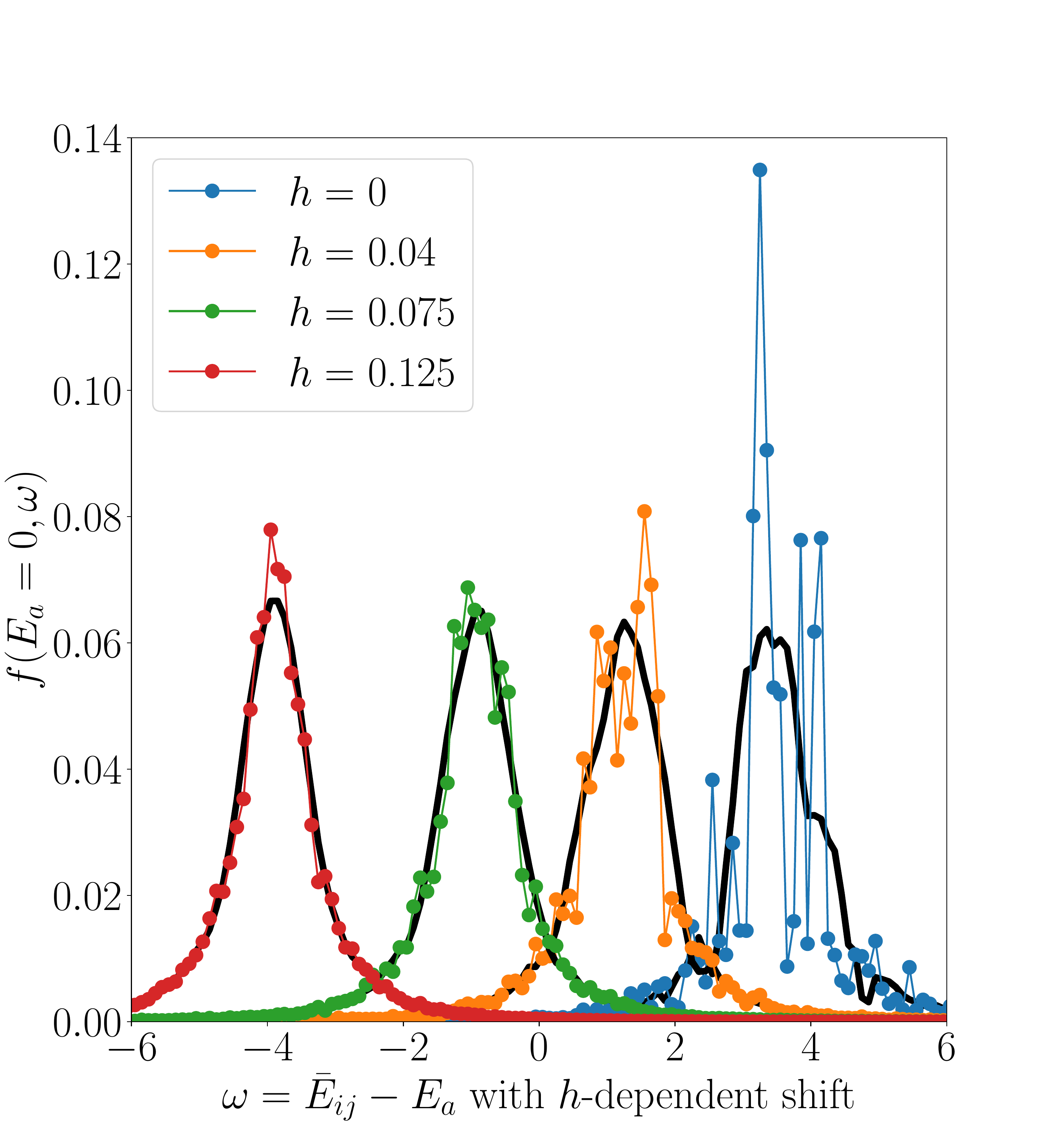}
    \includegraphics[width=0.45\textwidth]{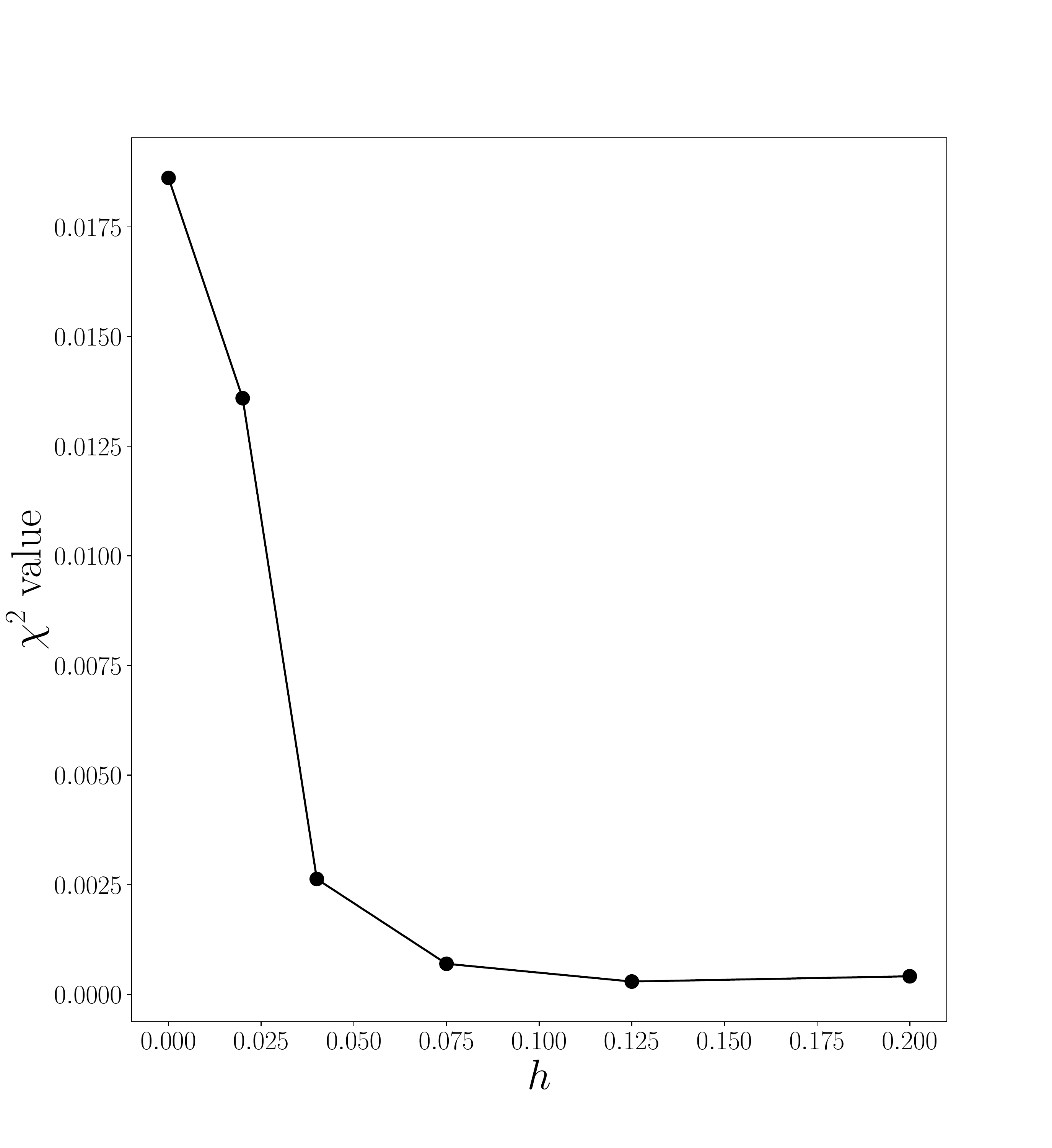}
    \caption{Left: Comparison between the quantity on the LHS of \eqref{eq:f_universal} and the Savitsky-Golay smooth approximation to it (black curves) for a few different values of $h$ close to the integrable point $h = 0$. Right: The quantity $\chi^2(h)$ in \eqref{chi} as a function of $h$ which measures the proximity between the smooth approximation and the raw data. Both plots are for system size $L=14$.}
    \label{fig:smooth}
\end{figure}

A related immediate observation from Fig. \ref{fig:self_av} is that after averaging over a few states with nearby values of $E_a$, the functional form of \eqref{eq:f_universal} is much smoother for chaotic systems than for integrable systems. It is expected that for finite system size, the model \eqref{ham} with $g=-1.05$ has a crossover from integrable to chaotic behaviour at some small finite value of $h$.~\footnote{In the thermodynamic limit, this transition point is expected to approach $h=0$.}  This suggests that there should be a rapid increase in how well the quantity on the LHS of \eqref{eq:f_universal} is approximated by a smooth function on increasing $h$ from 0. We confirm this expectation in Fig. \ref{fig:smooth}. To quantify the smoothness, we consider the averaged curves $f_h(\omega)$ for $E_a=0$ (as in the black curves of Fig. \ref{fig:self_av}) for different values of $h$, and smooth the averaged curves using a Savitzky-Golay filter\footnote{The Savitzky-Golay performs local least-squares fitting to degree-$k$ polynomials in sliding windows of size $n$. We chose $k = 3, n = 5$ for the fitting but the qualitative conclusions we draw are independent of the precise choice.}. %evaluate the \SV{Savitzky-Golay smoothed function of fixed order [we should give a citation for this and mention wh:at outset"is meant by ``fixed order'']} $f_{h, \, \rm smooth}(\omega)$ for each case. 
We then measure the difference between the actual curve $f_h(\omega)$ and its smoothed version $f_{h, \, \rm smooth}(\omega)$ using the $\chi^2$ statistic,%\SV{[is this wh"at outset"is calculated?]}, 
\be \label{chi}
\chi^2(h) = \sum_{i=1}^{n} \frac{\le(f_h(\omega_i)- f_{h, {\rm smooth}}(\omega_i)\ri)^2}{f_{h, {\rm smooth}}(\omega_i)}
\ee
where $n$ is the number of choices of $\omega_i$ over which we sample the function. As shown in Fig. \ref{fig:smooth}, $\chi^2$ decreases rapidly with $h$, indicating the improvement of the smooth approximation. 
From this measure, for $L = 14$, the onset of chaos occurs around $h = 0.1$. This value roughly coincides with the onset of random matrix energy level statistics in the same family of models, as seen in Fig.~\ref{fig:ratio_stats_varyh}%\SV{[this figure needs to be added; this diagnostic has probably been discussed in the literature before?]}
. The onset of chaos for finite system size in terms of properties of eigenstates has previously been studied using other measures in for instance \cite{pandey_2020_adiabatic,sugiura_2021_adiabatic}.

Let us now discuss the universal properties of the smooth eigenstate distribution  function $f(\omega)$ in the chaotic regime, using the case with $h=0.5$ as the representative example. We first consider the dependence of $f(\omega)$ on the system size, and verify that its functional form is independent of $L$, as shown in Fig. \ref{fig:Lindependence_f}. 

\begin{figure}[!h]
    \centering
    \includegraphics[width=0.45\textwidth]{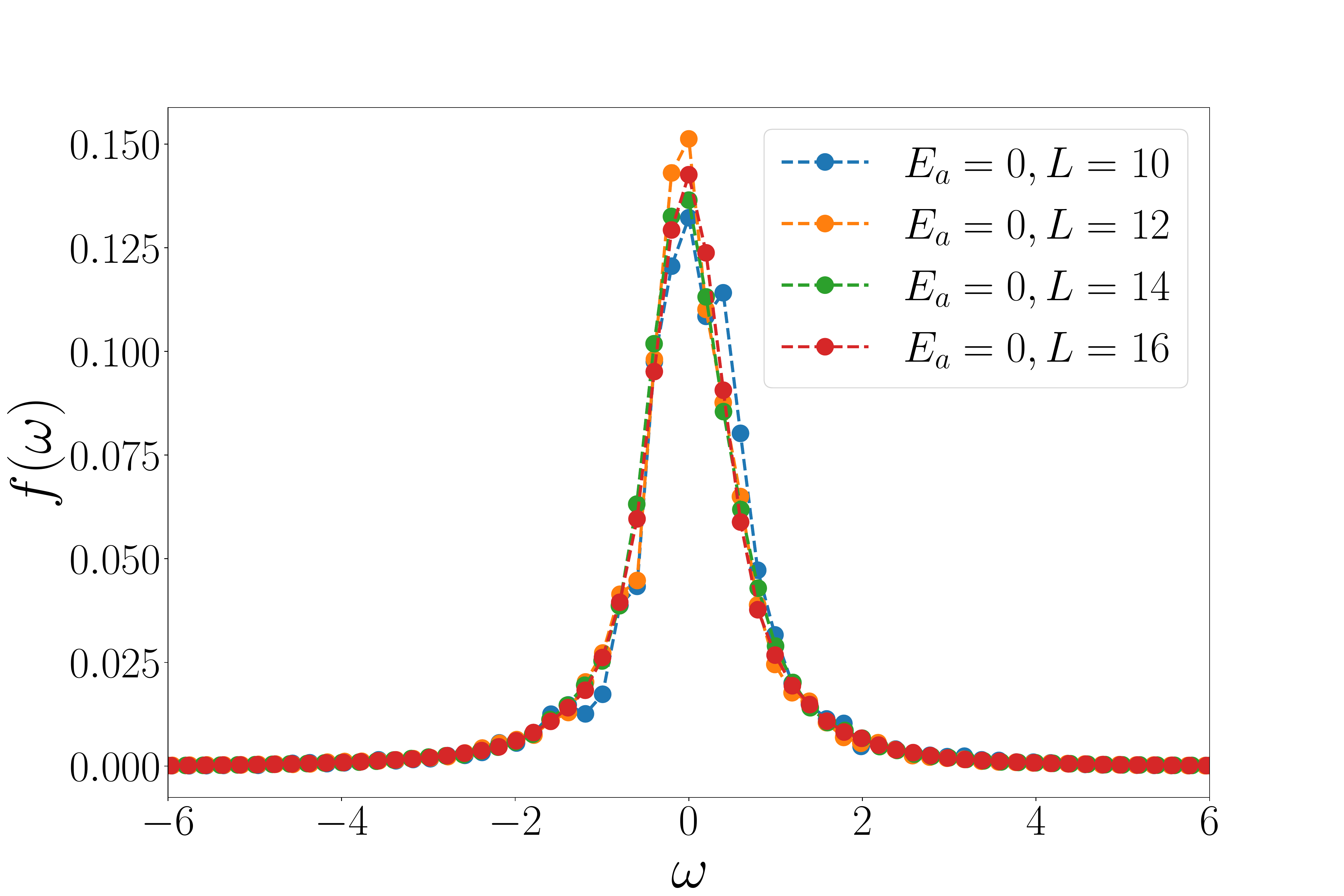}\includegraphics[width=0.45\textwidth]{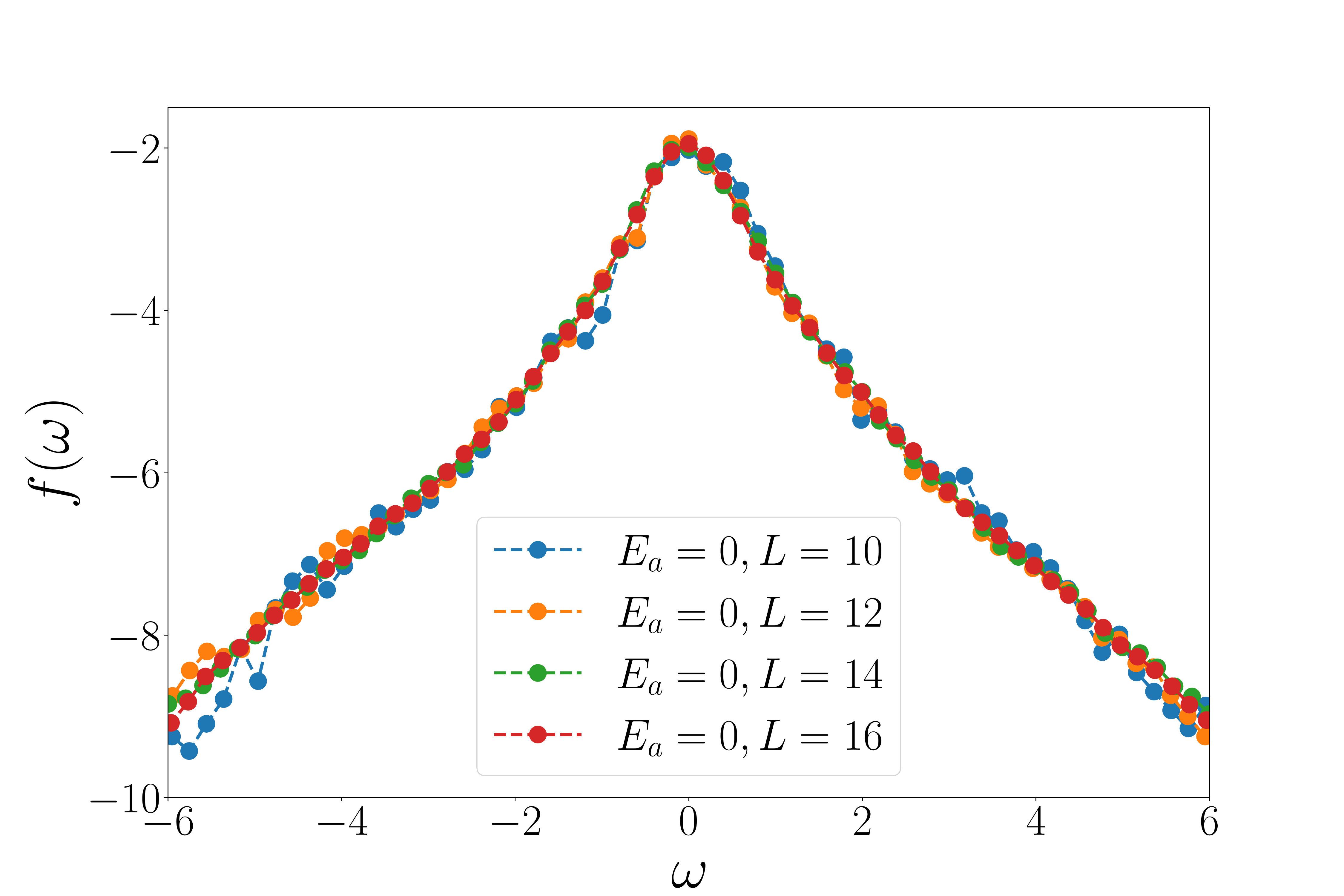}
    \caption{We plot the EDF evaluated at increasing system sizes $L$, with fixed energy $E_a \approx 0$. On the left panel (where the $y$ axis is linear-scale), the difference between curves at different values of $L$ cannot be resolved when $\omega$ is large. To emphasize that even the large $\omega$ regime is independent of $L$, we also show the log-linear plot in the right panel.}
    \label{fig:Lindependence_f}
\end{figure}

Next, we verify that EDF has a weak dependence on the average energy $(E_a + \bar E_{ij})/2$. In the left panel of Fig.~\ref{fig:energy_diff}, we carry out the same averaging procedure used to evaluate the EDF in the black curve in Fig. \ref{fig:self_av}, now considering different values of $E_a$, and making a small shift on the $\omega$ axis so that the peaks of the different cases coincide. We find that the  curves approximately coincide, indicating 
that $f(E_a -\frac{\omega}{2} , \omega)$ are close for different choices of $E_a$. In the right panel of Fig. \ref{fig:energy_diff}, we contrast this with the behaviour of the quantity $e^{S\le(\bar E_{ij} \ri)} |c^a_{ij}|^2$ considered in \cite{murthy_2019_structure}, 
which has a stronger dependence on the average energy. 

\begin{figure}[!h] 
\includegraphics[width=0.5\textwidth]{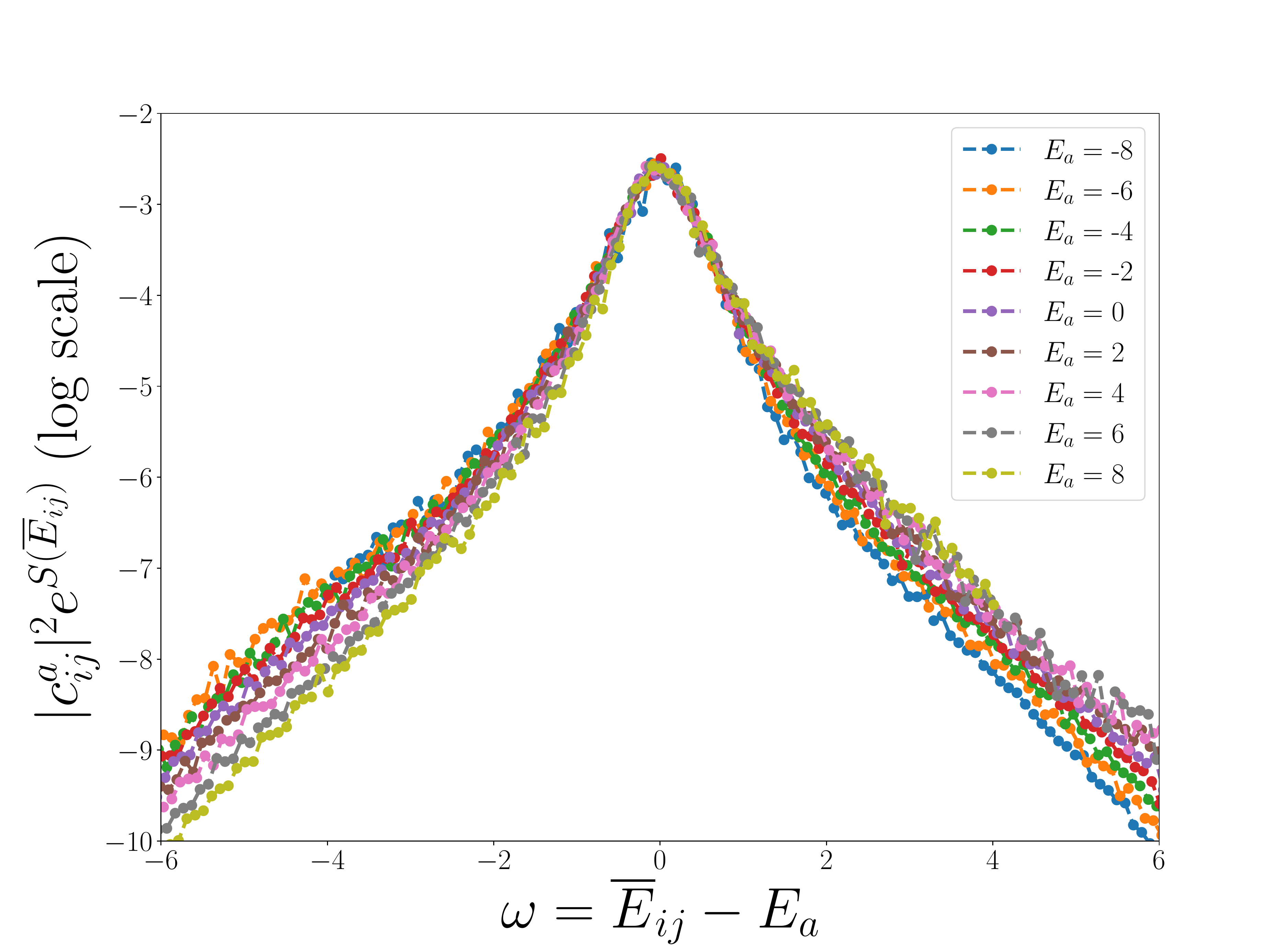} \includegraphics[width=0.5\textwidth]{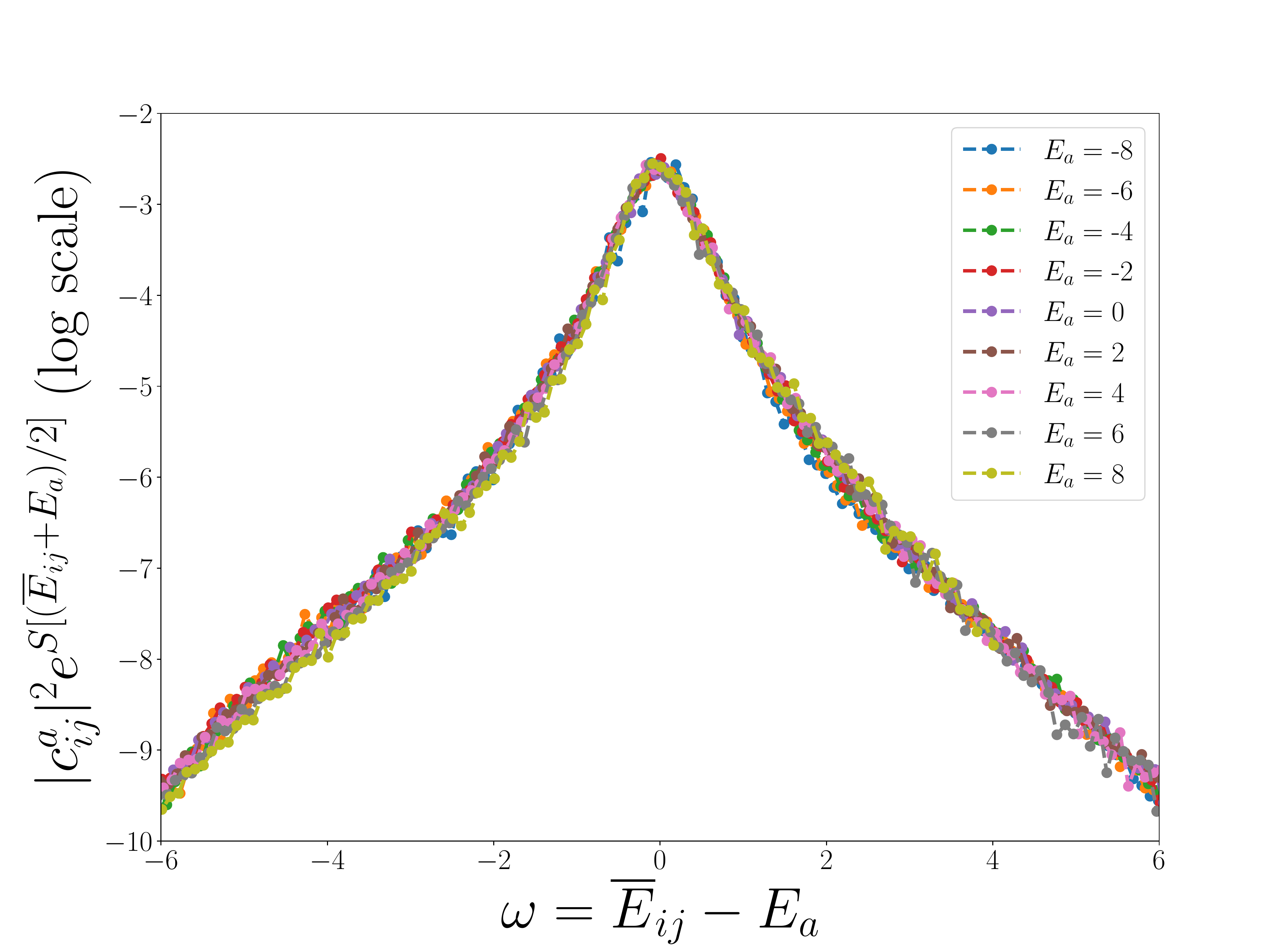}
\caption{On the left/right panel, we plot the quantity $e^{S(\bar E_{ij})} |c^a_{ij}|^2$ and the quantity $e^{S(\frac{\bar E_{ij} + E_a}{2})} |c^a_{ij}|^2$ as a function of $\omega = \bar E_{ij} - E_a$ for 9 different values of $E_a$ at system size $L = 16$. In both panels, for different values of $E_a$, we applied a small horizontal shift so that the peaks are aligned. It is clear that at large $\omega$, the adjustment of entropy factor in the right panel corrects for the systematic dependence on $E_a$ observed in the left panel. This provides further justification for the definition of the eigenstate distribution function in \eqref{eq:f_universal}. %\SV{[Fix $a$ instead of $ij$, replace with $L=16$, plot the $e^{S(\bar E_{ij})}$ quantity in right panel.]}
}
\label{fig:energy_diff}
\end{figure} 

\begin{figure}[!h]
    \centering
    \includegraphics[width = 0.8\textwidth]{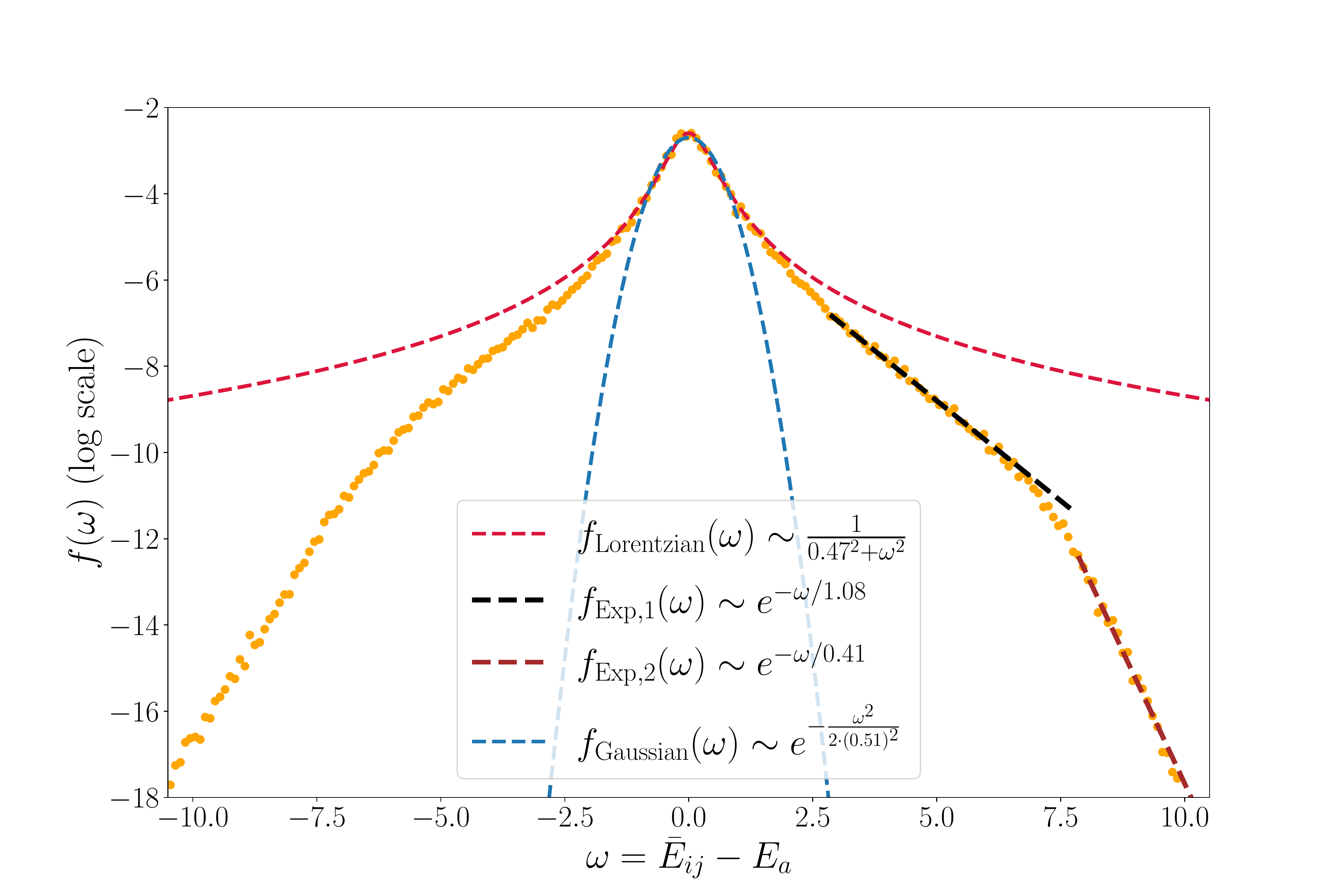}
    \caption{%\SV{This figure needs to be changed} 
    We plot the functional form of the EDF for $L = 16$, averaged over 5 states with energy closest to $E_a = 0$. At small $\omega$, a Lorentzian fit works well, and in particular is more accurate than a Gaussian fit. At large $\omega$, we observe two exponential regimes with different decay rates.  %\SV{[remove second exponential regime and zoom in on smaller $\omega$]}
    }
    \label{fig:f_form}
\end{figure}

%\SV{I don't think we need to show the rescaling in Fig. \ref{fig:E-independence_f-function} or the version without the shift in Fig. \ref{fig:peak_shift}. Just the left part of Fig. \ref{fig:peak_shift} should be enough, maybe for more energy values. Also, we should try to add the $\delta_{ij}$ instead of manually shifting the peaks and see what we get. Other things to check: what is the system size? was this plot made by fixing $a$ and varying $i,j$? (same question for Fig. \ref{fig:Lindependence_f}).} 

%\begin{figure}[!h]
 %   \centering
  %  \includegraphics[width = 0.5\textwidth]{}
    %\caption{%On the left we have the bare $f(E_a, E_{A,i}, E_{\bar A,j})$ and on the right the $e^{S(E_a)} f(E_a, E_{A,i}, E_{\bar A,j})$ function normalized by the density of states. We can clearly a scaling collapse, suggesting that $e^{S(E_a)} f(E_i+E_j-E_a)$ indeed becomes a universal function of $|E_i + E_j - E_a|$.
    %We plot the f-function averaged over 9 states near 5 different values of $\bar E_a$. Curves corresponding to different values of $\bar E_a$ align very closely with each other, with fluctuations that are on the same scale as sample-to-sample variations for a fixed $\bar E_a$. This is strong evidence that $f(\bar E_a, E_i + E_j)$ is in fact independent of $\bar E_a$}
   % \label{fig:E-independence_f-function}
%\end{figure}

Finally, we study the universal  functional form of $f$. By fitting the function in different regimes as shown in Fig.~\ref{fig:f_form}, we obtain the piecewise form \eqref{eq:f_form}. 
Note that we expect the $\sO(1)$ numbers $\Gamma, \sigma$ and $\lambda$ in \eqref{eq:f_form} to depend on details of the system. We compare the best fit to a Gaussian form, which works well for a much smaller range of $|\omega|$. As mentioned in the previous subsection, this is expected as the arguments of \cite{murthy_2019_structure} do not apply in one spatial dimension.

In the remaining discussion, we  explain aspects of this universal functional form and consider its consequences for dynamical quantities.

%\clearpage 

\subsection{Physical argument for the form of the eigenstate distribution function}\label{subsec:derivation_f_form}

%\HL{This subsection needs to be substantially rewritten.}

%Note that for small $\omega$, $f(\omega)$ agrees with the Breit-Wigner distribution in random matrix theory. 

The accurate fitting of $f(\omega)$ to a Lorentzian at sufficiently  small $\omega$ and an exponential at large $\omega$ suggests the existence of some simple underlying principles. Here, we motivate the Lorentzian regime using random matrix universality and spatial locality as inputs. %\SV{[}Due to a number of uncontrolled approximations, our arguments should be seen as providing a heuristic physical intuition rather than a rigorous derivation.\SV{] 
This argument also allows us to relate the width of the Lorentzian to certain microscopic parameters of the Hamiltonian.

%We again consider a Hamiltonian $H = H_A + H_B + H_{AB}$, where $H_A, H_B$ are supported on the subsystems $A,B$ and $H_{AB}$ contains interactions between the subsystems. Let the factorized state $\ket{ij}$ have energy expectation value $\bar E_{ij} = \bra{ij} H \ket{ij}$ under the full Hamiltonian. 

Let us decompose the Hilbert space of the full system into $\ket{i} \ket{j}$ and its orthogonal complement, $\mathcal{H} = \mathcal{H}_c \oplus \ket{i}\ket{j}$. After diagonalizing the Hamiltonian $H$ within the subspace $\mathcal{H}_c$, we can put $H$ into a block form
\begin{equation}
    H = \begin{pmatrix} \bar E_{ij} & V_{2,ij} & \ldots & V_{D,ij} \\ V_{2,ij}^* & \epsilon_2 & 0 & \ldots \\ \ldots & 0 & \ldots & 0 \\ V_{D,ij}^* & 0 & \ldots & \epsilon_D \end{pmatrix} 
\end{equation}
where $D = \dim \mathcal{H}$, and $V_{m, ij} = \braket{ij| H_{AB}| m}$. In this block form, we can easily write down exact characteristic equations for the eigenvalues $E_a$ and the eigenvectors $\ket{a}$:
\begin{equation}\label{eq:3.4heuristic_1}
    E_a - \bar E_{ij} = \sum_{m > 1} \frac{|V_{m,ij}|^2}{E_a - \epsilon_m} \,,
\end{equation}
\begin{equation}\label{eq:3.4heuristic_2}
    |c^a_{ij}|^2 = |\braket{a|ij}|^2 = \left( 1 + \sum_{m>1} \frac{|V_{m,ij}|^2}{(E_a - \epsilon_m)^2} \right)^{-1} \approx \left(\sum_{m>1} \frac{|V_{m,ij}|^2}{(E_a - \epsilon_m)^2} \right)^{-1} \,,
\end{equation}
where in the last identity, we used the fact that $|c^a_{ij}|^2$ is exponentially small in the system size to drop the factor of $1$ inside the bracket. 

%\SV{[Is there a reason for using upper case $C$ notation here?]} 
 Let us define 
\be  \Gamma_{ij}(\epsilon_m) = \pi \rho(\epsilon_m) |V_{m,ij}|^2
\ee
and rewrite \eqref{eq:3.4heuristic_1} and \eqref{eq:3.4heuristic_2} in terms of $\Gamma_{ij}(\epsilon_m)$,  
\begin{equation}\label{eq:3.4heuristic_3}
    E_a - \bar E_{ij} = \frac{1}{\pi} \sum_m \frac{\Gamma_{ij}(\epsilon_m) \rho(\epsilon_m)^{-1}}{E_a - \epsilon_m}\,, \quad |c^a_{ij}|^{-2} \rho(E_a)^{-1} \approx \frac{1}{\pi \rho(E_a)}\sum_{m>1} \frac{\Gamma_{ij}(\epsilon_m) \rho(\epsilon_m)^{-1}}{(E_a - \epsilon_m)^2}  \,.
\end{equation}
%\SV{
In order to express the LHS of the second equation \eqref{eq:3.4heuristic_3} in terms of the LHS of the first, we need to understand the structure of $\Gamma_{ij}(\epsilon_m)$. Note that  
\begin{align} \label{v_conv}
|V_{m, ij}|^2 = |\sum_{kl\neq ij} \braket{m|kl}   \braket{kl|H_{AB}|ij}|^2 \,.
\end{align}
Since $\{ \ket{m} \}$ are the eigenstates of $H$ projected to $\sH_c$, $|\braket{m|ij}|^2$ as a function of $\epsilon_m- \bar E_{ij}$ should have a similar characteristic width to $f(\omega)$. From \eqref{v_conv}, 
$\Gamma_{ij}$ should then have a larger characteristic width than $f(\omega)$, as it involves a convolution of $\braket{m|kl}$  with the matrix element $\braket{kl|H_{AB}|ij}$, which has its own characteristic width as a function of $\bar E_{kl}- \bar E_{ij}$.  Let us therefore introduce an energy scale $\sigma$ such that $|\Gamma'_{ij}(\epsilon)|\ll1$ for $\epsilon - \bar E_{ij} \lesssim \sigma$.  Our approximations below will be for the range $|E_a - \bar E_{ij}| \lesssim \sigma$, which from the above argument is larger than the characteristic width
of $f(E_a - \bar E_{ij})$.%}

%\SV{[}From the numerics in Fig.~\ref{fig:BreitWigner_test}, we know that upon smearing over small windows in $\epsilon_m$, $\Gamma_{ij}(\epsilon)$ is a slowly varying function (i.e. $\left|\Gamma_{ij}'(\epsilon)\right| \ll 1$) near $\epsilon \approx E_a$ for $E_a - \bar E_{ij} \lesssim \sigma$ where $\sigma$ is an $\mathcal{O}(1)$ energy scale that depends on microscopic details. In the same energy range, the many-body density of states $\rho(\epsilon_m)$ is also a slowly varying function of $\epsilon$. \SV{]}

Now note that even though  $\Gamma_{ij}(\epsilon)$ and $\rho(\epsilon)$ are smooth,  the discrete sums in \eqref{eq:3.4heuristic_3} cannot be replaced with integrals over the smooth integrands due to the singular contributions near $\epsilon_m \approx E_a$. Instead, we use simplifying assumptions to directly evaluate the discrete sums.  
 Since $\Gamma_{ij}(E_a)$ varies slowly with $E_a$ for $E_a -\bar E_{ij} \lesssim \sigma$, let us approximate $\Gamma_{ij}(\epsilon_m) \approx \Gamma_{ij}(E_a)$ for the dominant terms in \eqref{eq:3.4heuristic_3}. Furthermore, let us assume that the levels $\epsilon_m$ are uniformly spaced with spacing $\rho(E_a)^{-1}$. This assumption is motivated by the repulsion between nearby energy levels in a generic chaotic system and is a standard assumption in the literature for related problems (see e.g.~\cite{Flambaum2000,kota_2014_embeddedRMT_quantum}). Under this assumption, the discrete sums in \eqref{eq:3.4heuristic_3} can be evaluated explicitly and we find
\begin{equation}
    \begin{aligned}
    E_a - \bar E_{ij} &%= \frac{\Gamma_{ij}(E_a)}{\pi} \sum_m \frac{1}{E_a \rho(E_a) - m} 
    = \Gamma_{ij}(E_a)\cot \left[\pi (E_a-\epsilon_a) \rho(E_a)\right] \,, \\
    |c^a_{ij}|^2 \rho(E_a) &\approx %\frac{\pi}{\Gamma_{ij}(E_a)} \left(\sum_m \frac{1}{\left[E_a \rho(E_a) - m\right]^2}\right)^{-1} = 
    \frac{1}{\pi \Gamma_{ij}(E_a) \left(1 + \cot^2 \left[\pi (E_a-\epsilon_a) \rho(E_a)\right] \right)} \,,
    \end{aligned}
\end{equation}
where $\epsilon_a$ is the level closest to $E_a$. Combining the two equations above, and noting that $\rho(E_a) \approx \rho\left(\frac{E_a + \bar E_{ij}}{2}\right)$ for $E_a - \bar E_{ij} = \mathcal{O}(1)$, we find 
\begin{equation}\label{eq:3.4_f_form_final}
    \boxed{f_{\rm approx}(E_a-\bar E_{ij}) = \frac{1}{\pi} \frac{\Gamma_{ij}(E_a)}{(E_a - \bar E_{ij})^2 + \Gamma_{ij}(E_a)^2} \,.} 
\end{equation}
For $E_a - \bar E_{ij} \lesssim \sigma$, since $|\Gamma_{ij}'(E_a)| \ll 1$,  \eqref{eq:3.4_f_form_final} implies a Lorentzian form of $f(\omega)$ with width $\Gamma_{ij}(\bar E_{ij})$. From the definition of $\Gamma_{ij}$, we can see that this width is $\sO(1)$ in a local system in one spatial dimension. For $E_a - \bar E_{ij} \gtrsim \sigma$, $|\Gamma_{ij}'(E_a)| \ll 1$ no longer holds and we expect deviations from \eqref{eq:3.4_f_form_final}.

  The uniform level  spacing assumption used above, though standard in the literature, is uncontrolled given that small deviations from uniform spacing lead to large fluctuations in $1/(E_a - \epsilon_m)$ when $\epsilon_m \approx E_a$. In Appendix \ref{app:EDF_derivation_details}, we discuss alternative approximations which can also explain the Lorentzian regime of $f(\omega)$. In these approximations, we evaluate the sums in \eqref{eq:3.4heuristic_3} by making certain assumptions about the functional form of $\Gamma_{ij}(\epsilon)$.  We show that the Lorentzian form of $f(\omega)$ is robust to different choices of the functional form of $\Gamma_{ij}(\epsilon)$.

We confirm various assumptions going into the above argument numerically  for our spin chain model. In Fig. \ref{fig:BreitWigner_test2}, we show that the width of $\Gamma_{ij}(\epsilon_m)$ is indeed larger than that of $f(\omega)$. Moreover, we explicitly compare the LHS and RHS of \eqref{eq:3.4_f_form_final} using the numerically derived forms of $f(\omega)$ and $\Gamma_{ij}(\epsilon)$, and find excellent agreement with \eqref{eq:3.4_f_form_final} up to $E_a - \bar E_{ij}$ a few times larger than $\sigma$. This range is even larger than warranted by the analytic arguments.

%Note that \eqref{eq:3.4_f_form_final} captures not only the Lorentzian form at $|E_a - \bar E_{ij}| \ll \sigma$, but also the deviation from the Lorentzian form at $|E_a - \bar E_{ij}| \sim \sigma$. \
\begin{comment}
\begin{figure}[!h]
    \centering
    \includegraphics[width = 0.7\textwidth]{}
    \caption{We plot $\Gamma_{ij}(\epsilon_m) = \pi \rho(\epsilon_m) |V_{m,ij}|^2$ against $\epsilon_m - \bar E_{ij}$ at $L = 12, 14$, averaged over three states $\ket{ij}$ with energy near the middle of the spectrum. $\Gamma_{ij}(\epsilon_m)$ can be approximated by a Lorentzian with width $\sigma \approx 3.4$, confirming the fundamental assumption we make in Sec.~\ref{subsec:derivation_f_form} that \SV{[}$\max \Gamma_{ij}(\epsilon_m) \approx \Gamma_{ij}(\bar E_{ij})$ is much smaller than the characteristic width over which $\Gamma_{ij}(\epsilon_m)$ varies. \SV{] the characteristic width of $\Gamma_{ij}(\epsilon)$ is much larger than that of $f(\omega)$, shown in Fig. \ref{fig:f_form}.}}
    \label{fig:BreitWigner_test}
\end{figure}
\end{comment}
\begin{figure}[!h]
    \centering
    \includegraphics[width = 0.7\textwidth]{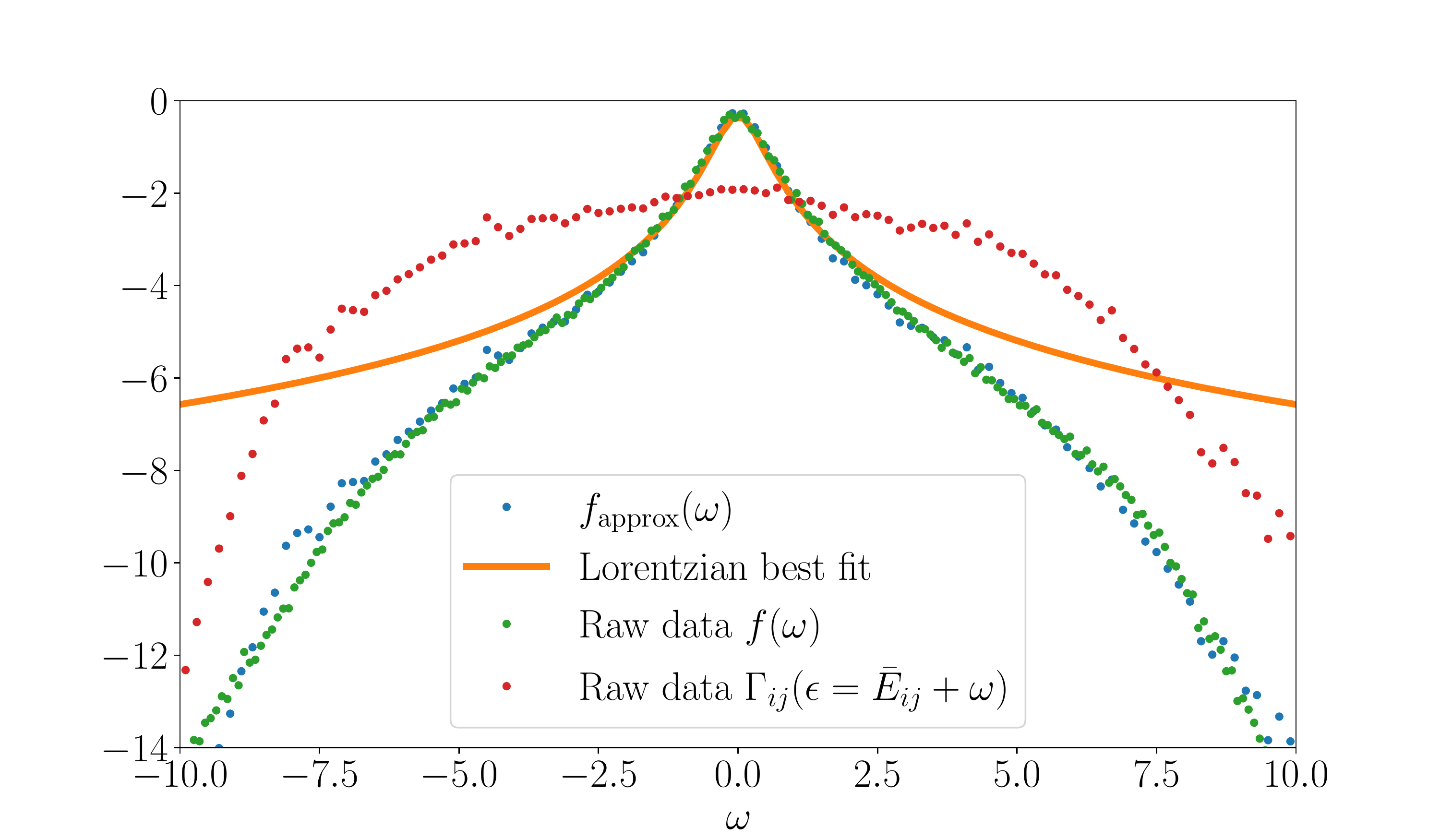}
    \caption{We numerically extract the form of $\Gamma_{ij}(\epsilon) = \pi \rho(\epsilon_m) |V_{m,ij}|^2$ as well as $f(\omega)$. $\Gamma_{ij}(\epsilon_m)$ fits well to a Lorentzian with width $\sigma \approx 3.4$, confirming the fundamental assumption we make in Sec.~\ref{subsec:derivation_f_form} that the characteristic width of $\Gamma_{ij}(\epsilon)$ is much larger than that of $f(\omega)$. Plugging $\Gamma_{ij}(\epsilon)$ into $f_{\rm approx}(\omega)$ gives a good numerical fit to the EDF for $|\omega| \lesssim 6$, which is larger than the range of validity $|\omega| \lesssim \sigma$ of the analytic arguments.}
    \label{fig:BreitWigner_test2}
\end{figure}

Notice that the above argument for the Lorentzian should hold quite generally for chaotic systems, and the input about a local system in 1+1 dimensions is used only in the final step, to deduce that the width of the Lorentzian is $\sO(1)$. The same argument applies in local systems in higher dimensions, where the width of the Lorentzian grows with the area of the boundary of $A$. The energy range $\sigma$ for which the Lorentzian is valid also grows with the area, and is still larger than the width of the Lorentzian.~\footnote{Note that this does not agree with the prediction of a Gaussian form in higher dimensions in \cite{murthy_2019_structure}, which we discuss in Appendix \ref{app:MSlimitation}.} Similarly, systems with long-range interactions in 1+1 dimensions are distinguished from local systems by the fact that in the former, the width of the Lorentzian scales with the system size.

We are not able to analytically determine the asymptotics of $f(E_a - \bar E_{ij})$ at $|E_a - \bar E_{ij}| \gtrsim \sigma$, but the exponential decay observed in numerics is consistent with the rigorous upper bounds that we derive in Sec.~\ref{subsec:loc}. Moreover, the rate of this exponential decay can be estimated via the variance of the EDF
\begin{equation}
    \sigma_{E,ij}^2 = \bra{ij} H_{AB}^2 \ket{ij} - \bra{ij}H_{AB}\ket{ij}^2 = \int d \omega f(\omega)\, \omega^2 \,. 
\end{equation}
If the Lorentzian form for $f(\omega)$ is used, the $\omega$-integral is divergent. This means that the true variance is determined by the exponentially decaying regime of $f(\omega)$. Assuming $f(\omega) \sim e^{- \frac{|\omega|}{\omega_*}}$ at large $\omega$, the constraint above implies that $\omega_* \leq \sigma_{E,ij}/\sqrt{2}$, which is consistent with numerical simulations in Fig.~\ref{fig:f_form}. %If we plug in $\sigma_{E,ij} \approx 1$ which is valid for $\bar E_{ij}$ near the middle of the spectrum, we recover a value for $\omega_*$ which is consistent with numerical simulations at large $\omega$. %Similarly, for mid-spectrum states, our argument predicts that $\omega_* \approx \sqrt{\bra{ij}H_{AB}^2\ket{ij}}\approx 1$, which is again close to the numerical best fit value $1.08$. 

\subsection{Upper bound on \texorpdfstring{$f(\omega)$}{} from locality}
\label{subsec:loc}

In this subsection, we show the exponentially decaying lower bound \eqref{ub_1} on $f(\omega)$, using the results of \cite{arad_2016_connectinglocalglobal}.

%In the previous subsection, we claimed without proof that $f(\omega)$ decays at least exponentially at large $\omega$. Here, we use constraints from spatial locality to prove such an upper bound explicitly. 

%The argument will make use of the methods and results in \cite{arad_2016_connectinglocalglobal}. It applies to any spatial dimension, and to a setup where the Hamiltonian is a general spin chain  $H = \sum_x h_x$ for some label $x$,  such that each term involves no more than $k$ different spins. For an arbitrary spin $i$, the sum of $||h_x||$ for all $h_x$ supported on $i$ is upper-bounded by some minimal number $g_0$, which can be thought of as the interaction strength. 

In the discussion below we will make use of $||O||$, the operator norm of $O$,
defined as 
\begin{align}  
||O|| & = \text{sup}_{||\ket{\psi}|| =1 } \, || O \ket{\psi} ||   = \max_{||\ket{\psi}||=1, ||\ket{\phi}||=1 }|\braket{\psi| O |\phi}|, \quad \quad ||\ket{\psi}|| = \sqrt{\braket{\psi| \psi}} \,.
%\\& = \lim_{p\rightarrow \infty} \le(\Tr\le[(O^{\dagger} O)^{\frac{p}{2}}\ri]\ri)^{\frac{1}{p}}
\end{align} 
%Given a subsystem $A$ and its complement $B$, 
We define the following projectors onto the eigenstates of $H$ and $H_{A}+ H_B$ respectively in some energy interval $I$:  
\be 
\Pi_I = \sum_{E_a \,  \in \, I} \ket{a}\bra{a}, \quad \quad Q_I = \sum_{E_{A, i}+ E_{B,j} \, \in \, I} \ket{i}\ket{j}\bra{i}\bra{j} \, . 
\ee

By a slight modification of the analysis in \cite{arad_2016_connectinglocalglobal}, we can show that for any $E_{AB}$, $E$ such that $E_{AB} > E$, and for any $\delta>0$, we have
\be 
|| \, Q_{[E_{AB}, \,  E_{AB} + \delta]} \,  \Pi_{[E-\delta, \,  E]} \, || \leq C e^{- \gamma(E_{AB} - E)}, \quad E_{AB} > E \label{a1}
\ee
This statement applies to lattice models with local interactions in  any number of dimensions, for a somewhat more general notion of locality defined in \cite{arad_2016_connectinglocalglobal}, and to both chaotic and integrable systems. In particular, it applies to Hamiltonians with spatially local interactions such as \eqref{ham}.  Here 
$\gamma$
%= 1/(2 g_0 k)$, and   
is an $\sO(1)$ constant associated with the parameters of the lattice model, which is defined explicitly in \cite{arad_2016_connectinglocalglobal}, and referred to as $\lambda$ in their notation. The constant $C$ depends on the quantity 
\be 
\partial A = \sum_{h_x \text{ in } H_{AB}} || h_x || 
\ee 
which is proportional to the area of the boundary of $A$. In the discussion below, we will absorb various constants that are $\sO(1)$ or that scale with $\partial A$ in $C$. 
Similarly, for any $E_{AB}$, $E$ such that $E_{AB} < E$, and $\delta>0$, we have 
\be 
|| \, Q_{[E_{AB} -\delta, \,  E_{AB}]} \,  \Pi_{[E, \,  E+\delta]} \, || \leq C e^{- \gamma(E- E_{AB})}, \quad E_{AB} < E \, . \label{a2}
\ee
%Note that \eqref{a1} and \eqref{a2} apply to any Hamiltonian that is local in the sense described above, irrespective of whether it is chaotic or integrable. 

Let us now assume that we have a chaotic spin chain, where the approximation \eqref{eq:f_universal} is valid, and see what constraints we get on the eigenstate distribution function $f(\omega)$ from \eqref{a1} and \eqref{a2}. For $E_A + E_B > E$, 
consider the operator appearing on the LHS of \eqref{a1}, 
\be
O = Q_{[E_{AB}, \,  E_{AB} + \delta]} \,  \Pi_{[E-\delta, \,  E]} = \sum_{\substack{E_{AB} \leq E_i + E_j \leq E_{AB} + \delta \, , \\  \\ E-\delta \leq E_a \leq E }} \, {c^a_{ij}}^{\ast} \ket{i} \ket{j} \bra{a} 
\ee
Note that 
\begin{align} 
\Tr[O] & = \sum_{E- \delta \leq E_{a'} \leq E} \braket{a'|O|a'} = \sum_{E_{AB}\leq E_{i'} + E_{j'} \leq E_{AB}+\delta} \braket{i'j'|O |i'j'} \label{l1} \\ 
& = \sum_{\substack{E_{AB} \leq E_{i} + E_{j} \leq E_{AB} + \delta \, , \\  \\ E-\delta \leq E_a \leq E }}  |c^a_{ij}|^2 \label{l2}
\end{align} 

Using the first expression in \eqref{l1} together with \eqref{a1}, we find 
\be 
\Tr[O] \leq N_{[E-\delta, E]} \,  C  e^{-\gamma (E_{AB}-E)}
\ee
where $N_{[E-\delta, E]}$ is the number of eigenstates of $H$ in the interval $[E-\delta, E]$. Taking $\delta$ to be small enough such that the density of states can be approximated as a constant in the interval, we have 
\be 
N_{[E-\delta, E]} = \int_{E-\delta}^E  \, dE' \, e^{S(E')} \approx e^{S(E)} \delta \, . 
\ee
By similarly using the second expression in \eqref{l1}, we end up with the following upper bound for $\Tr[O]$ (absorbing the $\sO(1)$ constant $\delta$ in $C$): 
\be \label{omin}
\Tr[O] \leq \min \le(e^{S(E)}, e^{S_{\rm fac} (E_{AB})} \ri) \,  C e^{-\gamma(E_{AB}- E)} 
\ee
where $S_{\rm fac}(E)$ is the thermodynamic entropy of $H_A + H_B$. Further, assume that the window $\delta$ is small enough such that the expectation value $\bra{i}\bra{j} H_{AB} \ket{i} \ket{j}$ for all states $\ket{i}\ket{j}$ with $E_{AB}\leq E_i + E_j \leq E_{AB}+\delta$ is approximately equal, and call this value $\tilde E_{AB}$. Then by absorbing a factor $e^{\gamma ||H_{AB}||}$ in $C$, we find 
\be 
\Tr[O] \leq \min (e^{S(E)}, e^{S_{\rm fac}(E_{AB})}) C e^{-\gamma(\bar E_{AB}-E)}
\ee
where $\bar E_{AB} = E_{AB} + \tilde{E}_{AB}$. 

To complete the argument, we need to relate $\Tr [O]$ to the EDF $f(\omega)$. We first observe that $e^{S_{\rm fac}(E)} \approx e^{S(E)}$ up to $\sO(1)$ multiplicative factors (see Appendix~\ref{app:therm} for a detailed argument). This approximation allows us to express \eqref{l2} in terms of the density of states, using $E_{ij}$ as shorthand for $E_i+E_j$: 
\be 
\Tr[O] \approx \int_{E-\delta}^{E} dE_a \, \int_{E_{AB}}^{E_{AB}+ \delta}  d E_{ij} \, e^{S_{\rm fac}(E_{ij}) + S(E_a)} |c^{a}_{ij}|^2 \approx \int_{E_{AB}}^{E_{AB}+ \delta}  d E_{ij} \, e^{S(E_{ij}) + S(E_a)} |c^{a}_{ij}|^2 \, . \label{troc} 
\ee
\begin{comment}
At this point, we make the assumption that up to $\mathcal{O}(1)$ multiplicative corrections
\be \label{sfapp}
e^{S_{\rm fac} (E)} \approx e^{S(E)} \, . 
\ee
The justification for this assumption in the thermodynamic limit is discussed in Appendix \ref{app:therm}. 
\end{comment}
Let us now expand both $S(E_{ij})$ and $S(E_a)$ around $S(\bar E)$, where $\bar E = \frac{\bar E_{ij}+E_a}{2}$ and $\bar E_{ij} = E_{ij} + \Delta_{ij} = E_{ij} +\bra{i}\bra{j} H_{AB} \ket{i}\ket{j}  $, and assume that $\omega = E_a - \bar E_{ij}$ is less than extensive. Define $\beta = S'(\bar E)$. Then we find, on changing variables to $\bar E$ and $\omega$, 
\begin{align} 
\Tr[O] &\approx \int_{\frac{\bar E_{AB} + E}{2} - \delta/2}^{\frac{\bar E_{AB} + E}{2} + \delta/2} d\bar{E} \, e^{S(\bar E)} e \int_{E-\bar E_{AB}-2\delta}^{E- \bar E_{AB}} d\omega \, e^{-\beta \tilde E_{ij}} f(\bar E, \, \omega) \, \\
&\approx 2\delta^2 \,  e^{S\le( \frac{\bar E_{AB} + E}{2} \ri)} f\le( \frac{E + \bar E_{AB}}{2}, E- \bar E_{AB} \ri) e^{-\beta \tilde E_{AB}} \label{trof}
\end{align} 
where in the final form we assume that $\delta$ is sufficiently small that $S(\bar E)$, $f(\bar E, \omega)$, and $\tilde E_{ij}$ can all be treated as constants in the above windows. 

Combining \eqref{trof} with \eqref{omin}, we obtain the following upper bound on the eigenstate distribution function: 
\be \label{ub}
f\le( \frac{E + \bar E_{AB}}{2}, E- \bar E_{AB} \ri) \leq C e^{- \le(\frac{|\beta|}{2}+\gamma \ri) |E - \bar E_{AB}|} \, ,  \quad \beta = S'\le(\frac{E + \bar E_{AB}}{2}\ri) \, . 
\ee
Here we have absorbed $1/2\delta^2$ and factors of $e^{\beta \tilde E_{AB}}$ in $C$. 
Starting with \eqref{a2} in the case $E_{AB}<E$, we find the same bound.

\section{Transition probability and correlations from unitarity} \label{sec:phco}

To understand the behaviour of quantities like the the transition probability $P_{ij,xy}(t)$ in terms of the coefficients $c^a_{ij}$, it is useful to see the $c^a_{ij}$ as being drawn from a statistical ensemble. Averages over this statistical ensemble are realized by averages over an $\sO(1)$ number of $a$, $i$, $j$ close to some energies $E_a$, $E_i$, $E_j$. 
In this section, we show that statistical correlations among the coefficients are important for understanding the evolution of the transition probability. We also show numerically that further correlations are present beyond the minimal ones needed to explain the transition probability.   

\subsection{Transition probability} 

Let us consider the time evolution of the transition  probability 
\bega 
%M_{ij, xy} (t)= \vev{xy|e^{- i Ht} |ij}  =  \sum_{a} e^{- i E_a t} c_{ij}^a c_{xy}^{a*} , \quad ij \neq xy \\ 
P_{ij, xy} (t) = \le|\vev{xy|e^{- i Ht} |ij} \ri|^2 =  \sum_{a,b} e^{- i (E_a - E_b) t} c_{ij}^a c_{xy}^{a*} c_{ij}^{b*} c_{xy}^b  \ , \quad x y \neq ij \, . 
\label{euh}
\end{gather} 
The initial value of this quantity is zero. Assuming that the spectrum is non-degenerate, the late-time saturation value comes from the $a=b$ terms in the above sum: 
\be 
P_{ij,xy}(t\rightarrow \infty) = \sum_a |c^a_{ij}|^2  |c^a_{xy}|^2 \,.
\end{equation}
Approximating this sum as an integral, we see that the saturation value is  the convolution of the eigenstate distribution function with itself:
\begin{equation}
    P_{ij,xy}(t\rightarrow \infty) \approx e^{-S\left(\frac{E_{ij} + E_{xy}}{2}\right)} \int d E_a f\left(\frac{E_a + \bar E_{ij}}{2}, E_a - \bar E_{ij}\right) f\left(\frac{E_a + \bar E_{xy}}{2}, \bar E_{xy} - E_a\right) \,.
\end{equation}
Neglecting the weak dependence of the eigenstate distribution function on the first argument and approximating the dependence on the second argument as a Lorentzian, we can do the integral explicitly and find
\begin{equation}\label{eq:Pijxy}
    \begin{aligned}
    P_{ij,xy}(t\rightarrow \infty) &\approx e^{-S\left(\frac{E_{ij} + E_{xy}}{2}\right)} \int d E \frac{C_0}{\Gamma^2 + (E_{ij} - E)^2} \frac{C_0}{\Gamma^2 + (E_{xy} - E)^2} \\
    &\approx e^{-S\left(\frac{E_{ij} + E_{xy}}{2}\right)} \frac{2\pi C_0^2}{\Gamma} \frac{1}{(2\Gamma)^2 + (E_{ij} - E_{xy})^2} \,,
    \end{aligned}
\end{equation}
which is another Lorentzian with width $2\Gamma$. We verify this behaviour  numerically in Fig.~\ref{subfig:P_ijxy_latetime}. The best fit width $0.89$ is indeed close to twice the best fit width $0.47$ of the eigenstate distribution function found in Fig.~\ref{fig:f_form}.

\begin{figure}[!h]
    \centering
    \begin{subfigure}{0.49\textwidth}
        \centering
        \includegraphics[width = \textwidth]{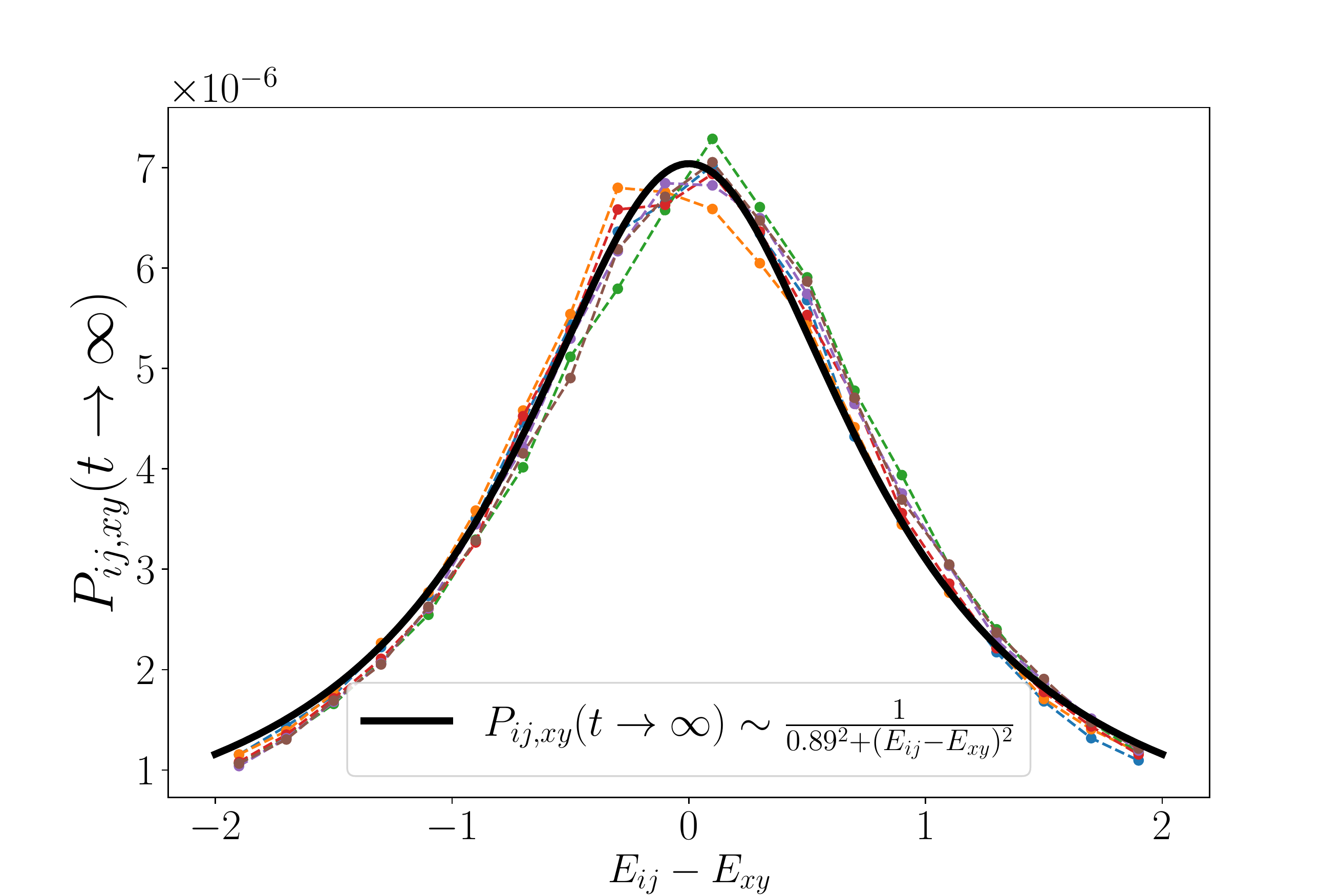}
        \caption{Late time saturation}
        \label{subfig:P_ijxy_latetime}
    \end{subfigure}
    \begin{subfigure}{0.49\textwidth}
        \centering
        \includegraphics[width = \textwidth]{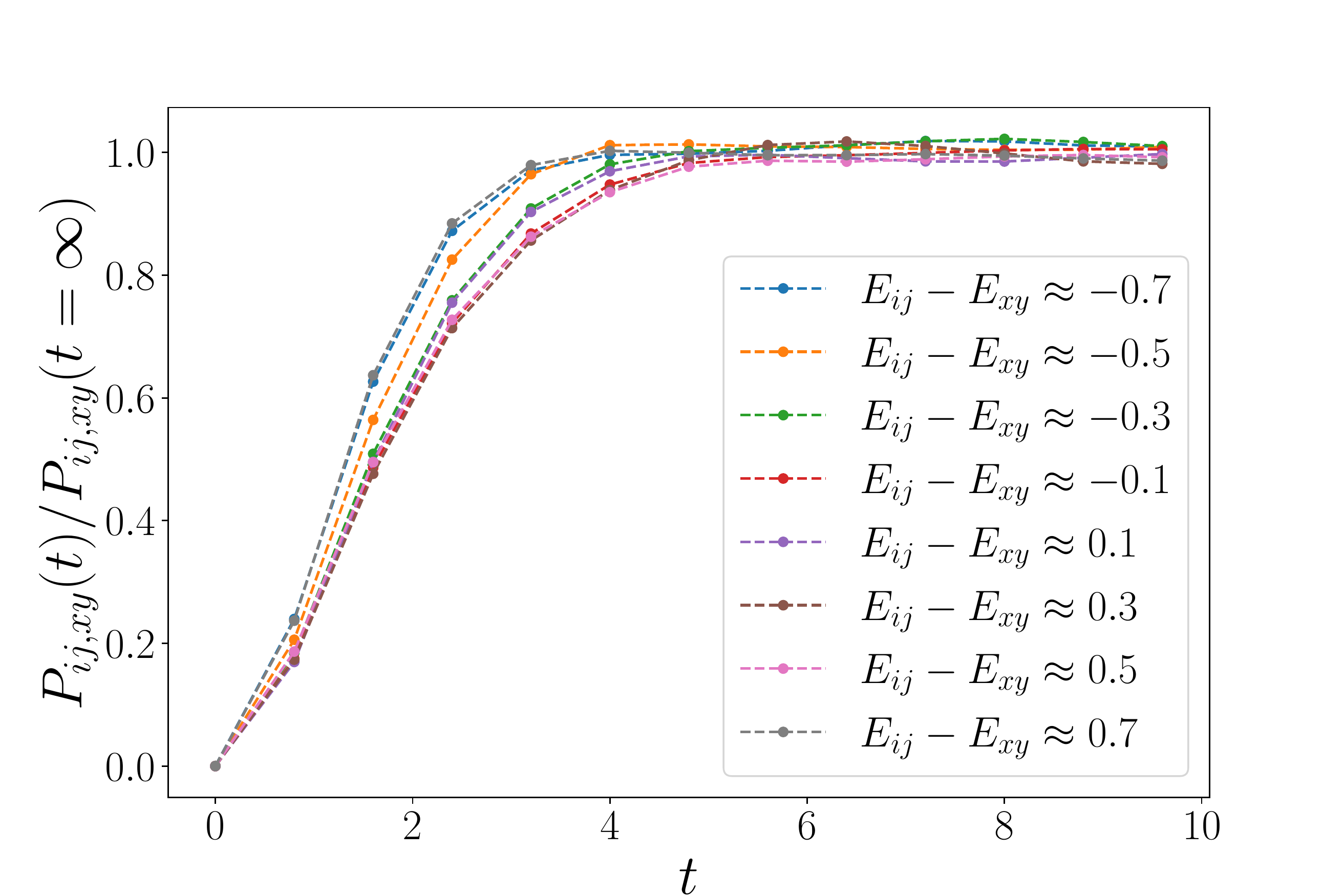}
        \caption{Time evolution}
        \label{subfig:P_ijxy_tdomain}
    \end{subfigure}
    \caption{We numerically simulate $P_{ij,xy}(t)$ for the mixed-field Ising model at $L = 20$, fixing $ij, xy$ to be near the middle of the spectrum. The equilibrium profile $P_{ij,xy}(t \rightarrow \infty)$ is found to be a Lorentzian of $E_{ij} - E_{xy}$ with width consistent with the analytic prediction. The normalized time evolution $P_{ij,xy}(t)/P_{ij,xy}(t = \infty)$ is a nontrivial smooth function of time which depends weakly on $E_{ij} - E_{xy}$. This nontrivial time-dependence is a manifestation of spatial locality.}
    \label{fig:Pijxy}
\end{figure}

Next we turn to the dynamics of $P_{ij,xy}(t)$ in the approach towards equilibrium. A simple approximation is to assume that $c^a_{ij}$ are taken from a statistical ensemble
\be  \label{rp_approx_1}
\overline{c^a_{ij} {c^b_{kl}}^{\ast}} = \overline{|c^a_{ij}|^2}  \de_{ab} \de_{ik} \de_{jl} , \quad
\overline{|c^a_{ij}|^2}  =
%\begin{cases} 
e^{-S\le(\frac{E_a + \bar E_{ij}}{2}\ri)} f\le(\frac{E_a + \bar E_{ij}}{2}, E_a -  \bar E_{ij}\ri) \ .%& a =b, ~ ij = kl   \\
%0 & \quad \text{ otherwise } 
%\end{cases} %\quad \quad \text{ \it (random phase)}
\ee
The ansatz \eqref{rp_approx_1} is an improvement over the EB ansatz~\eqref{erg} as it incorporates the non-trivial functional form~\eqref{eq:f_universal} 
of $\overline{|c^a_{ij}|^2}$, but it still assumes that $c^a_{ij}$ are independent Gaussian random variables.

Applying~\eqref{rp_approx_1} to~\eqref{euh}, we find that at all times $t$,
\be 
\overline{P_{ij, xy} (t)}  =  \sum_{a}  \overline{|c_{ij}^a|^2} \, \overline{|c_{xy}^{a}|^2  } = P_{ij,xy}(t\rightarrow \infty) \ .
\ee
The right hand side is time-independent, which tells us that under such an approximation, $P_{ij, xy} (t)$ already saturates to the equilibrium value for any time $t \sim \sO(L^0)$. However, as discussed in the introduction, we expect $P_{ij, xy} (t)$ to have nontrivial evolution dynamics for $t \sim \sO(L^0)$ due to the locality of $H_{AB}$. Numerical simulations of $P_{ij,xy} (t)$ shown in Fig.~\ref{subfig:P_ijxy_tdomain} are consistent with this expectation.
Thus~\eqref{rp_approx_1} is inadequate, and further structure must be included to capture effects of local dynamics. In the rest of this section, we discuss how to improve on~\eqref{rp_approx_1} by including correlations among $c^a_{ij}$. 

%But as emphasized earlier, for $H_{AB}$ given by local interaction we expect nontrivial evolution for $t \sim O(L^0)$. 

\subsection{Constraints from unitarity}\label{sec:31}

To capture the nontrivial evolution of $P_{ij, xy} (t)$, we first need to be able to capture that at $t=0$, it is zero
to order $O(e^{-S})$ due to orthogonality of $\ket{ij}$ and $\ket{xy}$.\footnote{Since the  saturated value is $O(e^{-S})$, we only need to work to this order.}

For this purpose, it is helpful to first study a simple toy model. 
Consider a system without any energy constraints, with $d$ the dimension of the Hilbert space. 
Consider two random orthogonal basis $\{\ket{a}\}$ and $\{\ket{m}\}$, which are related by 
\be 
\ket{m} = \sum_a u^a_m \ket{a} \ .
\ee
Let us first assume $u^a_m$ are independent Gaussian random variables, with
\be 
\overline{u^a_m u^{b*}_n} ={1 \ov d} \de_{ab} \de_{mn} ,
\ee
under which we have 
\be 
\overline{\vev{m_1|m_2}} = \sum_a \overline{u^a_{m_1} u^{a*}_{m_2}} = \de_{m_1 m_2} ,  
\ee
with the variance for $m_1 \neq m_2$ given by 
%From the Wick contraction, the variance can also be readily computed\footnote{The variance for $m_1 = m_2$ is the same.} 
\be \label{overlap_sum}
\overline{|\braket{m_1|m_2}|^2} = \sum_{a_1 b_1} \overline{u^{a_1}_{m_1} {u^{a_1}_{m_2}}^{\ast} u^{b_1}_{m_2} {u^{b_1}_{m_1}}^{\ast}} 
= \sum_a \overline{|u^a_{m_1}|^2} \,  \overline{|u^a_{m_2}|^2} = \frac{1}{d} \ . % \quad m_1 \neq m_2 \,. 
\ee
  
While~\eqref{overlap_sum} is suppressed by $1/d$, there is something fundamental missing. Since $u$ is a unitary matrix, $u^a_m$ cannot be genuinely independent. A better approximation is to treat $u^a_m$ as a random unitary. From the standard results for a Haar random unitary 
we have %\footnote{Note that there are further corrections to the expression below at higher orders in $1/d$. We keep only the leading correction in $1/d$ to simplify the discussion below.} 
\begin{align} 
\overline{u^{a_1}_{m_1} {u^{a_2}_{m_2}}^{\ast} u^{b_1}_{n_1} {u^{b_2}_{n_2}}^{\ast}} &= \frac{1}{d^2} (\delta_{a_1 a_2} \delta_{b_1 b_2} \, \delta_{m_1 m_2} \delta_{n_1n_2}+\delta_{a_1 b_2} \delta_{a_2 b_1} \delta_{m_1 n_2} \delta_{n_1 m_2 }) \nonumber \\
& - \frac{1}{d^3} (\delta_{a_1 a_2}\delta_{b_1 b_2} \delta_{m_1 n_2} \delta_{n_1m_2} + \delta_{a_1 b_2}\delta_{b_1 a_2} \delta_{m_1 m_2} \delta_{n_1 n_2}) + \cdots ,  \label{full_u}
\end{align} 
where $\cdots$ denotes higher order corrections in $1/d$. The first line of~\eqref{full_u} is the same as that for treating $u^a_m$ as independent Gaussian variables. The second line comes from correlations among different $u^a_m$'s. Naively, the second line can be neglected, as it is suppressed by an additional factor $1/d$. That is incorrect; while in the second line an individual term is suppressed compared with the first line, there are many more combinations of indices that have non-vanishing contribution. As a result, when calculating the variance $\overline{|\braket{m_1|m_2}|^2}$, we find that the contribution from the second line is of the same order as, and in fact exactly cancels, that from the first line, leading to the orthogonality at order $O(1/d)$ 
\be\label{oth1}
\overline{|\braket{m_1|m_2}|^2} =0 , \quad m_1 \neq m_2 \ .
\ee

Let us now return to the spin chain system and the coefficients $c^a_{ij}$. The story here is more intricate, as there is a  nontrivial interplay between the correlations from unitarity and constraints from energy conservation as well as locality. 
To take into account the correlations from unitarity, we may consider the following analog of~\eqref{full_u} (we use the $m$ indices as shorthand for $ij$ indices): 
\begin{align} \label{c2_ru} 
\overline{c^{a_1}_{m_1} {c^{a_2}_{m_2}}^{\ast} c^{b_1}_{n_1} {c^{b_2}_{n_2}}^{\ast}} &= \overline{ |c^{a_1}_{m_1}|^2} \, \overline{ |c^{b_1}_{n_1}|^2 } (\delta_{a_1 a_2} \delta_{b_1b_2} \delta_{m_1m_2} \delta_{n_1 n_2} + \delta_{a_1 b_2} \delta_{b_1a_2} \delta_{m_1n_2} \delta_{n_1 m_2} )\cr
& - k_{a_1, b_1, m_1, n_1} (\delta_{a_1 a_2} \delta_{b_1 b_2} \delta_{m_1n_2}\delta_{n_1m_2} + \delta_{a_1 b_2} \delta_{b_1 a_2} \delta_{m_1m_2}\delta_{n_1n_2} ) ,
\end{align}
The key difference from \eqref{full_u} is that  $\overline{|c^a_m|^2}$ are no longer constants with respect to $a$ and $m$, and are given by~\eqref{rp_approx_1}. Hence, $k_{a_1, b_1, m_1, n_1}$ also  cannot be constants with respect to $a_1, b_1, m_1, n_1$, 
but should be constrained by energy differences.  

In analogy with~\eqref{oth1},  $k_{a_1, b_1, m_1, n_1}$ are required to ensure $\overline{|\braket{m_1|m_2}|^2}=0$ and $\overline{|\braket{a|b}|^2}=0$, i.e. 
\begin{align}
& a \neq b \,  :~~~ \sum_{m_1, m_2} k_{a, b, m_1, m_2} = \sum_{m_1} \overline{|c^a_{m_1}|^2} \,  \overline{|c^b_{m_1}|^2} , \label{314}\\
& m_1 \neq m_2 \,  :~~~ \sum_{a, b} k_{a, b, m_1, m_2} = \sum_{a} \overline{|c^a_{m_1}|^2}\,   \overline{|c^a_{m_2}|^2}  \ . \label{315}
\end{align} 
Since the right hand sides of~\eqref{314}--\eqref{315} are smooth functions of various energies, it is natural to expect that 
$k_{a,b,m_1, m_2}$ should be a smooth function of $E_a, E_b, \bar E_{m_1}, \bar E_{m_2}$ and have support only for cases where the energy differences between any pair is of order $\sO(1)$. That is,  we can parameterize it as 
\bega 
 k_{a,b,m_1, m_2} =e^{- S (E_{ab})- S (\bar E_{m_1 m_2}) - S ({E_{ab} + \bar E_{m_1 m_2} \ov 2})} g (\om_{ab}, \bar \om_{m_1 m_2}, 
 E_{ab} - \bar E_{m_1 m_2} ),
 \\
 E_{ab} = {E_a + E_b \ov 2}, \;\; \bar E_{m_1 m_2} =  {\bar E_{m_1} + \bar E_{m_2} \ov 2} , \;\;
 \om_{ab} = E_a - E_b, \;\; \bar \om_{m_1 m_2} =  \bar E_{m_1} -  \bar E_{m_2}  \, .
\end{gather} 
Equations~\eqref{314}--\eqref{315} give 
\bega 
 \int d\om d \om' \,  e^{{\b \ov 2} \om'}  
g (\om_{ab}, \om, - \om') = \int d\om \, f (\om) f (\om - \om_{ab}) , \\
\int d \om d \om' \, e^{{\b \ov 2} \om'} g (\om, \bar \om_{m_1 m_2}, \om') 
= \int d\om \, f (\om) f (\om - \bar \om_{m_1 m_2})  \ .
\end{gather}

Applying~\eqref{c2_ru} to~\eqref{euh} we have 
\bea \label{t_local}
\overline{|\braket{m_1|e^{-iHt}|m_2}|^2} &=&  \sum_{a,b} e^{- i (E_a - E_b) t} \overline{c_{m_1}^a c_{m_2}^{a*} c_{m_2}^{b} c_{m_1}^{b*} } , \quad m_2 = ij, \, m_1 = xy 
\\
&=&
\sum_a \overline{|c^a_{m_1}|^2} \,  \overline{|c^a_{m_2}|^2} - \sum_{a,b} e^{- i(E_a-E_b)t}k_{a,b, m_1, m_2}
%(E_a, E_b, \bar E_{m_1}, \bar E_{m_2} ) 
\\
&=&  e^{- S (\bar E_{m_1 m_2})} \int d\om d \om' \,  e^{{\b \ov 2} \om'}  \le(1 - e^{- i \om t} \ri) 
g (\om, \bar \om_{m_1 m_2}, \om') \ .
\label{eu1}
\eea
where $\beta= S'\le(E= \frac{\bar E_{ij,kl}+ E_{ab}}{2}\ri)$. 
The above expression exhibits nontrivial time evolution for $t \sim O(L^0)$. %\HL{How does the above expression compare with numerical results? What is the equilibrium value? } \SV{[equilibrium value added in plot above; might be hard to compare time-evolution to numerics as we would need to find the function $g$?]} 

\subsection{Further correlations} \label{sec:fco}

Now consider the average of 
a more general product of two amplitudes 
\bea \label{c2c}
 C^{(2)}(m_1, m_1'; m_2, m_2';t)   &\equiv&  \overline{\braket{m_1| e^{-iHt}|m_1'} \braket{m_2| e^{iHt}|m_2'} } \\
&=&  \sum_{a,b} e^{- i (E_a - E_b) t} \overline{c^a_{m_1} c^{a*}_{m_1'} c^b_{m_2} c^{b*}_{m_2'}}
\\
&=& \int \,  d \omega \, e^{-i \omega t} \,  C^{(2)}(m_1, m_1'; m_2, m_2'; \omega)   \label{prod_2}
\eea
where we have also introduced the Fourier transform~\cite{liao2022_ETH_RQC} 
\bega \label{c2def}
C^{(2)}(m_1, m_1'; m_2, m_2'; \omega) = \sum_{a,b} \delta(\omega- (E_a-E_b)) \overline{c^a_{m_1} {c^a_{m_1'}}^{\ast}c^b_{m_2} {c^b_{m_2'}}^{\ast}}   \ .
\end{gather} 
From~\eqref{c2_ru}, $C^{(2)}$ is zero unless there exists some permutation $\sigma$ such that $m_i'= m_{\sigma(i)}, i=1,2$. That is, it is nonzero (i.e. at least of order $O(e^{-S})$) only for the two situations we considered already:

\ben 

\item For $m_1 = m_1' = m_2 = m_2'= m$, we have~\eqref{ptdef}, with the leading contribution given by~\eqref{ptf}, which is $\sO(1)$. 

\item For $m_1 = m_2' \neq m_2 = m_1'$, we have~\eqref{euh}, which is given by~\eqref{eu1} (of order $\sO(e^{-S})$). 

\een

We may wonder whether there are further correlations among $c^a_{ij}$ that are not captured by~\eqref{c2_ru}. Note that equation~\eqref{c2_ru} treats index $m=ij$ as a whole, and is thus ignorant of the product nature of the basis $\{ij\}$. 
Consider, for example, 
\be\label{ehn}
C^{(2)}(x_1y_1, ij; \, ij, x_2y_1;t) = \overline{\braket{x_1 y_1| e^{-iHt}|ij} \braket{ij| e^{iHt}|x_2 y_1} }, \quad x_1 \neq x_2 
\ee
which has $m_2 = m_1'$, but $m_1$ and $m_2'$  coincide partially (for the $B$ subsystem). Equation~\eqref{c2_ru} would imply that it is of higher order than $\sO(e^{-S})$, but numerical simulations show that $C^{(2)}(\ldots; t)$ for the index structure in \eqref{ehn} is comparable in magnitude to $P_{ij,xy}(t)$. Furthermore, Fig.~\ref{fig:C2_corr1} suggests that \eqref{ehn} depends smoothly on its energy arguments. We can parameterize the corresponding Fourier transform as 
\be 
C^{(2)}(x_1y_1, ij; \, ij, x_2y_1; \om) = e^{- S (\bar E_{ij})} 
\tilde g (\bar E_{x_1 y_1} - \bar E_{ij}, \bar E_{x_2 y_1} - \bar E_{ij}; \om) 
\ee 
where $\tilde g$ is a smooth function which has support only for energy differences of $O(L^0)$. 

These numerical results indicate correlations among $c^a_{ij}$ that go beyond~\eqref{c2_ru}, which in turn results in correlations between the transition amplitudes $\vev{x_1 y_1|e^{- i Ht}| ij}$ and $\vev{ij|e^{i Ht}| x_2 y_1}$.   We will see further examples of such correlations from studies of the Renyi entropy in the next section, where it is possible to tie them to locality. 

%These correlations cannot be explained by simple extrapolation~\eqref{c2_ru} from correlations of random unitaries, and it is not clear how to characterize them in general.

\begin{figure}[!h]
    \centering
    \includegraphics[width = 0.49 \textwidth]{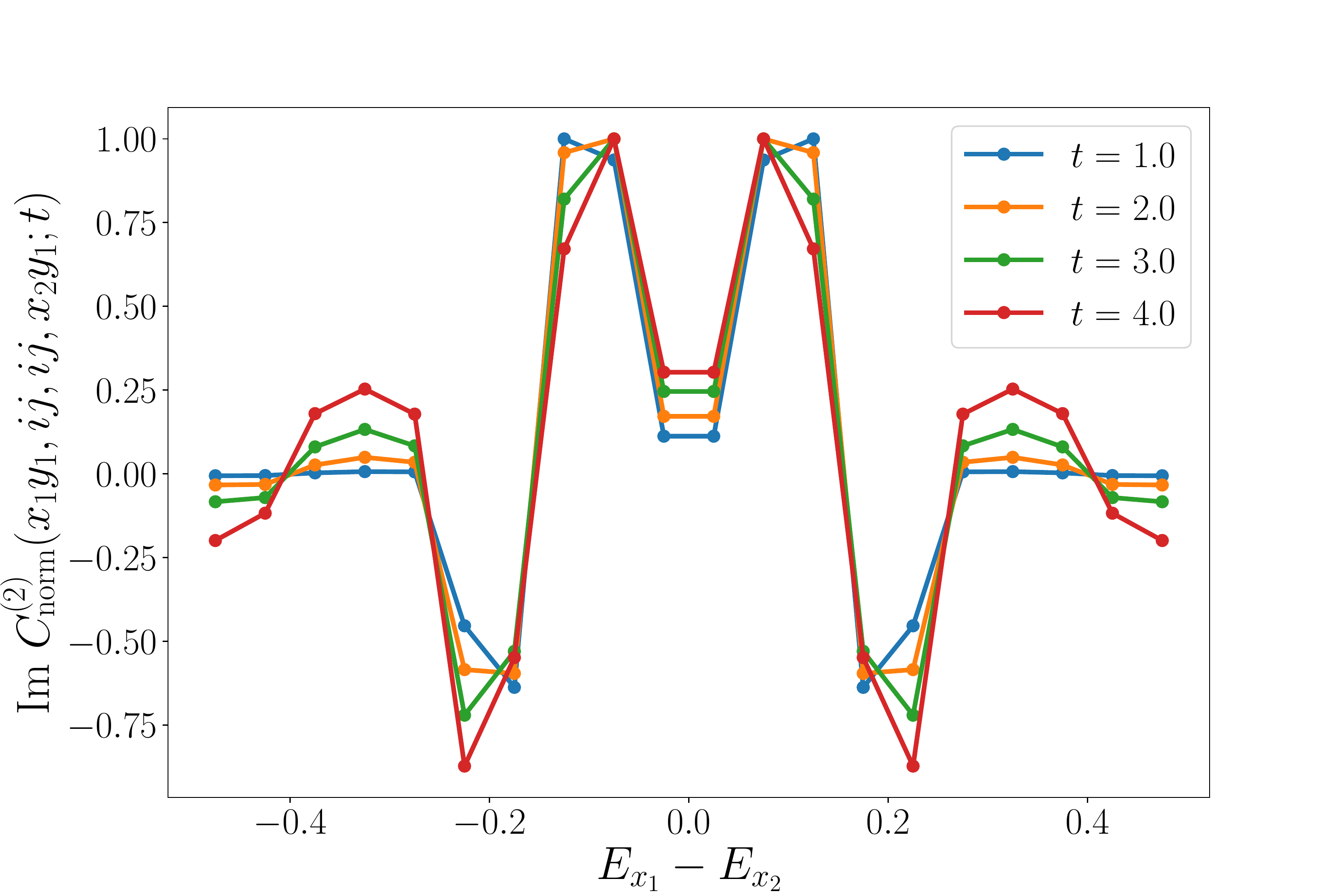}
    \includegraphics[width = 0.49 \textwidth]{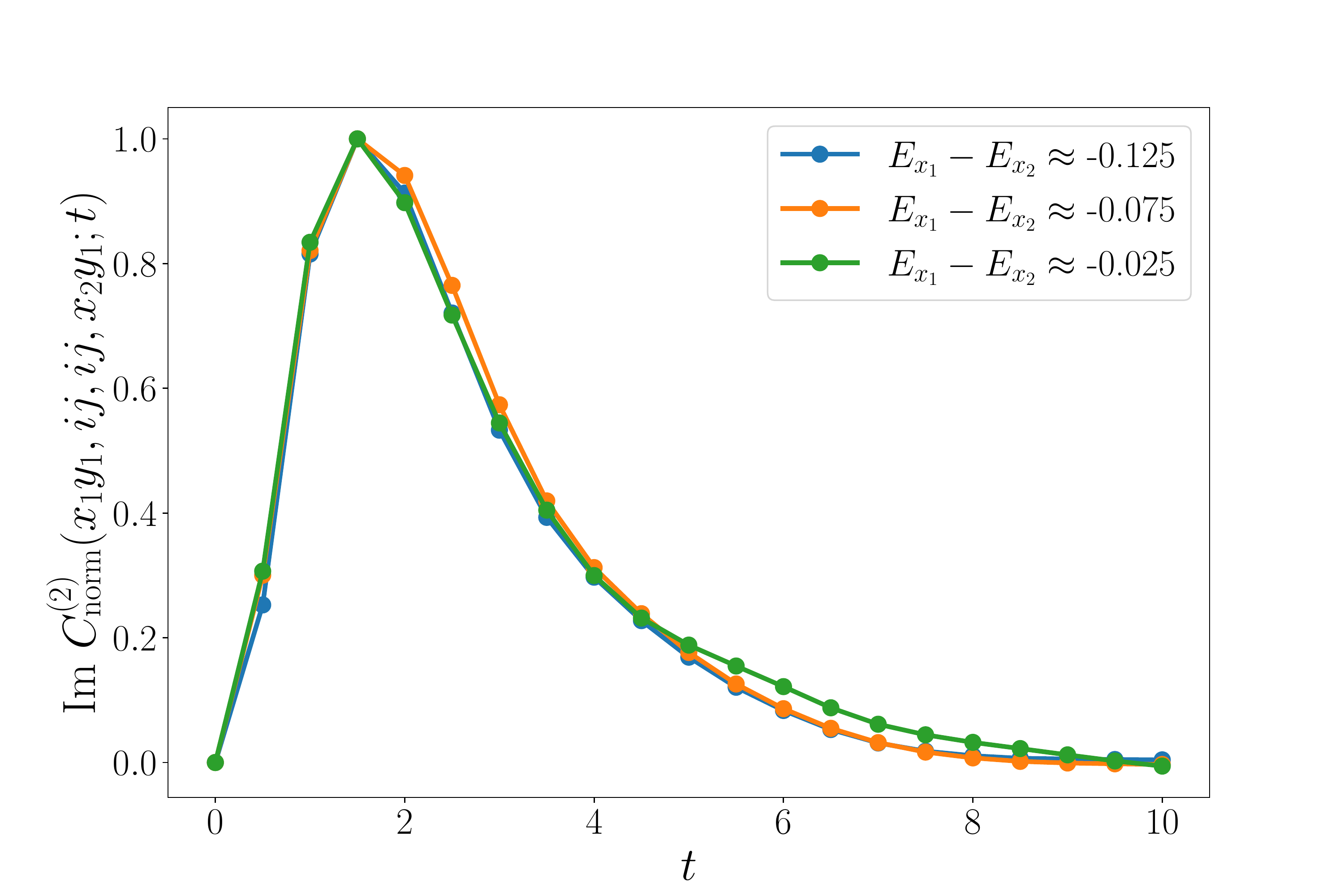}
    \caption{We numerically simulate $C^{(2)}(x_1y_1, ij, ij, x_2y_1; t)$ for the mixed-field Ising model at $L = 20$. We fix $i,j$ near the middle of the spectrum, average $E_{y_1}, (E_{x_1} + E_{x_2})/2$ over microcanonical windows, and study the dependence of $C^{(2)}(x_1y_1, ij, ij, x_2y_1; t)$ on $E_{x_1} - E_{x_2}$ and $t$. After normalizing the curves by their maximum values, we observe smooth dependence of this function on $E_{x_1}-E_{x_2}$ and on $t$. We only show the imaginary part for visual clarity, but the real part is also a smooth function of each of the arguments. The real and imaginary parts  are comparable in magnitude but have different functional forms.}
    \label{fig:C2_corr1}
\end{figure}

\subsection{Averages of higher products}

We can generalize~\eqref{c2_ru} to averages of higher products of $c^a_m$'s. We will again model these on the behavior of random unitaries. 

The generalization of~\eqref{full_u} to the average of a product of $2k$ random unitary matrix 
variables $u^a_m$ is given by
\be 
\overline{u^{a_1}_{m_1} {u^{a_1'}_{m_1'}}^{\ast} u^{a_2}_{m_2} {u^{a_2'}_{m_2'}}^{\ast} ... u^{a_k}_{m_k} {u^{a_k'}_{m_k'}}^{\ast}} = \frac{1}{d^k} \sum_{\sigma, \tau \in \sS_k} \text{wg}(\sigma, \tau, k)\braket{a_1 a_1' \, a_2 a_2'\,  ... a_k a_k' \,  |\sigma}\braket{\tau|m_1 m_1'\,  ... m_k m_k' }  \ .
\label{uk}
\ee
In~\eqref{uk}, $\sS_k$ denotes the permutation group of $k$ objects, and $\ket{\sig}$ for an element $\sig \in \sS_k$ is defined by 
\be 
\braket{a_1 a_1' ... a_k a_k'|\sigma} = \delta_{a_1 a'_{\sigma(1)}}...\delta_{a_k a'_{\sigma(k)}} \, .  
\ee
$\text{wg}(\sigma, \tau, k)$ is the inverse of the matrix $g_{\sigma, \tau} = d^{c(\sigma \tau^{-1})-k}$, where $c(\sigma)$ is the number of cycles in $\sigma$.

Now for more general products of $c^a_m$ in the spin chain model, we can generalize~\eqref{c2_ru} based on the structure of~\eqref{uk} as follows,  
\begin{align} 
 &\overline{c^{a_1}_{m_1} {c^{a_1'}_{m_1'}}^{\ast} ... c^{a_k}_{m_k} {c^{a_k'}_{m_k'}}^{\ast}}  = \overline{|c^{a_1}_{m_1}|^2}\, \overline{|c^{a_2}_{m_2}|^2}...\overline{|c^{a_k}_{m_k}|^2} \sum_{\sigma\in \sS_k}\braket{a_1 a_1' \, a_2 a_2'\,  ... a_k a_k' \,  |\sigma}\braket{\sigma|m_1 m_1'\,  ... m_k m_k' } \nonumber \\
&+ \sum_{\sigma, \tau \in \sS_k, \sig \neq \tau} h (\sigma, \tau; \,  a_1, ..., a_k \, m_1, ..., m_k) \braket{a_1 a_1' \, a_2 a_2'\,  ... a_k a_k' \,  |\sigma}\braket{\tau|m_1 m_1'\,  ... m_k m_k' } , \label{b_model}
\end{align} 
where we have explicitly separated the diagonal and off-diagonal pieces in terms of permutations. 
Similar to \eqref{314} and \eqref{315}, the coefficients $h(\sigma, \tau; \,  a_1, ..., a_k \, m_1, ..., m_k)$ must obey certain consistency conditions. These coefficients are suppressed in powers of $e^{-S}$ relative to $\overline{|c^{a_1}_{m_1}|^2}...\overline{|c^{a_k}_{m_k}|^2}$, which is $\sO(e^{-k S})$. Note that the structure of~\eqref{uk} is such that the right hand side is zero unless there exists some $\sigma, \tau \in \sS_k$ such that $m_i'= m_{\tau(i)}$ and $a_i' = a_{\sig(i)}$. 

We can also generalize the ``correlation functions'' of amplitudes~\eqref{c2c} to higher orders, for example, involving products of four amplitudes 
\begin{align} 
&  C^{(4)}(m_1, m_1'; m_2, m_2'; m_3, m_3'; m_4, m_4';  t) \cr
 & \equiv 
\overline{\braket{m_1| e^{-iHt}|m_1'} \braket{m_2| e^{iHt}|m_2'} \braket{m_3| e^{-iHt}|m_3'}\braket{m_4| e^{iHt}|m_4'}}  \nonumber \\
&= \int \,  d \omega \, e^{-i \omega t} \,  C^{(4)}(m_1, m_1'; m_2, m_2'; m_3, m_3'; m_4, m_4';  \omega) \label{amp}  
\end{align}
where 
\begin{align} \label{c4def}
C^{(4)}(& m_1, m_1';  \, m_2, m_2'; \, m_3, m_3'; \, m_4, m_4'; \, \omega) \nonumber \\
=& \sum_{a_1, b_1, a_2, b_2} \delta(\omega - (E_{a_1} + E_{a_2}- E_{b_1} - E_{b_2})) \overline{{c^{a_1}_{m_1}} {c^{a_1}_{m_1'}}^{\ast} \, {c^{b_1}_{m_2}} {c^{b_1}_{m_2'}}^{\ast} \,  
{c^{a_2}_{m_3}} {c^{a_2}_{m_3'}}^{\ast} \,  {c^{b_2}_{m_4}} {c^{b_2}_{m_4'}}^{\ast} }   \ .
%=& \int d\omega' \,  C^{(2)}(m_1, m_1'; \, m_2, m_2';\omega') \, C^{(2)}(m_3, m_3'; \, m_4, m_4';\omega - \omega') \, . 
\end{align} 

If we assume~\eqref{b_model}, $C^{(4)}$  is zero unless there exists some $\sigma \in \sS_k$ such that $m_i'= m_{\sig(i)}$. Note that it is possible that $C^{(4)}$ is nonzero, but the  averages of any subgroup of factors are zero.  
 One example is: 
 \bega
 C^{(4)}(m_1, m_2; m_2, m_3; m_3, m_4; m_4, m_1; t)  \quad (m_1 \neq m_2 \neq m_3 \neq m_4) \cr
 = \overline{\braket{m_1| e^{-iHt}|m_2} \braket{m_2| e^{iHt}|m_3} \braket{m_3| e^{-iHt}|m_4}\braket{m_4| e^{iHt}|m_1} } \ .
 \end{gather}
 Note that the quantity under the average is not a manifestly positive number. The fact that the average is non-zero shows that its phase  is not random. 
%\item In cases where there are no overlapping indices among $m_1, m_1', m_2, m_2'$ and $m_3, m_3', m_4, m_4'$, to leading order in $1/d$ the averages do respect factorization, 

Given the discussion of Sec.~\ref{sec:fco}, we may wonder whether there are additional correlations not captured by~\eqref{b_model} at the level $C^{(4)}$. We will find that this is indeed the case.

\section{Evolution of Renyi entropy and further correlations from locality}\label{sec:ent}

In this section, we use the evolution of the second Renyi entropy $S_{2,A}(t)$ of the state $\ket{i}_A\ket{j}_B$ in the subsystem $A$ 
to infer certain collective properties of $c^a_{ij}$. We will see there are further  correlations from locality beyond the minimal ones required from unitarity, which play a dominant role in the evolution of this quantity. %We also discuss possible consequences of such correlations for out-of-time-ordered correlation functions.

%\subsection{Second Renyi entropy in local systems}
%\label{sec:32}

Consider the expression for the Renyi entropy ~\eqref{ren1} with $n=2$, which can be written in terms of $c^a_{ij}$ as
\begin{gather} \label{s2sum_simp}
e^{-S_{2,A}(t)} = \sum_{x_1, x_2, y_1, y_2} \braket{x_1 y_1| e^{-i Ht} |ij}  \braket{ij| e^{i Ht} |x_2 y_1}   \braket{x_2 y_2| e^{-i Ht} |ij}  \braket{ij| e^{i Ht} |x_1 y_2} \\
= \sum_{\substack{x_1, y_1, \\ x_2, y_2}} \sum_{\substack{a_1, b_1, \\ a_2, b_2}} e^{-i t (E_{a_1} +E_{a_2}-E_{b_1}-E_{b_2})} c^{a_1}_{x_1 y_1}(c^{a_1}_{ij})^{\ast} c^{b_1}_{ij} (c^{b_1}_{x_2 y_1})^{\ast} c^{a_2}_{x_2 y_2}(c^{a_2}_{ij})^{\ast}  c^{b_2}_{ij}  (c^{b_2}_{x_1 y_2})^{\ast} \, .  
\label{s2full_1}
\end{gather}
From the equilibrium approximation \cite{Vardhan_Liu_2021}, $S_2^{(A)}(t)$ should grow from its initial value of zero to a late-time saturation value of  
\be 
S_2^{(A)} \approx \, \text{min}\le(S_2^{(A)}\le(\rho(\, \eij\, )\ri), S_2^{(B)}\le(\rho(\, \eij\, )\ri)\ri) \,  . 
\ee
The way in which this value is approached depends on the collective behavior of $c^a_{ij}$'s.

% In Sec.~\ref{sec:32}, we show that this ansatz is still not consistent with the observed evolution of $S_{2,A}(t)$, and characterize the further correlations that are needed to explain this. We expect that the presence of such additional correlations is a feature of {\it local} chaotic systems. 

%We would now like to understand the structure of the phases of the coefficients $c^a_{ij}$. While the most natural assumption is to treat the $c^a_{ij}$ as uncorrelated Gaussian random variables whose variance is consistent with the amplitudes from the previous subsection, this approximation will fail to explain the time-evolution of various quantities considered below. We will find it useful to distinguish between correlations that are required by unitarity, and correlations that seem to result from locality. To correctly explain the behaviour of transition probabilities $|\braket{ij| e^{-iHt}|xy}|^2$, it will be sufficient to take into account correlations required by unitarity. We discuss these correlations in Sec.~\ref{sec:31}. To explain the growth of the entanglement entropy that we observe in chaotic spin chain models, we will need to take into account additional correlations which are not required by unitarity, and which we expect should not be present in non-local systems. We characterize these correlations in Secion \ref{sec:32}.

To understand the structure of~\eqref{s2sum_simp}, it is useful to separate the sum into different cases: 
 \begin{align} 
e^{-S_{2,A}(t)} 
&= |P(t)|^2 +  T_1 + T_2 + T_3 + T_4 \, , \label{s2} \\
T_1 &=
\sum_{x_1, y_1 \neq  y_2} |\braket{x_1 y_1| e^{-i Ht} |ij}|^2 \,  |\braket{x_1 y_2| e^{-i Ht} |ij}|^2 \, ,  \\  T_2 &= \sum_{x_1\neq x_2, y_1} |\braket{x_1 y_1| e^{-i Ht} |ij}|^2 \,  |\braket{x_2 y_1| e^{-i Ht} |ij}|^2  \, ,\\ T_3 &=  \sum_{x_1\, y_1 \neq i \, j} |\braket{x_1 y_1| e^{-i Ht} |ij}|^4 \, , \\  T_4 &=  \sum_{x_1 \neq x_2, y_1 \neq y_2}  \braket{x_1 y_1| e^{-i Ht} |ij}  \braket{ij| e^{i Ht} |x_2 y_1}   \braket{x_2 y_2| e^{-i Ht} |ij}  \braket{ij| e^{i Ht} |x_1 y_2}  \label{75}
%\\
%= \sum_{x_1} |\braket{x_1 | \text{Tr}_B [ e^{-i Ht} \ket{\psi} \bra{\psi} e^{i H t} ] |x_1}|^2 + \sum_{y_1} |\braket{y_1 | \text{Tr}_A [ e^{-i Ht} \ket{\psi} \bra{\psi} e^{i H t} ] |y_1}|^2 \\ - \sum_{x_1, y_1} |\braket{x_1 y_1| e^{-i Ht} |\psi}|^4  + \sum_{x_1 \neq x_2, y_1 \neq y_2}  \braket{x_1 y_1| e^{-i Ht} |\psi}  \braket{\psi| e^{i Ht} |x_2 y_1}   \braket{x_2 y_2| e^{-i Ht} |\psi}  \braket{\psi| e^{i Ht} |x_1 y_2}
\end{align}
Here $P(t)$ is the return probability $|\braket{ij| e^{-iHt}|ij}|^2$, which as discussed in Sec.~\ref{sec:pt} is $\sO(1)$ in the thermodynamic limit.

Now suppose we approximate $e^{-S_{2,A} (t)}$ by doing an ensemble average of the right hand side. 
If we use the model \eqref{b_model}, then each of the remaining terms in \eqref{s2} is either zero or exponentially small relative to $|P(t)|^2$:

\ben 

\item The terms $T_1$ through $T_3$ are non-zero and manifestly positive, but are small in the thermodynamic limit.

  For example, for the terms in $T_1$, the average in each term approximately factorizes between the two transition probabilities, and we know from \eqref{t_local} that these probabilities are $\sO(e^{-S})$. Hence we have a sum over $e^{3S/2}$ terms of $\sO(e^{-2S})$, which is $\sO(e^{-S/2})$ and can be ignored relative to $|P(t)|^2$. Similarly, $T_2 \sim \sO(e^{-S/2})$ and  $T_3 \sim \sO(e^{-S})$.

\item Terms in $T_4$ are products of transition amplitudes and are in general complex. Its average can be written in terms of~\eqref{amp} as 
\be \label{t4}
 \overline T_4 =\sum_{x_1 \neq x_2, y_1 \neq y_2}  C^{(4)}(x_1y_1, ij; \,ij, x_2y_1 ;\, x_2 y_2, ij ; \, ij, x_1 y_2 ; t)
 \ee
but its index structure is such that there is no permutation $\sigma\in \sS_4$ for which $m_i = m'_{\sigma(i)}$. In the model of~\eqref{b_model} these terms average to zero. Hence~\eqref{t4} is expected to be highly suppressed. 

\een

\begin{figure}[!h]
\centering
\includegraphics[width=10cm]{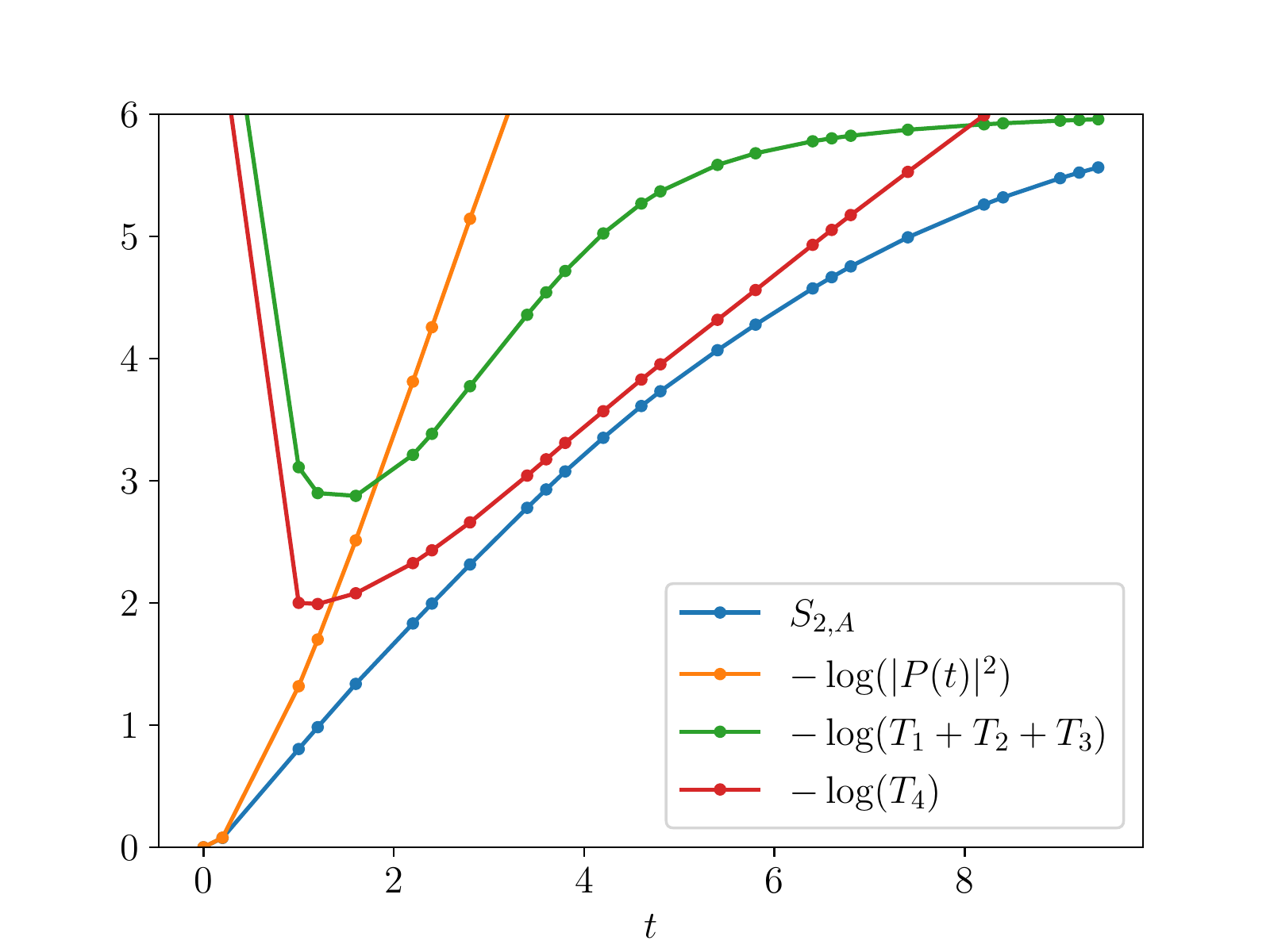}
\caption{Comparison of $S_2(t)$ with $-\log(|P(t)|^2)$, $-\log(T_1+T_2+T_3)$, and $-\log(T_4)$ for a single state $\ket{i}\ket{j}$ at $L=20$, with $i$ and $j$ both taken to be in the middle of the spectrum for the $L=10$ Hamiltonian ($i=512, j=512$). While this plot is for a single state, any other state with nearby energy 
(and hence the average over such states) shows similar behaviour of these different contributions.}
\label{fig:compare_20}
\end{figure} 

 From this analysis, at $t \sim O(L^0)$ we should have 
\be \label{s_simple}
e^{-S_{2, A}(t)} \approx |P(t)|^2 + \sO(e^{-S/2}) , 
\ee
which would imply that the evolution of $e^{-S_{2,A}(t)}$ is similar to that of $P(t)$, showing quadratic decay at early times and exponential decay subsequently. 
The exponential decay corresponds to a linear growth of $S_{2,A}$, which is qualitatively the expected behaviour for a local system~(see for instance~\cite{kimhuse_2013_entballistic, jonay_2018_coarse, zhou_nahum_2019_entRUC}).\footnote{See \cite{rakovszky_2019_diffusiverenyi, huang_2020_renyidiffusive} for  discussions of an expected transition to $\sqrt{t}$ growth at later times.}  The  saturation happens at times $t \sim O(L)$, when other terms can also contribute. 

Equation~\eqref{s_simple}, however, appears to give the wrong growth quantitatively. 
Numerically simulating the evolution of $S_{2,A}$ in the chaotic spin chain model \eqref{ham} with $g=-1.05$, $h=0.5$, we find that rate of growth of $S_{2,A}$ predicted by \eqref{s_simple} is much larger than the rate observed by directly evaluating $S_{2,A}$. $S_{2,A}$ and $-\log |P(t)|^2$ agree only at very early times. See Fig. \ref{fig:compare_20}.

Moreover, in Fig.~\ref{fig:compare_20}, we also compare $S_{2,A}(t)$ with $-\log(T_1+T_2+T_3)$, and $-\log(T_4)$. Remarkably, the dominant contribution at intermediate times seems to come from the term $T_4$, which we expected should average to zero from~\eqref{b_model}.  We verify this by analyzing the dependence of each contribution on the system size in Fig.~\ref{fig:scaling_log}. $T_1$ through $T_3$ decay exponentially with system size, consistent with the expectation from our simple model. On the other hand, $T_4$ is not only non-zero, but also does not decay with system size. Clearly, there are more correlations among $c^a_{ij}$ (which induced correlations among various transition amplitudes) than that captured by the simple model~\eqref{b_model}.

\begin{figure}[!h] 
\centering
\includegraphics[width=7cm]{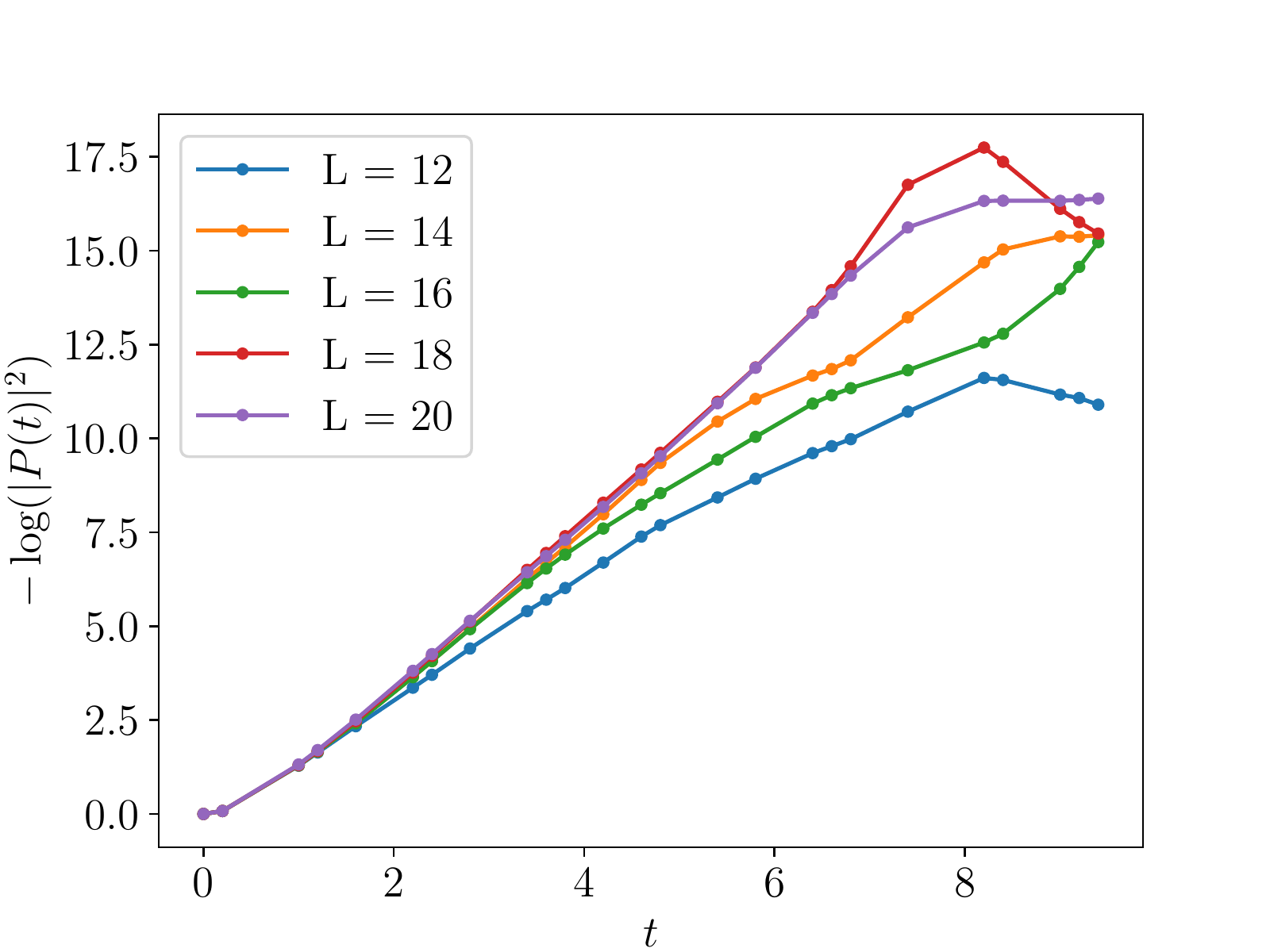} 
\includegraphics[width=7cm]{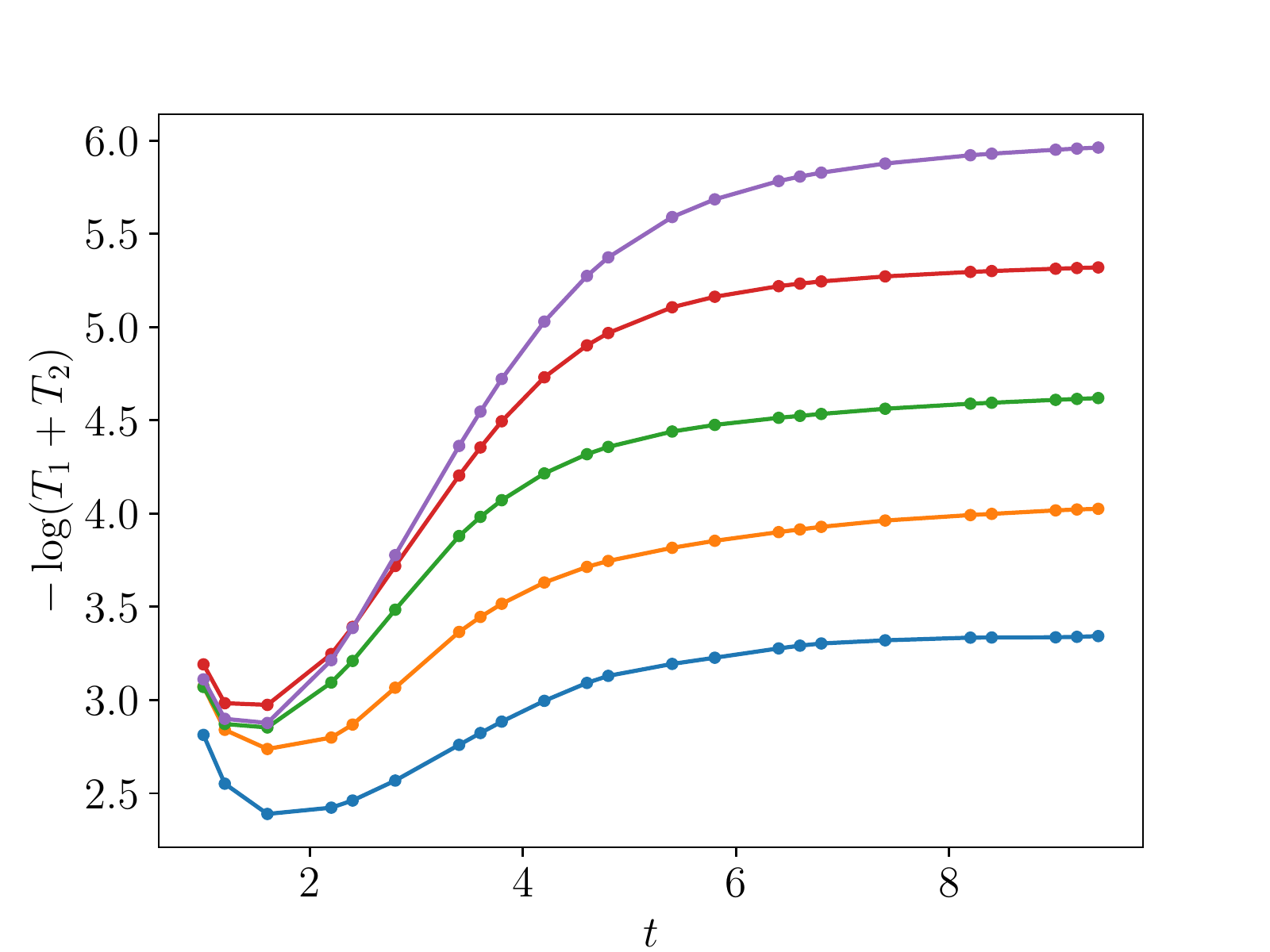} 
\includegraphics[width=7cm]{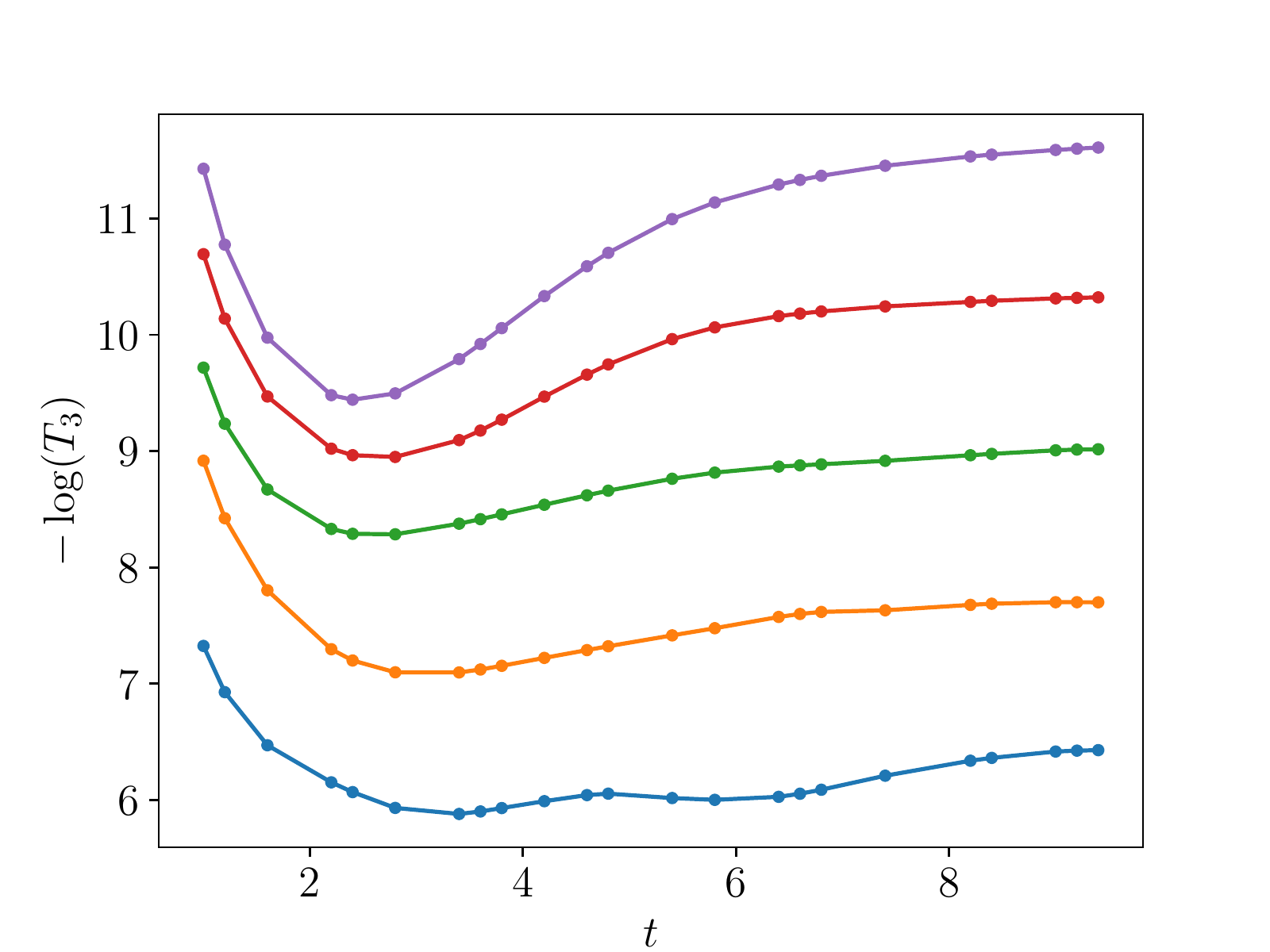} 
\includegraphics[width=7cm]{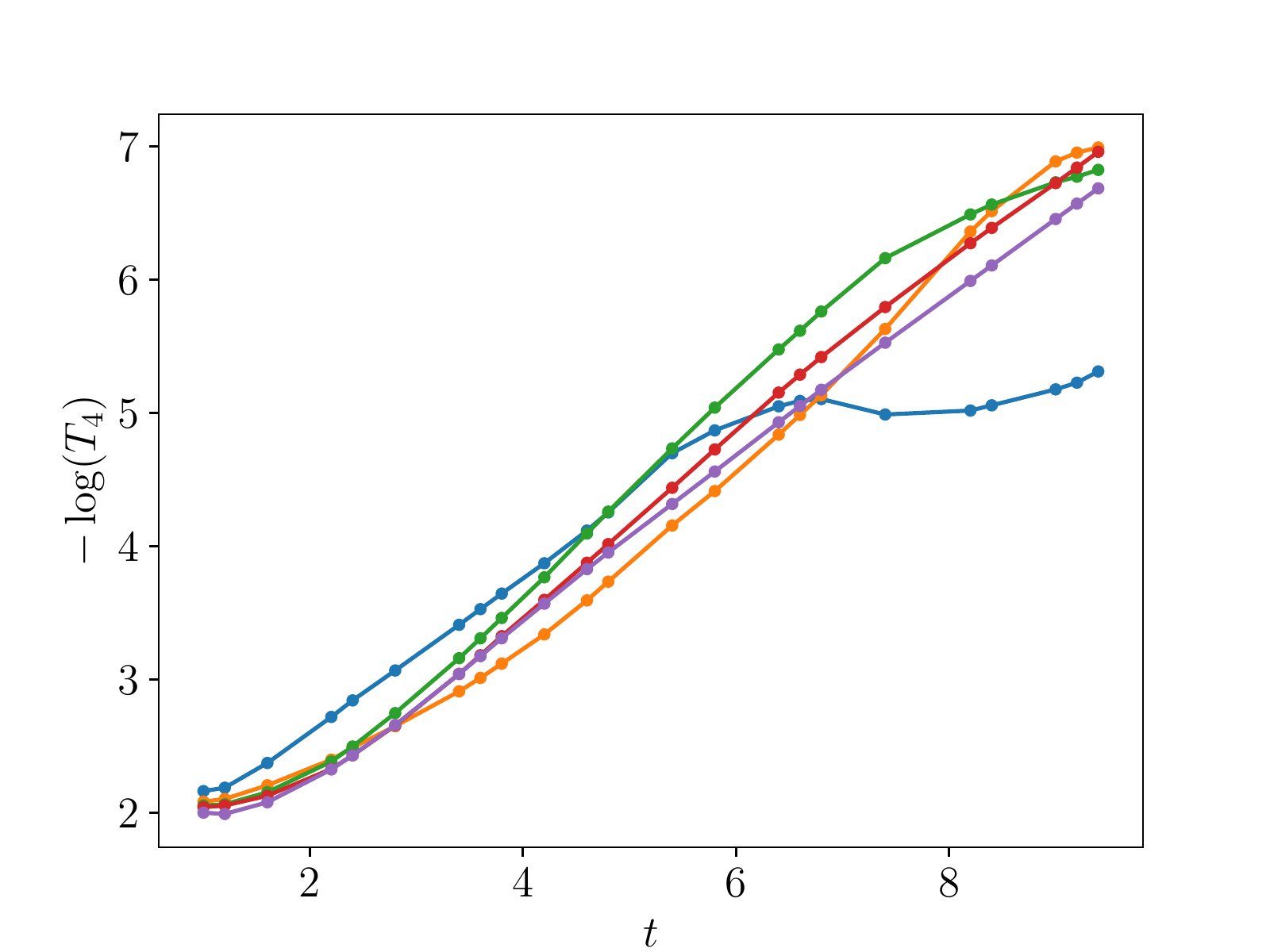} 
\caption{We consider states $\ket{i}\ket{j}$ with both $i$ and $j$ in the middle of the spectrum for different system sizes, and consider the dependence of each of the terms in \eqref{s2} on the system size from $L=12$ to $L=20$.}
\label{fig:scaling_log}
\end{figure}

%\subsection{Parameterizing the correlations} 

That there are correlations among factors in~\eqref{75} may not be entirely surprising given the discussion of Sec.~\ref{sec:fco}, as the index structure of the factors in~\eqref{75} is similar to that in~\eqref{ehn}. 
The numerical results on $T_4$ suggest that we can parameterize 
\bega \label{c4_p}
C^{(4)}(x_1y_1, ij; \,ij, x_2y_1 ;\, x_2 y_2, ij ; \, ij, x_1 y_2 ; \omega) \quad (x_1 \neq x_2 \neq i,\, y_1 \neq y_2 \neq j) \cr
= e^{- 2 S(\bar E_{ij})} \hat g(\bar E_{x_1y_1} - \bar E_{ij}, \bar E_{x_2y_2} - \bar E_{ij}, \bar E_{x_1y_2}-\bar E_{ij}, \bar E_{x_2y_1}- \bar E_{ij}; \omega )
\end{gather} 
where $\hat g$ is an order $\sO(1)$ smooth function which is supported only for energy differences of order $\sO(1)$. 
Since $T_4$ is a sum of $e^{2S}$ such terms, it is $\sO(1)$. We provide numerical evidence for~\eqref{c4_p} in Fig.~\ref{fig:C4_corr}. 
Comparing the index structure in $T_4$ and that in~\eqref{ehn}, we may wonder whether the correlations in~\eqref{c4_p} may solely be understood from that in~\eqref{ehn}. In other words, whether~\eqref{c4_p} can be factorized into a product of $C^{(2)}$ of the form~\eqref{ehn}. This turns out not to be the case, as we find that 
\begin{equation}
    \begin{aligned}
       C^{(4)}(x_1y_1, ij; \,ij, x_2y_1 ;\, x_2 y_2, ij ; \, ij, x_1 y_2 ; t) %&= \overline{C^{(2)}(x_1y_1,ij;ij,x_2y_1; t) C^{(2)}(x_2y_2,ij;ij,x_1y_2; t)} \\
        &\neq C^{(2)}(x_1y_1,ij;ij,x_2y_1; t) \; C^{(2)}(x_2y_2,ij;ij,x_1y_2; t) \,.
    \end{aligned}
\end{equation}
%\SV{[}The failure of factorization is consistent with the numerical observation that \SV{] In particular, while} 
In particular, while the left hand side appears to be purely real, the right hand side is generally complex. 

Given that the average of $T_4$ provides the dominant contribution to $S_{2,A} (t)$, and that the  evolution of $S_{2,A} (t)$ is constrained from locality, we  expect that the correlations reflected by \eqref{c4_p} are a consequence of locality.

\begin{figure}[!h]
    \centering
        \begin{subfigure}{0.49\textwidth}
        \centering
        \includegraphics[width = \textwidth]{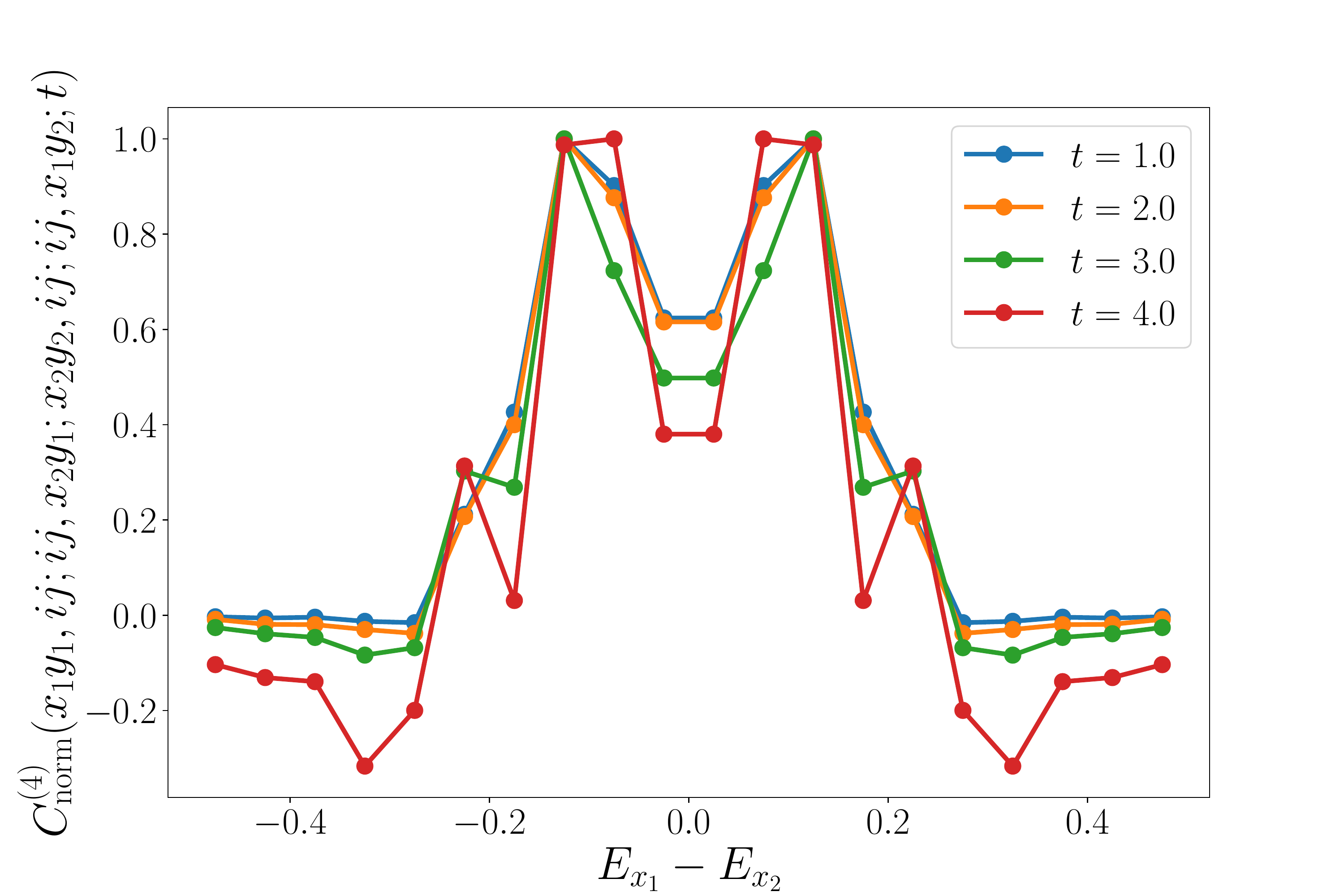}
        \caption{Late time saturation}
        \label{subfig:C4_Ex1x2_dependence}
    \end{subfigure}
    \begin{subfigure}{0.49\textwidth}
        \centering
        \includegraphics[width = \textwidth]{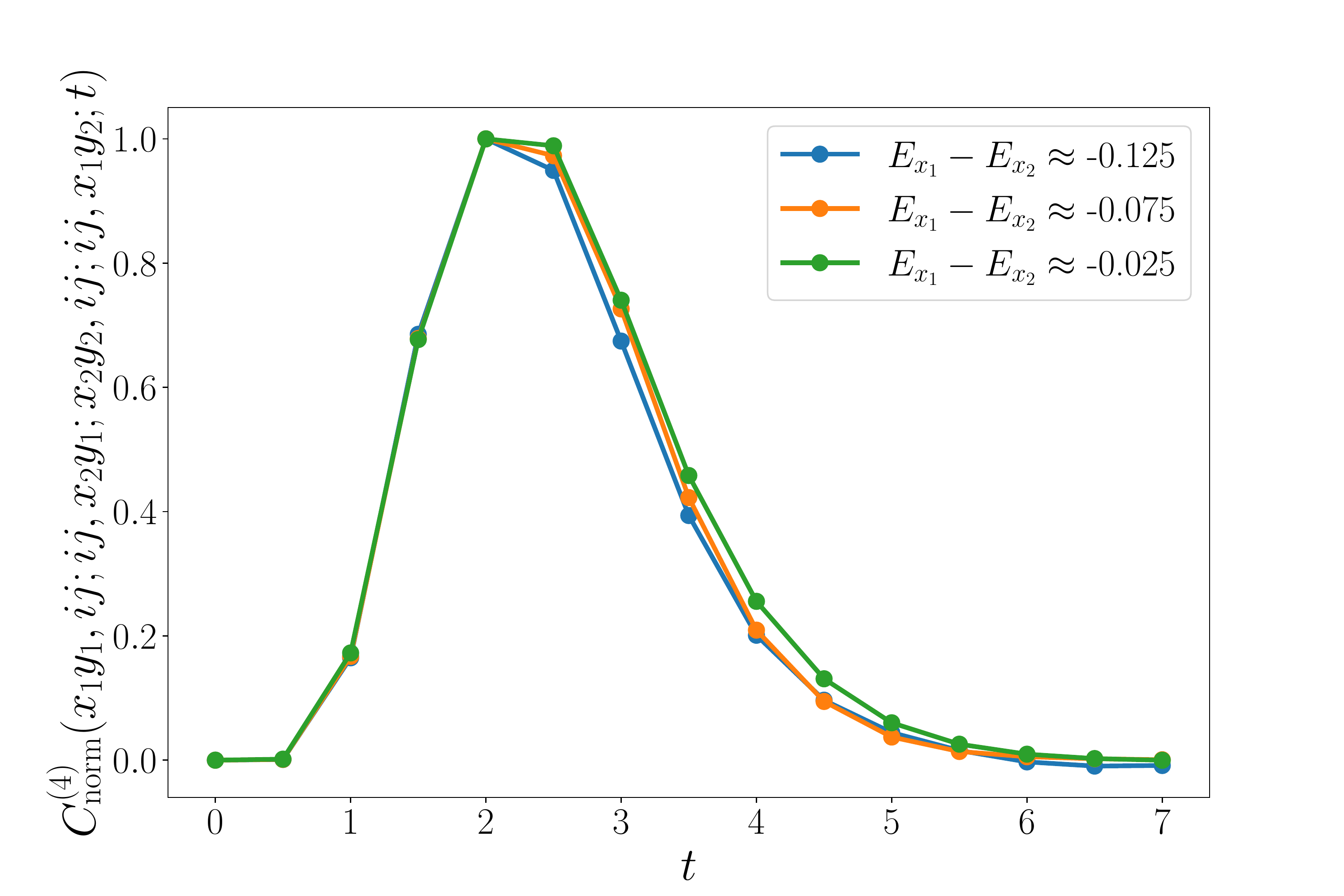}
        \caption{Time evolution}
        \label{subfig:C4_t_dependence}
    \end{subfigure}
    \caption{We numerically simulate $C^{(4)}(\ldots; t) = \int d \omega \,C^{(4)}(\ldots; \omega) e^{-i\omega t}$ for the index structure in \eqref{c4_p} in  the mixed-field Ising model at $L = 20$. It is computationally infeasible to study the dependence of $C^{(4)}$ on all energy indices. Therefore, we chose to fix $i,j$ near the middle of the spectrum, average $E_{y_1}, E_{y_2}, (E_{x_1} + E_{x_2})/2$ over microcanonical windows near the middle of the spectrum, and study the dependence of $C^{(4)}(\ldots; t)$ on $E_{x_1} - E_{x_2}$ and $t$. On the panel (a), we choose a few different values of $t$, and plot the dependence of $C^{(4)}(\ldots; t)$ on $E_{x_1} - E_{x_2}$, normalized by the maximum amplitude for each time. Up to fluctuations due to finite-size effects, the dependence is smooth and coherent across multiple choices of $t$. On panel (b), we plot $C^{(4)}(\ldots; t)$ for fixed $E_{x_1} - E_{x_2}$ and find a rapid decay with time, implying that its Fourier transform $C^{(4)}(\ldots; \omega)$ is a smooth function of $\omega$. Remarkably, although $C^{(4)}$ itself is a complex amplitude in general, the microcanonical average procedure outlined above renders the imaginary part negligible compared to the real part. %This is consistent with the fact that $T_4$, which is an integral of $C^{(4)}$ over various energy arguments, must be real. \SV{[But this was not required for $T_4$ to be real, so not sure we need to emphasize this last sentence.]} 
    }
    \label{fig:C4_corr}
\end{figure}

It is also worth noting that while $S_{2,A}(t)$ in Fig. \ref{fig:compare_20} does not seem to have any clear linear regime, the growth of $-\log T_4$ does appear to be linear. This suggests that in the thermodynamic limit, where $T_4$ becomes closer to $S_{2,A}$ for intermediate times, the growth of $S_{2,A}(t)$ could become linear. We also note that for the system size and the choice of couplings we look at, we do not see a regime where $S_{2,A}(t)$ is proportional to $\sqrt{t}$, as was previously found in random product states~\cite{rakovszky_2019_diffusiverenyi}. %\ZS{We should think about how to phrase this. It could be that $- \log T_4 \sim \sqrt{t}$ in the true thermodynamic limit.} \SV{
It would be interesting to push the numerics to larger system sizes and check for the existence of a regime where $S_{2,A}(t)$ and $-\log T_4$ could both grow as $\sqrt{t}$.%}

\section{Evolution of correlation functions and interplay with ETH}
\label{sec:corr}

The series of dynamical quantities considered in the previous sections probe increasingly complex correlations amongst the $c^a_{ij}$ coefficients. In this section, we build on these results and explore the interplay between $c^a_{ij}$ and the coefficients appearing in the eigenstate thermalization hypothesis, as revealed by correlation functions of local operators. 
%In this section, we consider the evolution of 
For the one-point function in the state $\ket{ij}$, correlations between $c^a_{ij}$ and the random numbers $R_{ab}$ appearing in the ETH for the full system eigenstates \eqref{eth_full} are necessary for explaining its initial value, while correlations amongst $c^a_{ij}$ imposed by unitarity (see Sec.~\ref{sec:phco}) play an important role in its time evolution. %This discussion also reveals correlations between the coefficients $c^a_{ij}$ and the  random numbers $R_{ab}$ appearing in operator expectation values in the eigenstate thermalization hypothesis for the full system eigenstates. We also discuss the properties of  $c^a_{ij}$ relevant for 
Furthermore, we find that %For the two-point functions, and four-point functions. In particular, 
the correlations among $c^a_{ij}$ which originate from locality (see Sec.~\ref{sec:ent}) do not affect two-point functions, but can play a role in four-point functions. In all correlation functions, we also derive constraints that relate the smooth function $h(\omega)$ in ETH \eqref{eth_full} to the eigenstate distribution $f(\omega)$. A more detailed physical understanding of these correlations is left for future work. 

\subsection{Evolution of one-point functions in $\ket{ij}$}

One commonly studied signature of thermalization 
is the evolution of expectation values of local observables in some out-of-equilibrium state to their thermal values. For the product of eigenstates, we consider the quantity 
\be 
\braket{ij| W(t) |ij} \label{oij}
\ee
for a local operator $W$. Let us say $W$ is in the $A$ half of the system. 

We will assume below that we can use the eigenstate thermalization hypothesis both in the full system and in each extensive subsystem. Let us state both versions of the ETH precisely. For two eigenstates $\ket{a}$ and $\ket{b}$ of the full system, we have 
\be \label{eth_full}
\braket{a|W|b} = W(E_a/V) \delta_{ab} + e^{-S(\bar E_{ab})/2} \,  R^{W}_{ab} \,  h_W(\omega_{ab}) 
\ee
where $\bar E_{ab} = (E_a+E_b)/2$,~ $\omega_{ab}= E_a - E_b$,~ $W(\epsilon)$ and $h_W(\omega)$ are $\sO(1)$ smooth functions of their arguments, and the  $R^{W}_{ab}$ are %uncorrelated gaussian 
random variables with mean zero and variance 1, 
\be 
\label{r_var1}
\overline{R^{W}_{ab} {R^{W'}_{cd}}^{\ast}} = \delta_{ac} \delta_{bd} \delta_{WW'} \, . 
\ee
Similarly, for two eigenstates $\ket{i_1}$ and $\ket{i_2}$ of $H_A$, we have 
\be \label{eth_half}
\braket{i_1 |W |i_2} = W(E_{i_1}/V_A)\delta_{i_1i_2} + e^{-S_A(\bar E_{i_1i_2})/2} R^{A, W}_{i_1 i_2} h_{A, W}(\omega_{i_1 i_2}) 
\ee
where the notation is similar to that in \eqref{eth_full}, and in particular $W(\epsilon)$ is the same function of the energy density that appears in \eqref{eth_full}, and  $S_A$ is the thermodynamic entropy in $A$. 
Note the additional $A$ label in $R$ and $h$ to indicate the left half of the system. 

From \eqref{eth_half}, at $t=0$, \eqref{oij} is equal to the thermal expectation value in the microcanonical ensemble at energy density $E_i/V_A$: 
\begin{align} \label{eth_1}
\braket{ij| W |ij} &= \braket{i | W|i} = W(E_i/V_A) + e^{-S_A(E_i)/2} R^{A, W}_{ii} h_{A, W}(0) \, \nn
& = W(E_i/V_A) + \sO(e^{-S/4})
\end{align}
Let us now understand the interplay  of the properties of  $c^a_{ij}$ with the ETH in the evolution of $\braket{ij|W(t)|ij}$. First, instead of using the half system ETH as in \eqref{eth_1}, let us expand the expression at $t=0$ in terms of the full system eigenstates, and use the full system ETH as follows: 
\be \label{wab}
\braket{ij|W|ij} = \sum_{ab} c^a_{ij} {c^b_{ij}}^{\ast} W_{ab} = \sum_a |c^a_{ij}|^2 W(E_a/V) + \sum_{ab} c^a_{ij} {c^b_{ij}}^{\ast} e^{-S(\frac{E_a + E_b}{2})}R^W_{ab} h_W(\omega_{ab}) 
\ee

Comparing \eqref{eth_1} and \eqref{wab}, ignoring the exponentially suppressed contribution in \eqref{eth_1}, and  expressing $|c^a_{ij}|^2$ in terms of $f(\omega)$, we have 
\be 
 \sum_{ab} c^a_{ij} {c^b_{ij}}^{\ast} e^{-S(\frac{E_a + E_b}{2})}R^W_{ab} h_W(\omega_{ab})  = W\le(\frac{E_i}{V_A}\ri) - \int d\omega e^{\beta \omega/2} f(\omega) W\le(\frac{\bar E_{ij}+ \omega}{V}\ri)
\ee
Assuming $E_i - E_j = \sO(1)$, the RHS is $\sO(1/V)$. If we assume that $R^W_{ab}$, the matrix elements of the full system ETH, are uncorrelated with $c^a_{ij}$, then the LHS can be seen as a sum of $\sO(e^{2S})$ numbers of $\sO(e^{-2S})$ with random phases, so it is $\sO(e^{-S})$. We therefore see that we must have correlations between $c^a_{ij}$ and the matrix elements of the full system ETH such that the  LHS is instead $\sO(1/V)$. 

Let us now see whether the $c^a_{ij}$ also need to be correlated with the matrix elements $R^{A,W}_{i_1i_2}$ for the ETH of the $A$ subsystem. To see this, let us expand $W_{ab}$ further in \eqref{wab} in terms of subsystem eigenstates to write 
\be 
\braket{ij|W|ij} = \sum_{a,b, k, l, k'} c^a_{ij} {c^b_{ij}}^{\ast} c^a_{kl} c^b_{k'l} W_{kk'}
\ee
Then using ETH for the $A$ subsystem for $W_{kk'}$, we have  
\be \label{d6}
\braket{ij|W|ij} = \sum_{a,b,l,k} c^a_{ij} {c^b_{ij}}^{\ast} c^b_{kl} {c^a_{kl}}^{\ast}  W(E_k/V_A) + \sum_{a,b, l, k, k'} c^a_{ij} {c^b_{ij}}^{\ast} c^a_{kl} c^b_{k'l} e^{-S_A(\bar E_{kk'})/2} h_{A,W}(\omega_{kk'}) R^{A, W}_{kk'}
\ee
Let us expand the first term using the statistical model for $c^a_{ij}$ in  \eqref{c2_ru}, 
\begin{align} 
\sum_{a,b,l,k} c^a_{ij} {c^b_{ij}}^{\ast} c^b_{kl} {c^a_{kl}}^{\ast}  W(E_k/V_A) &= \sum_{a,b} |c^a_{ij}|^2 |c^b_{ij}|^2 W(E_i/V_A) - \sum_{a,b} W(E_i/V_A) k_{a, b, ij, ij}\nn
& + \sum_{a,  kl \neq ij}|c^a_{ij}|^2 |c^a_{kl}|^2 W(E_k/V_A) -\sum_{a, b, kl \neq ij} k_{a,b, ij,kl} W(E_k/V_A) \nn
&= W(E_i/V_A) + \sum_{kl \neq ij} W(E_k/V_A) (\sum_a |c^a_{ij}|^2 |c^a_{kl}|^2 - \sum_{a,b}k_{a,b,ij,kl}   ) \nn
& + \sO(e^{-S}) \nn
& = W(E_i/V_A) + \sO(e^{-S}) \label{523}
 \end{align}
If we assume that the coefficients $R^{A,W}_{kk'}$ for the ETH in the $A$ subsystem are uncorrelated with $c^a_{ij}$, the second term in \eqref{d6} gives a contribution of order $\sO(e^{-S/2})$. Since from \eqref{523}, the first term in \eqref{d6} is sufficient to explain \eqref{eth_1}, we find that there is no need for correlations between  $R^{A,W}_{kk'}$ and $c^a_{ij}$. 
 
%So we can assume that the $c^a_{ij}$ are uncorrelated with the $R^L_{kk}$, in which case the second term in \eqref{d6} gives a contribution of order $\sO(e^{-S/2})$, to correctly get $\braket{ij|O|ij} = W(E_i/V) + \sO(e^{-S/2})$. 

Let us now understand the time-evolution of the expectation value in $\ket{ij}$ in terms of properties of the $c^a_{ij}$. We again use an expansion similar to \eqref{d6}, %now including the time-dependent factor of $e^{i(E_a-E_b)t}$ in the sum, 
\begin{align} 
\braket{ij|W|ij} &= \sum_{a,b,l,k} e^{i(E_a-E_b)t}c^a_{ij} {c^b_{ij}}^{\ast} c^b_{kl} {c^a_{kl}}^{\ast}  W(E_k/V_A) \nn
& \quad \quad \quad \quad + \sum_{a,b, l, k, k'} e^{i(E_a-E_b)t} c^a_{ij} {c^b_{ij}}^{\ast} c^a_{kl} c^b_{k'l} e^{-S_A(\bar E_{kk'})/2} h_{A,W}(\omega_{kk'}) R^{A, W}_{kk'}\label{d61}
\end{align}
Based on the above discussion of the 
$t=0$ behaviour, let us ignore the second line of the above expression. Then using the model of \eqref{c2_ru} again, we have 
\begin{align}
\braket{ij| W(t)|ij} &=  W(E_i/V_A) \sum_{a,b} |c^a_{ij}|^2 |c^b_{ij}|^2 e^{i(E_a-E_b)t}+ \sum_{a, kl\neq ij} |c^a_{kl}|^2 |c^a_{ij}|^2 W(E_k/V_A)
\nn
& - \sum_{a,b, kl\neq ij} k_{a,b, ij,kl} e^{i(E_a-E_b)t} W(E_k/V_A) \nn
&= W(E_i/V_A) \, P(t) + \int d E_a d \bar E_{kl} e^{\frac{\beta}{2} (\bar E_{kl} - \bar E_{ij}) } f(E_a - \bar E_{kl}) f(E_a - \bar E_{ij}) W(E_k/V_A) \nn
&- \int dE_a d E_b d \bar E_{kl} e^{i \omega_{ab} t} e^{-\frac{3}{4} \beta \bar \omega + \ha\beta \omega'} g(\omega_{ab}, \bar \omega, \omega') W(E_k/V_A) 
\label{524}
\end{align} 
where $\beta=S'(E)|_{E= \bar E_{ij,kl}/2}$,~ $\omega_{ab}= E_a-E_b$,~ $\bar \omega = \bar E_{ij}-\bar E_{kl}$,~ $\omega'=\frac{E_{ab}-\bar E_{ij, kl}}{2}$,~ and $P(t)$ is the return probability for the state $\ket{ij}$. 

All three terms in \eqref{524} are $\sO(1)$. At $t=0$, the second and third terms combine to zero, and we have $\braket{ij| W(t=0)|ij}= W(E_i/V_A)$. The equilibrium value is given by the second term. The non-trivial time-evolution of the operator expectation value  is thus determined by both $f$ and $g$. Without including the correlations from unitarity, we would only get the first two terms in \eqref{524}.

The evolution of one-point functions in out-of-equilibrium initial states during thermalization was previously studied using spectral properties in \cite{reimann_fast}. The initial states considered there have a simple representation in some fixed reference basis, in which the representation of the operator $W$ is also simple. It is assumed that the energy eigenbasis is related to this reference basis by a random unitary. The expression obtained for  the time-evolution of the operator expectation value there is somewhat simpler than \eqref{524} due to the lack of energy constraints in the relation between the reference basis and the energy eigenbasis for that setup.

\subsection{Evolution of thermal correlation functions of local operators}

Let us consider   the relation between the properties of the coefficients $c^a_{ij}$ and  the behaviour of thermal correlation functions of local operators. 

We will consider the two-point function of an operator $W_x$ at site $x$ with itself, 
\be \label{2p}
\braket{W_x(t) W_x} = \frac{1}{Z_{\beta}} \text{Tr}[e^{-\beta H}W_x(t) W_x]  
\ee
and 
the out-of-time-ordered correlation function (OTOC) between $W_x$ and  $Z_y$, 
\be \label{4p}
\braket{W_x(t) Z_y W_x(t) Z_y} = \frac{1}{Z_{\beta}} \text{Tr}[e^{-\beta H} W_x(t) Z_y W_x(t) Z_y] \, . 
\ee
In a local chaotic system, both correlation functions defined above should be $\sO(1)$ at $\sO(1)$ times, and should eventually decay to a small value. To simplify the discussion, we will consider both correlators at infinite temperature $\beta=0$ below.~\footnote{%\SV{
 In strongly-correlated systems without other energy scales, the two-point function decays on the time scale $\beta$. But for more general local chaotic systems, we expect that the two-point function should still be $\sO(1)$ at $\sO(1)$ times in the $\beta \rightarrow 0$ limit. The decay timescale is controlled by UV couplings rather than $\beta$.%}
 }  For this case $Z_{\beta}= d$, with $d$  the Hilbert space dimension. For simplicity we will assume that $\text{Tr}[W_x]=\text{Tr}[Z_y]=0$, and $\text{Tr}[W_x^2]=\text{Tr}[Z_y^2]=d$. We will find below that \eqref{2p} is not sensitive to correlations among $c^a_{ij}$, but \eqref{4p} may be affected by them at leading order.

In all equations below, there is no sum on repeated indices unless explicitly stated. 

Let us first consider the evolution of the two-point function \eqref{2p}. Expanding in the energy eigenbasis of the full system, we have 
\begin{align} \label{2p_e}
\braket{W(t) W} &= \frac{1}{d} \sum_{a,b} e^{i(E_a-E_b)t} W_{ab} W_{ba}\, . 
\end{align}
Let us use the ETH for the full system from \eqref{eth_full}, and assume for simplicity that the microcanonical expectation value $W(E)=0$ for any $E$ in the middle of the spectrum. Then 
using \eqref{eth_full} in \eqref{2p_e}, we find 
\be 
\braket{W(t)W} = \int d\omega\, e^{i \omega t} \, h_W(\omega)^2 
\ee
As expected, this quantity is $\sO(1)$ at $\sO(1)$ times. 

We can also relate the two-point functions to the $c^a_{ij}$. For an operator $W$ in the $A$ half of the system, 
\be 
W_{ab} W_{ba} = \sum_{\substack{i_1, i_2, i_3, i_4\\j_1, j_2 }} {c^a_{i_1j_1}}^{\ast} c^a_{i_4j_3} {c^b_{i_3j_3}}^{\ast} c^b_{i_2j_1} W_{i_1i_2} W_{i_3i_4} \,.
\ee
Using ETH for the matrix elements $W_{i_1 i_2}$ in the subsystem eigenstates, we have  
\be  \label{half_eth}
W_{i_1 i_2} W_{i_3 i_4} = e^{-S_{A}(\bar E_{i_1i_2})} h_{A, W}(\omega_{i_1 i_2})^2\delta_{i_1 i_4}\delta_{i_2i_3}\, . 
\ee
Assuming, like in the previous subsection, that
the coefficients $c^a_{ij}$ are uncorrelated with these matrix elements for the subsystem $A$, we have  
\begin{align}  \label{ab_exp}
W_{ab} W_{ba} =   \sum_{i_1, i_2, j_1, j_3} e^{-S_A(\bar E_{i_1i_2})} h_{A,W}(\omega_{i_1i_2})^2 {c^a_{i_1j_1}}^{\ast} c^a_{i_1j_3} {c^b_{i_2j_3}}^{\ast} c^b_{i_2j_1} \,.
\end{align}
From the model \eqref{c2_ru} for the coefficients, the only non-zero contributions in this sum are from cases where (i) $j_1=j_3$ or (ii) $i_1=i_2$. The contribution from the second case is suppressed by $\sO(e^{-S})$ compared to the first, so we have 
\begin{align}  \label{ab_exp2}
\overline{W_{ab} W_{ba}} & \approx  \sum_{i_1, i_2, j} e^{-S_A(\bar E_{i_1i_2})} h_{A,W}(\omega_{i_1i_2})^2\overline{|c^a_{i_1j}|^2} \, \overline{|c^{b}_{i_2j}|^2}  \,.
\end{align}
Now each $|c^a_{ij}|^2\approx e^{-S((E_a+\bar E_{ij})/2)}f(E_a -\bar E_{ij})$, so  that \eqref{ab_exp2} has the right scaling of $e^{-S}$. Note that even if there were additional correlations beyond \eqref{c2_ru} such that a general term in \eqref{ab_exp} were comparable to the terms with $i_1=i_2$, i.e. 
\be 
\overline{{c^a_{i_1j_1}}^{\ast} c^a_{i_1j_3} {c^b_{i_2j_3}}^{\ast} c^b_{i_2j_1}} \sim \sO(e^{-3S}) \, , 
\ee
the total contribution from such terms is still suppressed by $\sO(e^{-S/2})$ compared to \eqref{ab_exp2}. Hence, the two-point function is not sensitive to correlations among the $c^a_{ij}$. By comparing \eqref{ab_exp2} and the average of this quantity that we get using the full system ETH \eqref{eth_full}, 
\be 
\overline{W_{ab} W_{ba}} = e^{-S(\bar E_{ab})} h_W(\omega_{ab})^2 \, , 
\ee
we can relate $h_W$ to $h_{A,W}$ and $f$.

Let us now consider the evolution of the OTOC \eqref{4p}. Expanding in the energy eigenbasis of the full system,  %and assuming \eqref{fpt2}, we find 
\begin{align} 
\label{fpt3}
\braket{W(t) Z W(t) Z} &= \frac{1}{d} \sum_{a,b, c, d} e^{i(E_a -E_b + E_c -E_d)t} \, W_{ab} Z_{bc} W_{cd} Z_{da}   %\\ %\label{fpt4}
%& = \frac{1}{d} \int dE e^{-S(E)} h_A(\omega=0)^2 h_B (\omega=0)^2
\end{align} 
%The result for the four-point function thus appears to be time-independent and $\sO(e^{-2S})$. 
%This can be corrected by modifying \eqref{fpt2} to the following ansatz, 
If we assume that the $R^W_{ab}$, $R^Z_{ab}$ in \eqref{eth_full} are uncorrelated Gaussian random variables, this quantity appears to be of order $\sO(e^{-2S})$ for any $\sO(1)$ time. 
We can see that it is $\sO(1)$ at $\sO(1)$ times from the following generalized version of the ETH ansatz, 
proposed in \cite{foini2019_ETH_OTOC}~\footnote{\cite{foini2019_ETH_OTOC} considers the OTOC of an operator $W$ with itself, so the ansatz is expressed in terms of matrix elements only of a single operator. We extend this to the case of two distinct operators, along the lines of Appendix A of \cite{chan_eig}.}: 
\be \label{new4}
W_{ab} Z_{bc} W_{cd} Z_{da} = e^{-3S(\bar E)} \, h^{(4)}_{W,Z}(\omega_{ab}, \omega_{bc}, \omega_{cd})
\ee
where $\bar E=(E_a + E_b + E_c + E_d)/4$, and $h^{(4)}$ is some smooth $\sO(1)$ function.   

%With the assumption \eqref{new4}, we can see that \eqref{fpt3} becomes time-dependent and is $\sO(1)$ at $\sO(1)$ times. %\ZS{Here we should perhaps emphasize the dependence of $h^{(4)}$ on $A,B$}

%Next, let us similarly expand the product of matrix elements appearing in the generalized ETH \eqref{new4} 
Let us now understand the contributions to \eqref{fpt3} from the $c^a_{ij}$.  Let the operator  operator $W$ be in the left half of the system, and $Z$ in the right half.  
\be 
W_{ab} Z_{bc} W_{cd} Z_{da}= \sum_{\substack{i_1, ..., i_6 \\j_1, ..., j_6}} c^a_{i_6j_6} {c^a_{i_1 j_1}}^{\ast} \, c^b_{i_2j_1} {c^b_{i_3j_2}}^{\ast} c^c_{i_3j_3} {c^c_{i_4j_4}}^{\ast} c^d_{i_5j_4} {c^d_{i_6j_5}}^{\ast} W_{i_1 i_2} Z_{j_2j_3} W_{i_4 i_5} Z_{j_5j_6} \,.
\ee
Assume again that the $c^a_{ij}$ are uncorrelated with $W_{i_1 i_2}$, $Z_{j_1 j_2}$.  Further, since $W$ and $Z$ are in different parts of the system, let us assume that their matrix elements in the subsystem eigenstates are uncorrelated.~\footnote{This assumption is reasonable despite the generalized ETH if we assume we do not have exact translation-invariance, or take the sizes of $A$ and $B$ to be different.} 
Then using \eqref{half_eth} and its analog for $Z$, 
\begin{align}  
\overline{W_{ab} Z_{bc} W_{cd} Z_{da}}= \sum_{\substack{i_1, i_2, i_3, i_6 \\j_1, j_2, j_3, j_4 }} & \overline{c^a_{i_6j_2} {c^a_{i_1 j_1}}^{\ast} \, c^b_{i_2j_1} {c^b_{i_3j_2}}^{\ast} c^c_{i_3j_3} {c^c_{i_2j_4}}^{\ast} c^d_{i_1j_4} {c^d_{i_6j_3}}^{\ast}} \nonumber \\
&\times e^{-S_{A}(\bar E_{i_1i_2})-S_B(\bar E_{j_2j_3})} h_{A, W}(\omega_{i_1 i_2})^2h_{B, Z}(\omega_{j_2 j_3})^2\,. \label{abcd}
\end{align}
Now if we assume that the $c^a_{ij}$ are uncorrelated random variables, then the only contribution to \eqref{abcd} for generic values of $a,b,c,d$ is 
\begin{align}  
\overline{W_{ab} Z_{bc} W_{cd} Z_{da}} ~~~~~~\myeq ~~~~~~ \sum_{\substack{i_1, i_2\\j_1, j_3 }} & \overline{|c^a_{i_1 j_1}|^2} \,  \overline{|c^b_{i_2j_1}|^2} \,  \overline{|c^c_{i_2j_3}|^2} \, \overline{|c^d_{i_1j_3}|^2} \nonumber \\
&\times e^{-S_{A}(\bar E_{i_1i_2})/2)-S_B(\bar E_{j_2 j_3})} h_{A, W}(\omega_{i_1 i_2})^2h_{B, Z}(\omega_{j_1 j_3})^2 \,.\label{abcd_2}
\end{align}
This expression scales as $\sO(e^{-3S})$,  so the uncorrelated approximation for $c^a_{ij}$ is in principle sufficient to account for \eqref{new4}. It is possible that in addition to this contribution, correlations among the $c^a_{ij}$ can also contribute at leading order to the four-point function. For example, note that the pattern of indices appearing in \eqref{abcd} is similar to the pattern appearing in the expression of the time-evolved second Renyi-entropy, \eqref{s2full_1}. We can write 
\begin{align} 
\overline{\braket{W(t) Z W(t) Z}} = \frac{1}{d}   \sum_{\substack{i_1, i_2, i_3, i_6 \\j_1, j_2, j_3, j_4 }} &  \,   C^{(4)}(i_6j_2, \, i_1 j_1; \, i_2j_1, \, i_3j_2; \, i_3j_3, \, i_2j_4; \, i_1j_4, \, i_6j_3; t) \nonumber \\
&\times e^{-S_{A}(\bar E_{i_1i_2})-S_B(\bar E_{j_2j_3})} h_{A, W}(\omega_{i_1 i_2})^2h_{B, Z}(\omega_{j_2 j_3})^2 \,.\label{abcd_3}
\end{align}
The argument of $C^{(4)}$ in \eqref{c4_p} is a special case of that in \eqref{abcd_3}. So if  the more general $C^{(4)}$ appearing in \eqref{abcd_3} also has the scaling of the RHS of \eqref{c4_p}, then we can get a contribution similar to $T_4$ in the four-point function, which competes with \eqref{abcd_2}. We leave a detailed numerical study of these contributions to future work.

One interesting question is how \eqref{abcd_3} can give rise to the dependence of the OTOC on the distance between the operators $A$ and $B$. The contribution $C^{(4)}$ does not contain this information, so it may be encoded in the functions $h_{A,W}$ and $h_{B,Z}$, which may have some dependence on the distance between the operator and the boundary of the subsystem. %Another possibility is that our assumption for the uncorrelatedness between the matrix elements $A_{i_1i_2}$ and $B_{j_1j_2}$ does not hold, and the distance is encoded in the correlations among them. 

\section{Conclusions and discussion}
\label{sec:disc}

In the last several decades, a lot of progress has been made in understanding universal properties of the energy eigenvalues and eigenstates of chaotic quantum many-body systems. The powerful framework of random matrix theory has driven much of this progress.  However, random matrix theory ignores the local structure of interactions in realistic Hamiltonians, and fails to capture universal dynamical properties resulting from locality, such as ballistic spreading of operators and linear growth of entanglement entropy. An important open question is about how these dynamical phenomena emerge from spectral properties, which in principle underlie all features of the dynamics. 

In this work, we addressed this question in a family of chaotic spin chain models in one spatial dimension. We studied the time-evolution of a few different  quantities during the thermalization  of a product of eigenstates of two extensive subsystems, including the return probability, transition probability, Renyi entropy, and correlation functions. We attributed the dynamics of these quantities to various  collective properties of the coefficients relating the energy eigenbasis of the full Hamiltonian to products of eigenstates of the subsystem Hamiltonians. The magnitudes of the coefficients were found to take a simple universal form, given by a Lorentzian at small energy differences that crosses over to an exponential decay for large energy differences. The Lorentzian form of the magnitude explained the exponential decay of the return probability with time.  Correlations among these coefficients, which are often assumed to be negligible, were found to play a key role in the evolution of the transition probability and the entanglement entropy during thermalization.

%We divide the system into two extensive subsystems $A, B$ and decompose the full Hamiltonian into $H = H_A + H_B + H_{AB}$, where $H_A, H_B$ are supported in the complementary subsystems and $H_{AB}$ encode interactions between them.

%Motivated by this sharp contradiction, we searched for novel structures in the coefficients $c^a_{ij}$ that deviate from ergodic bipartition and explored their consequences for quantum dynamics. Through a combination of numerical and analytic arguments, we showed that when $||H_{AB}|| = \mathcal{O}(1)$, the amplitudes $|c^a_{ij}|^2$ have a universal Lorentzian dependence on $E_a - E_{ij}$ which translates to an exponential decay in the return probability $\left|\bra{ij} e^{-iHt}\ket{ij}\right|^2$ as a function of time. The phases of $c^a_{ij}$ show a more intricate structure. By demanding consistency with local entanglement growth, we showed that the phases of $c^a_{ij}$ must have nontrivial higher-point correlations reminiscent of the correlations between matrix elements of local operators $O_{ab}$ which are required for a finite butterfly velocity in OTOCs. The identification of these phase correlations allows us to isolate a particular term in the expression for the second Renyi entropy $S_2(t)$ which is responsible for the asymptotic entanglement growth rate in the thermodynamic limit. Focusing on this contribution provides more reliable estimates of the entanglement velocity that are less sensitive to finite-size effects. 

One important direction for future work is to see whether the spectral properties we identified in this paper also hold in other examples of local chaotic quantum many-body systems. For example, in the SYK chain model \cite{syk_chain}, it may be possible to analytically study various properties of the coefficients $c^a_{ij}$. Such studies could also provide a better understanding of  the somewhat mysterious correlations from locality discussed in Sec.~\ref{sec:ent}, which govern the evolution of the second Renyi entropy. 

The coefficients $c^a_{ij}$ can also be studied in the context of quantum field theories, with the caveat that their definition would depend on the UV cutoff. For example, one could consider the overlap between the eigenstates of two semi-infinite  boundary conformal field theories (BCFTs) and the eigenstates of a single CFT where they are joined together. The evolution of the von Neumann entropy on joining together two pure states in BCFTs  has previously been studied in (1+1)D in \cite{joining_1, joining_2}.   To address the questions posed in this paper, one would need to extend these calculations to the case where the two pure states are excited eigenstates, and to quantities like the return probability and transition probability. The holographic dual of BCFTs is well-understood \cite{karch_randall, tak_1, tak_2}, and it would be interesting to see what features of the bulk theory  underlie the behaviour of $P(t)$, $P_{ij,xy}(t)$, and $c^a_{ij}$. %All such quantities are can be defined in this context only by putting a UV cutoff on the theory, and correspondingly the coefficients $c^a_{ij}$ would be cutoff-dependent. Despite this, we can ask whether their properties in chaotic CFTs are similar to those in the discrete systems considered here. 

The dependence of the various quantities discussed above on the UV cutoff is due to the fact that the full  Hilbert space $\sH_{AB}$ in a continuum QFT does not factorize into $\sH_A\otimes \sH_B$. One interesting conceptual question is whether it is possible to capture some aspects of the collective behaviour of $c^a_{ij}$ with quantities that are well-defined in the continuum  using algebraic QFT \cite{takesaki2002, witten2018}.

Another important  direction is the generalization of our results to higher dimensions. 
For the magnitude of the coefficients, our arguments for a Lorentzian form of the eigenstate distribution function in Sec.~\ref{subsec:derivation_f_form} seem to be equally applicable to one or higher spatial dimensions.  
On the other hand, the argument of~\cite{murthy_2019_structure} suggests a Gaussian form. We discuss some limitations of the argument for a Gaussian form in Appendix \ref{app:MSlimitation}, but since both arguments involve approximations it would be useful to understand the behaviour explicitly in a concrete model in higher dimensions.
Similarly, understanding the nature of the phase correlations and their contribution to linear growth of entanglement entropy in higher dimensions is an important question.  

Another exciting direction is to develop protocols to test the predictions of this paper in experimental setups in the near term. As mentioned in the introduction, $P(t)$ is an example of a Loschmidt echo, which has been indirectly probed for other sets of initial states in NMR experiments  \cite{echo}. Recently, algorithms have also been developed for measuring the return probability for simple initial states using interferometry and other techniques that can be practically implemented in systems like trapped ion simulators and Rydberg atoms in optical lattices  
(see Appendix A of \cite{cirac} and references therein). 
While it is not easy to prepare products of  excited energy eigenstates as initial states in experimental setups, some algorithms have recently been developed to apply a ``filtering" operation to a product state in order to prepare a pure state with an arbitrarily small energy variance \cite{ent_var, cirac}. To probe the properties of the eigenstate distribution function, it may be sufficient to prepare states with some small $\sO(1)$ variance $\delta^2$.  For instance, consider the quantity 
\be  Q_{\delta, ij} = \bra{i_{\delta}}_A\bra{j_{\delta}}_B e^{-i H t} \ket{i_{\delta}}_A\ket{j_{\delta}}_B , \quad \quad  \ket{i_{\delta}} = \sum_{E_{i'} \in [E_i -\delta, E_i + \delta]} d_{i'} \ket{i'} \,   
\ee
where $d_{i}$ are random coefficients, which can be obtained for instance from taking random initial product states in the filtering protocol of 
\cite{ent_var, cirac}. Then by averaging over different realizations of the $d_i$ we find that 
\be 
\overline{Q_{\delta, ij}} \approx \braket{ij| e^{-iHt}|ij} \sim e^{-\Gamma t} \, . 
\ee
Making these ideas more precise and implementing them in realistic systems  should provide an interesting challenge for future work. 

%In~\eqref{ptdef}, $P(t)$ can be seen as being driven by $H_{AB}$, and since $H_{AB}$ is localized at their interface, it may be considered as a small perturbation of $H_A + H_B$.  

Finally, one can try to understand the dynamics of more general initial states in terms of collective properties of the coefficients $c_{\psi}^a$ defined in \eqref{cp}. As discussed in the introduction, for product states the return probability has the Gaussian form in \eqref{prod_return}. On expressing $S_{2,A}(t)$ for a product state in terms of $c^a_{\psi}$, if we assume that the $c^a_{\psi}$ are uncorrelated random variables, we find that $S_{2,A}(t) \approx -\log |\braket{\psi| e^{-i H t}|\psi}|^2$, precisely as in Sec.~\ref{sec:ent}. This would give the unphysical prediction that the second Renyi entropy for such states grows quadratically at a rate proportional to the system size, unlike the linear and slower than linear growth at a finite rate observed in \cite{rakovszky_2019_diffusiverenyi}. Hence, there must be significant correlations  among the $c^a_{\psi}$ giving rise to the bounded growth of entanglement, which should be characterized more carefully in future work. 

%From the analog of equation \eqref{s2full}  

%There are a number of directions in which this work can be fruitfully generalized. First of all, the dynamical quantities considered in this paper are return probabilities and entropy growth profiles starting from factorized eigenstates $\ket{ij}$. These factorized eigenstates are locally equilibrated within subsystems $A,B$ but globally out-of-equilibrium. It is important to understand if and how the universal structures we found generalize to other types of out-of-equilibrium pure states (e.g. quenched states that can be efficiently prepared in real experiments). As explained in Sec.~\ref{sec:32}, Lorentzian amplitude structure + random phases produce one term in the expression for the Renyi entropy which competes with the term coming from phase correlations. An interesting open question is whether there exists families of states for which phase correlations are suppressed and the amplitude structure alone control the entanglement growth. For such states, quantities other than entanglement entropy would also likely be much easier to understand analytically. 

%\SV{Shreya's earlier bullet points: How to formulate in field theories and holography. Are there some local systems where the evolution of the entanglement entropy is given by $|P(t)|^2$? SYK chain? higher dimensions? Mezei joining quench paper}

\acknowledgments

We would like to thank J. Ignacio Cirac, Anatoly Dymarsky, Matthew Hastings, Veronika Hubeny, Izabella Lovaz, Daniel K. Mark, Chaitanya Murthy, Daniel Ranard, Mukund Rangamani, Douglas Stanford, and Tianci Zhou for helpful discussions. We would also like to thank Chaitanya Murthy for comments on the draft. The authors acknowledge the MIT SuperCloud and Lincoln Laboratory Supercomputing Center for providing HPC resources that have contributed to the research results reported within this paper.
Z.D.S. is supported by the Jerome I. Friedman Fellowship Fund, as well as in part by the Department of Energy under grant DE-SC0008739. S.V. is supported by Google. 
HL is supported by the Office of High Energy Physics of U.S. Department of Energy under grant Contract Number DE-SC0012567 and DE-SC0020360 (MIT contract \# 578218).

\begin{appendix} 

\section{Subtleties of the Murthy-Srednicki argument}\label{app:MSlimitation}

In this appendix, we review the  derivation of the form of the eigenstate distribution function by Murthy and Sredniki in \cite{murthy_2019_structure}, identify the implicit assumptions involved, and clarify its range of applicability. 

\begin{comment}
\SV{[}Consider a macroscopic system with volume $V$ partitioned into two subsystems with volumes $V_1, V_2$. Let $B$ be the set of boundary sites between 1 and 2 subsystems and let $\mathcal{A}$ be the area of $B$ measured in units of lattice spacing. The Hamiltonian acting on the whole system can be decomposed into
\begin{equation}
    H = H_1 + H_2 + H_{AB} \,,
\end{equation}
where $H_1, H_2$ act separately on the two subsystems and $H_{12}$ entangles them. $H_{12}$ has the form 
\be 
H_{12} = \sum_{x\in B} h_{x} \, 
\ee
where the $h_x$ are local operators supported on $B$. 
Note that we are using a notation that is different from the main text to match \cite{murthy_2019_structure}. In general, any eigenstate $\ket{a}$ of $H$ can be decomposed into a superposition of eigenstates $\ket{ij} = \ket{i}_1 \otimes \ket{j}_2$ of the subsystem Hamiltonians:
\begin{equation}
    \ket{a} = \sum_{ij} c^a_{ij} \ket{ij} \,.
\end{equation} \SV{]}
\end{comment}
Murthy and Sredniki consider a similar division of a local  Hamiltonian $H$ into $H_A$, $H_{B}$ and $H_{AB}$ as we have in the main text, but focus on the case of (2+1) and higher dimensions. In such cases, the area $\sA$  of the boundary between $A$ and $B$ (measured in units of lattice spacing) is also a large quantity. Various errors in their approximations are suppressed in powers of $1/\sqrt{\sA}$. 
$H_{AB}$ has the form 
\be 
H_{AB} = \sum_{x\in \partial A} h_{x} \, 
\ee
where the $h_x$ are local operators supported near the boundary $\partial A$ between $A$ and $B$.

To derive the EDF for a given state $\ket{a}$, they shift $H_{AB}$ by a constant such that $\braket{a|H_{AB}|a}=0$. 
%Eigenstate thermalization motivates the following general structure of the expansion coefficients \cite{deutsch_1991_ETH}
They assume the following ansatz for the coefficients $c^a_{ij}$, 
\begin{equation}
    c^a_{ij} = F(E_{ij} - E_a) e^{-S(E_{ij})/2} M^a_{ij} \,,
\end{equation}
where $M^a_{ij}$ is a matrix of erratically varying $\sO(1)$ numbers, $S(E)$ is the microcanonical entropy, and $F$ is taken to be a smooth function. Based on the eigenstate thermalization hypothesis, they further assume that 
 the eigenstate $\ket{a}$ has an $\sO(1)$ correlation length $\xi$. 
To study the functional form of $F$, consider its moments:
\begin{equation}\label{eq:MS_F_expand}
    \begin{aligned}
    \int d \omega F(\omega) \omega^n &= \bra{a} (H_A + H_B - E_a)^n\ket{a} = \bra{a} (H - E_a - H_{AB})^n \ket{a} \\
    &= \bra{a} H_{AB}^n \ket{a} - \bra{a}H_{AB} (H-E_a) H_{AB}^{n-2}\ket{a} + \ldots \,.
    \end{aligned}
\end{equation}
 The concrete statement proved in \cite{murthy_2019_structure} is the following:
\begin{claim}
    For any integer $n \ll \sqrt{\mathcal{A}}$, we have
    \begin{equation} \label{msf}
        \int d\omega F(\omega) \omega^n \approx \begin{cases}
            (n-1)!! \Delta^{n} & n \text{ even} \\ 0 & n \text{ odd}
        \end{cases} \,\, + \mathcal{O}\left(\mathcal{A}^{(n-1)/2}\right) \, .  
    \end{equation}
    Here $\Delta$ is the standard deviation of $H_{AB}$ in $\ket{a}$, 
\be 
    \Delta^2 \equiv \bra{a} H_{AB}^2 \ket{a} =  \mathcal{O}(\mathcal{A}) \, , 
\ee
and hence the errors in the even moments are suppressed by powers of $1/\sqrt{\mathcal{A}}$ relative to the leading contribution.  
\end{claim}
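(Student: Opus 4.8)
The plan is to expand $(H_A+H_B-E_a)^n = (H-E_a-H_{AB})^n$ as in \eqref{eq:MS_F_expand} and organize the resulting terms by how many factors of $(H-E_a)$ survive between factors of $H_{AB}$. Every term is a string of operators $H_{AB}$ and $(H-E_a)$ of total length $n$; acting on $\ket{a}$, any $(H-E_a)$ standing at the far right annihilates the state, so only strings whose rightmost entry is an $H_{AB}$ contribute. I would first argue that the ``dominant'' terms — those with no interior $(H-E_a)$ factors at all — reassemble into $\bra{a}H_{AB}^n\ket{a}$, and that all terms containing at least one interior $(H-E_a)$ are lower order in $\sqrt{\mathcal{A}}$. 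The key physical input here is the eigenstate thermalization hypothesis: $\ket{a}$ looks locally thermal with an $\mathcal{O}(1)$ correlation length $\xi$, so connected correlators of operators supported near $\partial A$ factorize up to exponentially small tails over distances $\gg\xi$.

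Second, I would estimate $\bra{a}H_{AB}^n\ket{a}$ itself using locality and ETH. Writing $H_{AB}=\sum_{x\in\partial A}h_x$ with each $\|h_x\|=\mathcal{O}(1)$ and the number of boundary sites $\sim\mathcal{A}$, expand $H_{AB}^n$ into $\sim\mathcal{A}^n$ terms $h_{x_1}\cdots h_{x_n}$. Because $\braket{a|H_{AB}|a}=0$ (the constant shift) and because ETH makes $\ket{a}$ effectively a high-temperature thermal state with finite correlation length, the expectation $\bra{a}h_{x_1}\cdots h_{x_n}\ket{a}$ is well-approximated by its Gaussian (Wick) contractions into pairs $\overline{h_{x}h_{y}}^{\,c}$, each of which is supported only for $|x-y|\lesssim\xi$. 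Counting: each pairing contributes $\mathcal{A}\cdot\mathcal{O}(1)$ per pair (a free sum over one site, the partner constrained to within $\xi$), so a pairing of $n$ operators into $n/2$ pairs gives $\mathcal{O}(\mathcal{A}^{n/2})$; the number of pairings is $(n-1)!!$; hence $\bra{a}H_{AB}^n\ket{a}=(n-1)!!\,\Delta^n + (\text{corrections})$ for $n$ even, with $\Delta^2=\bra{a}H_{AB}^2\ket{a}=\mathcal{O}(\mathcal{A})$, and $\bra{a}H_{AB}^n\ket{a}=\mathcal{O}(\mathcal{A}^{(n-1)/2})$ for $n$ odd since one operator is left uncontracted (its single free sum gives only $\mathcal{A}^{1/2}$ relative to the $(n-1)/2$ pairs). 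This is the heart of the estimate \eqref{msf}.

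Third, I would bound the error terms. The non-Gaussian (higher connected) contractions in $\bra{a}H_{AB}^n\ket{a}$: a connected $k$-point cluster of boundary operators is supported within a region of size $\sim k\xi$, so its free site sum is $\mathcal{O}(\mathcal{A})$ rather than $\mathcal{O}(\mathcal{A}^{k/2})$, i.e. suppressed by $\mathcal{A}^{1-k/2}$ relative to $k/2$ independent pairs; any term with at least one such cluster of $k\ge3$ (or an odd leftover) is down by a power of $1/\sqrt{\mathcal{A}}$, yielding the stated $\mathcal{O}(\mathcal{A}^{(n-1)/2})$ error. The terms with interior $(H-E_a)$ factors from \eqref{eq:MS_F_expand} require a separate argument: each such factor ``cuts'' the operator string, and since $H-E_a$ annihilates $\ket{a}$ on the right, it only contributes through its off-diagonal action, which by ETH and the Lieb–Robinson/locality structure costs an extra factor suppressed in $\mathcal{A}$ (morally, the energy injected by a boundary operator is $\mathcal{O}(\sqrt{\mathcal{A}})$ while the density of states weighting suppresses it). I would make this precise by using that $(H-E_a)H_{AB}\ket{a}$ has norm $\mathcal{O}(\sqrt{\mathcal{A}})$ but that its overlap with further boundary-operator strings picks up the ETH suppression $e^{-S/2}$ times a smooth function, so these contributions are parametrically smaller for $n\ll\sqrt{\mathcal{A}}$.

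The main obstacle I anticipate is making the ``Wick-like factorization of boundary-operator correlators in an ETH eigenstate'' fully rigorous with controlled error — this is where the $n\ll\sqrt{\mathcal{A}}$ restriction genuinely enters, because the number of terms in the $H_{AB}^n$ expansion grows like $\mathcal{A}^n$ while the relative error per contraction is $\mathcal{O}(1/\sqrt{\mathcal{A}})$, so the cumulative error is controlled only when $n$ is small compared to $\sqrt{\mathcal{A}}$. A careful bookkeeping of which strings of lengths up to $n$ can be Gaussian-dominated, and a uniform bound on the connected correlators using the finite correlation length, is the technical crux; everything else (the combinatorial $(n-1)!!$ count, the reassembly of $\bra{a}H_{AB}^n\ket{a}$) is routine once that is in place.
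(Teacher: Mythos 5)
Your core argument reproduces the paper's: expand $(H-E_a-H_{AB})^n$, isolate $\bra{a}H_{AB}^n\ket{a}$, expand it over boundary sites, and use the $\mathcal{O}(1)$ correlation length of the ETH eigenstate to reduce $\bra{a}h_{x_1}\cdots h_{x_n}\ket{a}$ to Wick pairings, each pair contributing one free boundary sum $\sim\mathcal{A}$, giving $(n-1)!!\,\Delta^n$ for even $n$ with relative corrections in $1/\sqrt{\mathcal{A}}$. One small correction to your odd-$n$ counting: a genuinely uncontracted $h_x$ contributes \emph{zero} (since $\braket{a|H_{AB}|a}=0$ after the constant shift), not "$\mathcal{A}^{1/2}$" --- as written your count would give $\mathcal{A}^{n/2}$. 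The correct leading odd-$n$ configuration is one \emph{triple} of mutually proximate operators plus $(n-3)/2$ well-separated pairs; the triple, like a pair, carries a single free sum, so the total is $\mathcal{A}^{1+(n-3)/2}=\mathcal{A}^{(n-1)/2}$.

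The genuine gap is in your treatment of the terms containing interior factors of $(H-E_a)$, and it matters because this is precisely where the condition $n\ll\sqrt{\mathcal{A}}$ enters. The mechanism is not an ETH $e^{-S/2}$ suppression or a density-of-states weighting, and the energy injected by $H_{AB}$ is not $\mathcal{O}(\sqrt{\mathcal{A}})$: although $\|H_{AB}\ket{a}\|\sim\sqrt{\mathcal{A}}$, the \emph{energy spread} of $H_{AB}^k\ket{a}$ about $E_a$ is only $\mathcal{O}(k\,\|h_x\|)=\mathcal{O}(k)$, because each application of a sum of local boundary terms shifts the energy support by an $\mathcal{O}(1)$ amount (equivalently, $(H-E_a)H_{AB}\ket{a}=[H,H_{AB}]\ket{a}$ is again a sum of $\mathcal{O}(\mathcal{A})$ local boundary operators acting on $\ket{a}$, of norm $\mathcal{O}(\sqrt{\mathcal{A}})$ --- the \emph{same} order as $\|H_{AB}\ket{a}\|$, so your norm estimate alone shows nothing). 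The correct comparison is therefore: each sandwiched $(H-E_a)$ contributes at most $\mathcal{O}(k)\le\mathcal{O}(n)$, while each $H_{AB}$ it replaces contributes $\sim\sqrt{\mathcal{A}}$ by the Wick estimate, so every such term is suppressed by $n/\sqrt{\mathcal{A}}$ relative to $\bra{a}H_{AB}^n\ket{a}$. This, not the cumulative error of the Gaussian factorization (for which $n\ll\mathcal{A}$ already suffices), is the origin of the restriction $n\ll\sqrt{\mathcal{A}}$. With that step repaired, your proposal coincides with the paper's proof.
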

\begin{proof}
    The $n=1$ and $n=2$ cases are true by definition. For higher moments, we have to contend with the complicated sum in \eqref{eq:MS_F_expand}. The first term in the sum can be expanded as
    \begin{equation}
        \bra{a}H_{AB}^n\ket{a} = \sum_{x_1,\ldots, x_n \in B} \bra{a} h_{x_1} \ldots h_{x_n} \ket{a}  \,.
    \end{equation}
    Since the correlation length is $\xi$, the higher-point function $\bra{a} h_{x_1}\ldots h_{x_n}\ket{a}$ vanishes whenever $x_i$ is separated from all the other $x_j$'s by a distance much larger than $\xi$, since we have for instance 
    \begin{align} 
    \sum_{|x_1-x_j|\gg \epsilon}\braket{a| h_{x_1} h_{x_2} ... h_{x_n} |a} &\approx \sum_{|x_1-x_j|\gg \epsilon}\braket{a| h_{x_1}|a} \braket{h_{x_2} ... h_{x_n} |a} \nonumber \\
    &\approx \braket{a|H_{AB}|a}\sum_{x_2, ..., x_n} \braket{a|h_{x_2} ... h_{x_n} |a} \nonumber  \\
    &=0 \, . 
    \end{align}
     When $n \ll \mathcal{A}$, the entropically favorable non-zero configurations are those in which the $\{x_i\}'s$ organize into well-separated pairs.  The total contribution from configurations where more than two of the indices $x_i$ are spatially proximate is suppressed by powers of $1/\mathcal{A}$. When $n$ is odd, %\SV{[}there would always be one $h_{x_i}$ that is left unpaired\SV{], 
     in order to get a non-zero contribution, we must have one set of three nearby $h_{x_i}$'s in each configuration, and can form $(n-3)/2$ pairs out of the rest.  Hence the correlation function is approximately $\sO(\sA^{\frac{n-1}{2}})$. On the other hand, when $n$ is even, all the indices can be paired and we obtain a sum over all Wick contractions, such that the result is $\sO(\sA^{\frac{n}{2}})$. This combinatorial structure immediately gives approximately Gaussian moments
    \begin{equation} \label{h12_moments}
        \bra{a}H_{AB}^n\ket{a} = \begin{cases}
            (n-1)!! \Delta^n & n \text{ even}\\ 0 & n \text{ odd}
        \end{cases} + \mathcal{O}\left(\mathcal{A}^{(n-1)/2}\right)
    \end{equation}
    so long as $n \ll \mathcal{A}$. For $n \approx A$, we cannot have well-separated pairs of $h_{x_i}$, so \eqref{h12_moments} no longer holds. 
    
    Next, we show that the remaining terms in \eqref{eq:MS_F_expand} are suppressed relative to the first term when $n \ll \sqrt{A}$. Generally, each remaining term contains factors of $(H-E_a)$ sandwiched between powers of $H_{AB}$. Since $H$ is local, $H_{AB}^k \ket{a}$ is a superposition of eigenstates with weight concentrated below $E_a + \mathcal{O}(k ||h_x||)$. Therefore, when $k < n \ll \sqrt{\mathcal{A}}$, each factor of $H-E_a$ contributes at most $\mathcal{O}(k ||h_x||)$. From \eqref{h12_moments}, each factor of $H_{AB}$ approximately contributes a factor of $\sqrt{A}$. Hence, the remaining terms in \eqref{eq:MS_F_expand} are $k/\sqrt{\mathcal{A}}$ suppressed relative to the leading term. This concludes the argument. 
\end{proof} 

%\SV{
On the basis of Claim A.1, \cite{murthy_2019_structure} concludes that the distribution $F(\omega)$ can be approximated by a Gaussian: 
\be  \label{a10}
F(\omega) \approx F_{\rm gauss}(\omega) = \frac{e^{- \omega^2/2\Delta^2}}{\sqrt{2\pi} \Delta} \, . 
\ee
It is not clear whether this conclusion can be drawn, for two reasons: firstly, note that \eqref{msf} is not exact. Indeed, the errors in \eqref{msf} are proportional to powers of $\sA$ and hence large, although the errors in even powers are small relative to the leading approximation. Moreover, even in the case where we do have exact matching of a certain number of moments between two distributions, this does not guarantee that the distributions are equal. Suppose we have a distribution $F_0$ such that the $n$-th moments of $F_0$ and $F_{\rm Gauss}$ are equal for all $n\leq p$. Then for $p\gg 1$ and $\omega \gtrsim 1$, their difference is bounded as
\cite{lindsay_moments}
%}
%Unfortunately, it is difficult to bound the pointwise difference between two distributions $f_1, f_2$ from the approximate matching of a finite set of moments. In the rest of this appendix, we will instead consider the ideal scenario where moments of $f_1, f_2$ match \textit{exactly} up to some finite order and derive constraints on $|f_1(\omega) - f_2(\omega)|$. The case of approximate matching likely leads to much weaker constraints, though we are not able to quantify them. 
%Let us now restrict to the scenario where $f, f_{\rm Gauss}$ have exactly the same moments up to order $2p$. By a general result in probability theory, their difference is bounded as
\begin{equation} \label{a10_bound}
    |F_0(\omega) - F_{\rm Gauss}(\omega)| \leq \mathcal{O}(|\omega|^{-p}) \,.
\end{equation}
Motivated by the argument of \cite{murthy_2019_structure}, suppose we take $1 \ll p \lesssim \sqrt{\sA}$. We now consider the usefulness of this bound in different ranges of $\omega$: 
\begin{itemize}
\item For $\omega <1$, \eqref{a10_bound} does not apply, and the distribution is unconstrained. 
\item For $\omega >1$ and  $\mathcal{O}(1)$, %\SV{[}the bound gives no information because $F_{\rm Gauss}(\omega)$ is $1/\sqrt{\sA}$ suppressed while the RHS is $\mathcal{O}(1)$\SV{] 
$F_{\rm Gauss}(\omega) \sim 1/\sqrt{\mathcal{A}}$, while $|\omega|^{-p} \sim e^{- \# p}$. Since we can always take $p \gg \ln \mathcal{A}$, the bound gives a strong constraint, $|F_0(\omega) - F_{\rm Gauss}(\omega)| \ll F_{\rm Gauss}(\omega)$.  
\item For $1 \ll \omega \lesssim \mathcal{O}(\sqrt{p} \Delta)$, $F_{\rm Gauss}(\omega) \gtrsim \mathcal{O}(e^{-\# p})$, while  $ |\omega|^{-p} \sim \mathcal{O}(e^{-\# p \log \mathcal{A}})$, so again we get a strong constraint. 
\item For $\omega \gtrsim \mathcal{O}(\sqrt{p} \Delta \ln \Delta)$, $F_{\rm Gauss}(\omega) \ll |\omega|^{-p}$ and the constraint coming from moment-matching becomes  weak. 
\end{itemize} 
Therefore, we conclude that exact moment-matching up to $p = \mathcal{O}(\sqrt{\sA})$ would constrain the behavior of the distribution function $F_0(\omega)$ in the intermediate range $1 \lesssim \omega \ll \sA^{3/4}$.

%It is clear that every step in the above argument requires $n \ll A$. 
Finally, we emphasize that when $\mathcal{A}$ is $\mathcal{O}(1)$, as is always the case in a (1+1)-dimensional system, the Gaussian ansatz does not hold in any regime, as already pointed out in \cite{murthy_2019_structure}. This was also numerically verified in Sec.~\ref{subsec:numerical_f_form}. 

\section{Justifying the approximation \texorpdfstring{$S(E) \approx S_{\rm fac}(E)$}{}}
\label{app:therm}

%\SV{[We should mention that the argument here applies to one dimension, although the section where (2.50) appears in the main text is phrased in general dimensions.]}
In this appendix, we justify the approximation that the thermodynamic entropy of $H$ is approximately equal to that of $H_A + H_B$, which we use in several arguments in the main text. We first recall a key result from~\cite{keating2015spectra}:
\begin{claim}\label{thm:spec_gauss_convergence}
    (Keating, Linden Wells 2014): Consider a sequence of local qudit Hamiltonians on a $d$-dimensional lattice with volume $V \sim L^d$
    \begin{equation}
        H(V) = \sum_{\langle i,j \rangle} \sum_{a=0}^3 \sum_{b = 0}^3 J_{a,b}\,\sigma^{(a)}_i \sigma^{(b)}_j \,,
    \end{equation}
    where $\sigma^{(a)}_j$ label onsite Pauli matrices, $\langle ij \rangle$ is a sum over nearest neighbor sites and $J_{a,b}$ are a set of finite real constants. Let $\Delta_V^2$ denote the variance of the Hamiltonian $H_V$. Then there is a positive constant $\sigma$ such that
    \begin{equation}
        \lim_{V \rightarrow \infty} \Delta_V^2/V = \sigma^2 \,,
    \end{equation}
    and the density of states $\tilde \rho(\epsilon, V)$ of $H(V)/(\sqrt{V} \sigma)$ converges weakly to the unit-variance Gaussian distribution. 
\end{claim}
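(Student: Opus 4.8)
The plan is to prove Claim~\ref{thm:spec_gauss_convergence} by the method of moments: compute all moments of the normalized (and centered) Hamiltonian, show they converge as $V\to\infty$ to the moments of the standard Gaussian, and then invoke the classical fact that moment convergence to a measure that is determined by its moments implies weak convergence. Throughout I would write $H(V) = \sum_\alpha h_\alpha$, where $\alpha$ ranges over the edges $\langle i,j\rangle$ of the lattice and $h_\alpha = \sum_{a,b} J_{a,b}\,\sigma^{(a)}_i\sigma^{(b)}_j$, let $D=\dim\mathcal H$, and let $\bar H_V = \frac{1}{D}\Tr H(V)$ (this is $\sO(V)$ and is absorbed into the normalization; the density of states in the claim is understood as that of $(H(V)-\bar H_V)/(\sqrt V\,\sigma)$). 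The structural input used everywhere is that the lattice has bounded degree, so each edge $\alpha$ overlaps only an $\sO(1)$ number of other edges; this is what separates the local case from all-to-all models.

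\emph{Step 1 (variance).} Using that $\frac{1}{D}\Tr$ of a product of onsite Pauli operators vanishes unless the product is the identity on every site, one gets $\Delta_V^2 = \frac{1}{D}\Tr\big[(H(V)-\bar H_V)^2\big] = \sum_{\alpha,\beta} c_{\alpha\beta}$, with $c_{\alpha\beta}$ uniformly bounded and nonzero only when $\alpha$ and $\beta$ coincide or share a site. By bounded degree this sum has of order $V$ nonzero terms, with the total bounded above and below by constants times $V$, so $\Delta_V^2 = \sigma^2 V + \sO(1)$ for some $\sigma^2>0$ (assuming $H$ is nontrivial), and hence $\Delta_V^2/V \to \sigma^2$.

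\emph{Step 2 (higher moments).} Set $\tilde h_\alpha = h_\alpha - \frac1D\Tr h_\alpha$ and $m_n(V) = \frac1D\Tr\big[((H(V)-\bar H_V)/(\sqrt V\,\sigma))^n\big] = (V\sigma^2)^{-n/2}\sum_{\alpha_1,\dots,\alpha_n}\frac1D\Tr[\tilde h_{\alpha_1}\cdots\tilde h_{\alpha_n}]$. A summand vanishes unless the overlap graph of the chosen edges has no isolated vertex (an edge touched by none of the others forces a traceless onsite factor). Organizing the sum by the partition of $\{1,\dots,n\}$ into maximal clusters of mutually overlapping edges, the configurations dominant in powers of $V$ are: $n$ even, and the $n/2$ clusters are pairs $\{i,j\}$ — two edges whose $\tilde h$'s have a nonzero-trace product (in particular, identical edges) — with the $n/2$ pair-supports pairwise disjoint, so the pairs commute, the trace factorizes, and each pair contributes $\frac1D\Tr[\tilde h_{\alpha_i}\tilde h_{\alpha_j}]$, whose sum over that pair's edge labels equals $\Delta_V^2$. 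Since there are $(n-1)!!$ pair partitions and dropping the ``distinct/disjoint'' constraint costs only $\sO(1/V)$, this gives $m_n(V) = (n-1)!!\,(\Delta_V^2/(V\sigma^2))^{n/2} + \sO(1/V) \to (n-1)!!$ for $n$ even. For $n$ odd no perfect pairing exists, so any surviving configuration contains a cluster of size $\ge 3$, which is confined to an $\sO(1)$ region and therefore uses three summation indices for only one factor of $V$; this forces $m_n(V) = \sO(V^{-1/2}) \to 0$. What remains is to check that every non-generic configuration (repeated or adjacent edges, or clusters of three or more overlapping terms) is suppressed by at least one power of $1/V$; by bounded degree the number of such configurations is at most $C(n)\,V^{n/2-1}$, which is routine for each fixed $n$ even though $C(n)$ grows with $n$.

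\emph{Step 3 (conclusion).} The standard Gaussian satisfies Carleman's condition (its even moments $(2k-1)!!$ grow slowly enough that $\sum_k \big((2k-1)!!\big)^{-1/(2k)}$ diverges), hence it is the unique probability measure with moments $(n-1)!!$ for $n$ even and $0$ for $n$ odd. The classical moment-convergence theorem then upgrades $m_n(V)\to m_n^{\mathrm{Gauss}}$ for every $n$ to weak convergence of $\tilde\rho(\epsilon,V)$ to the unit-variance Gaussian. The main obstacle is the combinatorial bookkeeping in Step~2: one must set up the cluster decomposition of the moment sum precisely, determine exactly which cluster patterns survive the trace, and obtain the power-of-$V$ suppression of everything but the pair partitions (the same kind of counting, but run in reverse, that underlies the Murthy--Srednicki argument reviewed above). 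This uses locality essentially — for all-to-all or sufficiently nonlocal interactions the counting changes and the limit need not be Gaussian — and although only fixed-$n$ estimates are needed (no uniformity in $n$), tracking the cluster geometry on a general $d$-dimensional lattice is the delicate part.
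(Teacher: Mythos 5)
The paper does not actually prove this claim: it is imported verbatim from Keating, Linden and Wells \cite{keating2015spectra}, stated in Appendix~B only so that it can be applied to $H(V)$ and $H_A(V)+H_B(V)$ in the argument that $S(E)\approx S_{\rm fac}(E)$. So there is no in-paper proof to compare against. That said, your moment-method argument is essentially the route taken in the cited reference itself: compute $\frac{1}{D}\Tr$ of powers of the (centered, normalized) Hamiltonian, use the vanishing of traces of non-identity Pauli strings plus bounded degree of the overlap graph to show that perfect pairings of mutually disjoint overlapping pairs dominate, obtain $(n-1)!!$ for even moments and $\sO(V^{-1/2})$ for odd ones, and close with Carleman's condition and the classical moment-convergence theorem. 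The outline is sound. Two points worth making explicit if you write this up: first, the claim as stated normalizes $H(V)/(\sqrt V\sigma)$ without centering, which only makes sense if $J_{0,0}=0$ (otherwise the mean is $\sO(V)$ and the centering you introduce is mandatory, not optional); second, the genuinely laborious part is exactly where you locate it, namely the fixed-$n$ bound $C(n)V^{n/2-1}$ on configurations containing a cluster of size $\ge 3$ or coincident pair supports, together with the verification that the trace over a word of pairwise-disjoint pairs factorizes after commuting the factors into adjacent positions. Since only fixed-$n$ (non-uniform) estimates are needed for weak convergence, none of this is an obstruction, and your proposal would go through.
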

Now let us apply this result to the Hamiltonians $H(V)$ and $H_A(V)+H_B(V)$ respectively%(the dependences of these Hamiltonians on $V$ will be left implicit)
. For every lattice volume $V$, $H(V)$ and $H_A(V)+H_B(V)$ are both local Hamiltonians defined on the same Hilbert space. Moreover, since they only differ by a finite number of terms in the $V \rightarrow \infty$ limit, the difference between their variances is $\sO(1)$, although their variances are individually $\sO(V)$. Therefore, the normalized variance $\sigma^2$ appearing in Claim~\ref{thm:spec_gauss_convergence} is identical for $H(V)$ and $H_A(V)+H_B(V)$. This guarantees that the normalized density of states $\tilde \rho(\epsilon, V), \tilde \rho_{\rm fac}(\epsilon,V)$ associated with $H(V)/(\sigma \sqrt{V})$ and $\left[H_A(V)+H_B(V)\right]/(\sigma \sqrt{V})$ converge to the same Gaussian. 

In the main text, we are interested in the %\SV{[unnormalized]} 
density of states $\rho(E) = e^{S(E)}$ of $H(V)$ and $\rho_{\rm fac}(E) = e^{S_{\rm fac}(E)}$ of $H_A(V)+H_B(V)$%\SV{of $H_L$}
, rather than %the \SV{[normalized]} density of states
$\tilde \rho$ and $\tilde \rho_{\rm fac}$. After rescaling by the variances, %\SV{[these unnormalized density of states] 
$S$ and $S_{\rm fac}$ tend to slightly different Gaussians
\begin{equation}
    S(E) = -\frac{E^2}{2 \sigma^2 V} \quad S_{\rm fac}(E) = -\frac{E^2}{2 (\sigma^2 + \delta/V) V}
\end{equation}
where $\delta$ quantifies the $\sO(1)$ difference between the variances of $H(V)$ and $H_A(V)+H_B(V)$. For $E \ll \sO(V)$, the difference between $S(E)$ and $S_{\rm fac}(E)$ can be shown to vanish in the thermodynamic limit
\begin{equation}
    S(E) - S_{\rm fac}(E) \approx - \frac{\delta}{2 \sigma^4} \frac{E^2}{V^2} \rightarrow 0 \,. 
\end{equation}
However, when $E = \epsilon V$ for some $\epsilon = \mathcal{O}(1)$, the difference $S(E) - S_{\rm fac}(E)$ becomes $\sO(1)$. As a result, the ratio of $\rho(E)$ and $\rho_{\rm fac}(E)$ at a finite energy density $E = \epsilon V$ approaches a finite ratio in the thermodynamic
\begin{equation}
    e^{S_{\rm fac}(E)}/e^{S(E)} \approx e^{\frac{\epsilon^2\delta}{\Delta^2}} \,.
\end{equation}
This $\mathcal{O}(1)$ multiplicative correction does not change any of the qualitative conclusions we drew in the main text about the functional form of $f(\omega)$ and its dynamical consequences.

\section{Alternative approximations for   the EDF}\label{app:EDF_derivation_details}

Recall that in Sec. \ref{subsec:derivation_f_form}, we found the exact characteristic equations 
\begin{equation}\label{char_1}
    E_a - \bar E_{ij} = \frac{1}{\pi} \sum_m \frac{\Gamma_{ij}(\epsilon_m) \rho(\epsilon_m)^{-1}}{E_a - \epsilon_m}\,, \quad |c^a_{ij}|^{-2} \rho(E_a)^{-1} \approx \frac{1}{\pi \rho(E_a)}\sum_{m} \frac{\Gamma_{ij}(\epsilon_m) \rho(\epsilon_m)^{-1}}{(E_a - \epsilon_m)^2}  \,.
\end{equation}
and evaluated the sums using the approximation that the energy levels are exactly evenly spaced. In this appendix, we do not make this assumption, but give an alternative argument for the Lorentzian regime, 
\be \label{lor}
   f(E_a-\bar E_{ij}) = \frac{1}{\pi} \frac{\Gamma_{ij}(\bar E_{ij})}{(E_a - \bar E_{ij})^2 + \Gamma_{ij}(\bar E_{ij})^2} \,,  
\ee
using assumptions about the functional form of $\Gamma_{ij}(\epsilon)$.  %we do not make the uniform spacing assumption and instead assume certain analytic properties of $\Gamma_{ij}(\epsilon)$. 

Before we state and use these assumptions, we first perform some exact manipulations that are independent of the form of $\Gamma_{ij}(\epsilon)$.
We start by squaring both sides of the first equation in \eqref{eq:3.4heuristic_3}:
\begin{equation}
    \begin{aligned}
    (E_a - \bar E_{ij})^2 &= \frac{1}{\pi^2} \sum_{m,n} \frac{\Gamma_{ij}(\epsilon_m) \rho(\epsilon_m)^{-1} \Gamma_{ij}(\epsilon_n) \rho(\epsilon_n)^{-1}}{(E_a - \epsilon_m)(E_a - \epsilon_n)} \\
    &= \frac{1}{\pi^2} \sum_{m} \frac{\Gamma_{ij}(\epsilon_m)^2 \rho(\epsilon_m)^{-2}}{(E_a - \epsilon_m)^2} + \frac{1}{\pi^2} \sum_{m \neq n} \frac{\Gamma_{ij}(\epsilon_m) \rho(\epsilon_m)^{-1}\Gamma_{ij}(\epsilon_n) \rho(\epsilon_n)^{-1}}{(E_a - \epsilon_m)(E_a - \epsilon_n)}  \,.
    \end{aligned}
\end{equation}
For the first term, since the sum is sharply peaked at $\epsilon_m \approx E_a$, we can pull out a factor of $\Gamma_{ij}(E_a)/\rho(E_a)$ and relate it to the EDF:
\begin{equation}
    \text{Term 1} = \frac{\Gamma_{ij}(E_a)}{\pi} |c^a_{ij}|^{-2} \rho(E_a)^{-1} \approx \frac{\Gamma_{ij}(E_a)}{\pi} f(E_a - \bar E_{ij})^{-1} \,. 
\end{equation}
The second term can be simplified using partial fractions
\begin{equation}
    \begin{aligned}
    \text{Term 2} &= \frac{1}{\pi^2} \sum_{m \neq n} 
    \frac{\Gamma_{ij}(\epsilon_m) \rho(\epsilon_m)^{-1} \Gamma_{ij}(\epsilon_n) \rho(\epsilon_n)^{-1}}{\epsilon_m - \epsilon_n} \left[\frac{1}{E_a - \epsilon_m} - \frac{1}{E_a - \epsilon_n}\right] \\
    &= \frac{2}{\pi^2} \sum_{m\neq n} \frac{\Gamma_{ij}(\epsilon_m) \rho(\epsilon_m)^{-1} \Gamma_{ij}(\epsilon_n) \rho(\epsilon_n)^{-1}}{(\epsilon_m - \epsilon_n)(E_a - \epsilon_m)} = \frac{2}{\pi} \sum_m g_{ij}(\epsilon_m)  \frac{\Gamma_{ij}(\epsilon_m) \rho(\epsilon_m)^{-1}}{E_a - \epsilon_m} 
    \end{aligned}
\end{equation}
where $g_{ij}(\epsilon_m)$ is the Hilbert transform of $\Gamma_{ij}(\epsilon)$
\begin{equation}
    g_{ij}(\epsilon_m) = \frac{1}{\pi} \sum_{n,\,  n \neq m} \frac{\Gamma_{ij}(\epsilon_n) \rho(\epsilon_n)^{-1}}{\epsilon_m - \epsilon_n} = \frac{1}{\pi} P \int \frac{\Gamma_{ij}(\epsilon)}{\epsilon_m - \epsilon} d \epsilon \,.
\end{equation}
Since we expect $g_{ij}(\epsilon_m)$ to be a smooth function, we can expand $g_{ij}(\epsilon_m)$ around $\epsilon_m = E_a$. Assuming that $\Gamma_{ij}(\epsilon_m)$ is even in $\epsilon_m - \bar E_{ij}$, we obtain a further simplification
\begin{equation}\label{eq:term2_derivative_expansion}
    \begin{aligned}
        \text{Term 2} &= \frac{2}{\pi} \sum_m \left[g_{ij}(\epsilon_m) - g_{ij}(E_a)\right] \frac{\Gamma_{ij}(\epsilon_m) \rho(\epsilon_m)^{-1}}{E_a - \epsilon_m} + \frac{2 g_{ij}(E_a)}{\pi} \sum_m \frac{\Gamma_{ij}(\epsilon_m) \rho(\epsilon_m)^{-1}}{E_a - \epsilon_m}\\
        &= -\frac{2}{\pi} \sum_m \sum_{p=1}^{\infty} \frac{g_{ij}^{(p)}(E_a)}{p!} \frac{(\epsilon_m - E_a)^p}{\epsilon_m - E_a} \Gamma_{ij}(\epsilon_m) \rho(\epsilon_m)^{-1} + 2 g_{ij}(E_a) (E_a - \bar E_{ij}) \\
        &= - \frac{2}{\pi} \sum_{p=0}^{\infty} \frac{g_{ij}^{(1+p)}(E_a)}{(1+p)!} \sum_m \Gamma_{ij}(\epsilon_m) \rho(\epsilon_m)^{-1} (\epsilon_m - E_a)^{p} + 2 g_{ij}(E_a) (E_a - \bar E_{ij}) \,, 
    \end{aligned}
\end{equation}
where $g_{ij}^{(p)}(\epsilon)$ is the $p$-th derivative of $g_{ij}(\epsilon)$. In this new representation of Term 2, the summand is completely non-singular. Therefore, all sums can be approximated by integrals. After further expanding in powers of $\delta = E_a - \bar E_{ij}$, we find
\begin{equation}\label{delta_exp}
    \begin{aligned}
    \text{Term 2} &= - \frac{2}{\pi} \int d \epsilon \Gamma_{ij}(\epsilon) \sum_{p=0}^{\infty} \frac{g_{ij}^{(1+2p)}(\bar E_{ij} + \delta)}{(1+2p)!} (\epsilon_m - \bar E_{ij} - \delta)^{2p} + 2 g_{ij}(\bar E_{ij} + \delta) \delta \\
    &= - \frac{2}{\pi} \int d \epsilon \Gamma_{ij}(\epsilon) \sum_{p=0}^{\infty} \sum_{k=0}^{\infty} \sum_{q=0}^{p} \frac{g_{ij}^{(1+2p+2k)}(\bar E_{ij})}{(1+2p)!} \frac{\delta^{2k}}{(2k)!} (\epsilon_m - \bar E_{ij})^{2q} (- \delta)^{2p-2q} {2p \choose 2q} \\
    &\hspace{240pt}+ 2 \sum_{s=1}^{\infty} \frac{g_{ij}^{(s)}(\bar E_{ij})}{s!} \delta^{1+s}\\
    & = \sum_{n=0}^{\infty} C_{2n} \delta^{2n} \,,
    \end{aligned}
\end{equation}
where $C_{2n}$ are a set of real coefficients that can in principle be calculated. Putting these together, we find
\begin{equation}
\label{c7}
    f(E_a - \bar E_{ij}) \approx \frac{1}{\pi} \frac{\Gamma_{ij}(E_a)}{(E_a - \bar E_{ij})^2 - \sum_{n=0}^{\infty} C_{2n} (E_a - \bar E_{ij})^{2n}} \,. 
\end{equation}
As we will show in the rest of the section, $C_0 < 0$. Therefore, the Lorentzian prediction for $f(\omega)$ is robust for generic functional forms of $\Gamma_{ij}(\epsilon)$, as long as $\Gamma_{ij}$ varies slowly for some range $\sigma$ larger than the characteristic width of $f$. Corrections coming from $C_{2n}$ with $n > 0$ distort and Lorentzian and lead to additional features observed in Fig.~\ref{fig:f_form}. 

Now let us explicitly evaluate $C_0$ using two different assumptions for the functional form of $\Gamma_{ij}(\epsilon)$.
%\SV{
Let us first consider the simplest box approximation,
\be 
\Gamma_{ij}(\epsilon) = \begin{cases}
\frac{\pi \sigma_{E,ij}^2}{2 \sigma} & |\epsilon - \bar E_{ij}| \leq \sigma \\
0 & |\epsilon - \bar E_{ij}| >  \sigma 
\end{cases} \,,
\ee
which satisfies the exact normalization constraint 
\begin{equation} \label{gnorm}
    \int \Gamma_{ij}(\epsilon) d \epsilon = \pi \sum_m |V_{m,ij}|^2 = \pi \left(\bra{ij} H^2 \ket{ij} - \bra{ij} H \ket{ij}^2\right) = \pi \sigma_{E, ij}^2\,.
\end{equation}
The Hilbert transform then evaluates to
\begin{equation}
    g_{ij}(\epsilon_m) = \frac{1}{2 \sigma \pi} P \int_{\bar E_{ij} - \sigma}^{\bar E_{ij} + \sigma} \frac{\pi \sigma_{E,ij}^2}{\epsilon_m - \epsilon} = \frac{\sigma_{E,ij}^2}{2\sigma} \log \left(\frac{\sigma + \epsilon_m - \bar E_{ij}}{\sigma - \epsilon_m + \bar E_{ij}}\right) \,.
\end{equation}
By explicit computation, we can show that
\begin{equation}
    g_{ij}^{(1+2p)}(\epsilon_m) = \frac{\Gamma_{ij}(\bar E_{ij})}{\pi} (2p)! \left[(\sigma + \bar E_{ij} - \epsilon_m)^{-(1+2p)} + (\sigma - \bar E_{ij} + \epsilon_m)^{-(1+2p)} \right] \,,
\end{equation}
\begin{equation}
    \int d \epsilon \Gamma_{ij}(\epsilon) (\epsilon - \bar E_{ij})^{2p} = \Gamma_{ij}(\bar E_{ij}) \frac{2}{1+2p} \sigma^{1+2p} \,. 
\end{equation}
Plugging these results back into Term 2, we find that for $E_a = \bar E_{ij}$, 
\begin{equation}
    \begin{aligned}
        C_0 &= - \frac{2}{\pi} \sum_{p=0}^{\infty} \frac{2 \Gamma_{ij}(\bar E_{ij}) (2p)!}{\pi (1+2p)!} \sigma^{-(1+2p)} \Gamma_{ij}(\bar E_{ij}) \frac{2}{1+2p} \sigma^{1+2p}  \\
        &= - \Gamma_{ij}(\bar E_{ij})^2 \frac{8}{\pi^2} \sum_{p=0}^{\infty} \frac{1}{(1+2p)^2} = - \Gamma_{ij}(\bar E_{ij})^2 \,. 
    \end{aligned}
\end{equation}
Assuming that the quadratic correction to Term 2 is suppressed (as we confirm below), this gives the approximation \eqref{lor} for $f(\omega)$.
%Therefore, the box approximation for $\Gamma_{ij}(\epsilon)$ produces a width for the Lorentzian
%that is consistent with \eqref{eq:3.4_f_form_final}.

%The Lorentzian prediction for the functional form of the EDF still holds if we only require $\Gamma_{ij}(\epsilon)$ to be approximately flat for $|\epsilon| \ll \sigma$, although some of the numerical constants in \eqref{eq:3.4_f_form_final} will be different.}

Next, we assume a more realistic situation where $\Gamma_{ij}(\epsilon)$ is an approximate Lorentzian with width $\sigma$ and decays rapidly for $|\epsilon - \bar E_{ij}| \gg \sigma$. Using the normalization constraint again, we have:
\be \label{c8}
 \Gamma_{ij}(\epsilon) \approx \frac{\sigma_{E,ij}^2 \sigma}{\sigma^2 + (\epsilon - \bar E_{ij})^2} \,.
\ee
\begin{comment}
    The overall constant $C$ can be fixed to be the energy variance $\sigma_{E, ij}^2$ of $\ket{ij}$,  by noting that 
\begin{equation} \label{gnorm}
    \int \Gamma_{ij}(\epsilon) d \epsilon = \pi \sum_m |V_{m,ij}|^2 = \pi \left(\bra{ij} H^2 \ket{ij} - \bra{ij} H \ket{ij}^2\right) 
\end{equation}
Here we have used the fact that $\{\ket{m}\}$ together with $\ket{ij}$ form a complete basis.
\end{comment}
Up to small errors due to the deviation of $\Gamma_{ij}(\epsilon)$ from a Lorentzian at large $\epsilon - \bar E_{ij}$, the Hilbert transform takes the form 
\be 
g_{ij}(\epsilon) = \frac{\sigma_{E,ij}^2(\epsilon - \bar E_{ij})}{\sigma^2 + (\epsilon - \bar E_{ij})^2} \,.\label{gij}
\ee 
Putting \eqref{gij} into Term 2, and evaluating at $E_a = \bar E_{ij}$, we find
\begin{equation} 
    \begin{aligned}
    C_0 &= \frac{2}{\pi} \sum_m \frac{ \sigma_{E,ij}^2(\epsilon_m - \bar E_{ij})}{\sigma^2 + (\epsilon_m - \bar E_{ij})^2} \frac{\Gamma_{ij}(\epsilon_m) \rho(\epsilon_m)^{-1}}{\bar E_{ij} - \epsilon_m} \\
    &= - \frac{2 \sigma_{E,ij}^2}{\pi} \sum_m \frac{\Gamma_{ij}(\epsilon_m) \rho(\epsilon_m)^{-1}}{\sigma^2 + (\epsilon_m - \bar E_{ij})^2} = - \Gamma_{ij}(\bar E_{ij})^2 \,,
    \end{aligned}
    \label{c11}
\end{equation}
where we approximated the last sum with an integral because the summand is non-singular. The width again matches \eqref{eq:3.4_f_form_final}. 

\begin{comment}
    The $E_a$-dependent term vanishes as $E_a - \bar E_{ij} \rightarrow 0$. Therefore, to lowest order in $E_a - \bar E_{ij}$, 
\begin{equation}
    \text{Term 1} = - \frac{\sigma_{E,ij}^4}{\sigma^2} = - \Gamma_{ij}(\bar E_{ij})^2 \,.
\end{equation}
\end{comment}
For both the box approximation and the Lorentzian approximation, we should also explicitly check that the corrections to $\text{Term 2}$ in \eqref{delta_exp} for $E_a \neq \bar E_{ij}$ are suppressed for small $|E_a - \bar E_{ij}|$. 
From this expansion, it is clear that the leading corrections are quadratic in $\delta$. To extract the quadratic coefficient $C_2$, we need only evaluate the terms with $k = 0, p-q=1$ and $k=1, p=q$ in the first sum and the terms with $s = 1$ in the second sum. Using either the box approximation or the Lorentzian approximation for $\Gamma_{ij}(\epsilon)$ (we omit the details which are tedious but straightforward), we find that
\begin{equation}%\label{eq:3.4term1}
    \text{Term 2} \approx - \left[\Gamma_{ij}(\bar E_{ij})^2 - C_2 (E_a - \bar E_{ij})^2 + \mathcal{O}\left((E_a - \bar E_{ij})^4\right)\right] \,. 
\end{equation}
where $C_2 \lesssim \mathcal{O}(\frac{\Gamma_{ij}(\bar E_{ij})}{\sigma})$. The upper bound guarantees that the Lorentzian holds over a much larger range of $|E_a - \bar E_{ij}|$ than the width $\Gamma_{ij}(\bar E_{ij})$. 

Finally, we note in passing in passing that we could have obtained the above result by starting with a slightly weaker assumption about the analytic structure of $\Gamma_{ij}(\epsilon)$. We can start by requiring that $\Gamma_{ij}(\epsilon)$ is an even analytic function of $\epsilon- \bar E_{ij}$ whose only singularity in the upper half plane is a pole at $\epsilon_* - \bar E_{ij} = i \sigma$.  Consider the Hilbert transform
\begin{equation}
    g_{ij}(\epsilon_m) = \frac{1}{\pi} P \int \frac{\Gamma_{ij}(\epsilon)}{\epsilon_m - \epsilon} d \epsilon \,. 
\end{equation}
By completing the principal value integral to an integral from $-\infty + i 0^+$ to $\infty + i 0^+$ and then closing the integration contour in the upper half plane, we pick up two residues at $\epsilon = \epsilon_m$ and $\epsilon = \bar E_{ij} + i \sigma$. If the residue at $\epsilon = \bar E_{ij} + i \sigma$ is $R_1 + i R_2$, then 
\begin{equation}
    g_{ij}(\epsilon_m) = -i \Gamma_{ij}(\epsilon_m) + \frac{2i (R_1 + i R_2)}{\epsilon_m - \bar E_{ij} - i \sigma} = -i \Gamma_{ij}(\epsilon_m) + 2 (R_1 + i R_2) \frac{-\sigma + i (\epsilon_m - \bar E_{ij})}{\sigma^2 + (\epsilon_m - \bar E_{ij})^2} \,. 
\end{equation}
By definition, $g_{ij}(\epsilon_m)$ is a real function for real $\epsilon_m$. Thus, the above equation leads to the constraints
\begin{equation}
    \Gamma_{ij}(\epsilon_m) = \frac{2 R_1 (\epsilon_m - \bar E_{ij}) - 2R_2 \sigma}{\sigma^2+(\epsilon_m-\bar E_{ij})^2} \,, \quad  g_{ij}(\epsilon_m) = -2 \frac{R_1 \sigma + R_2 (\epsilon_m - \bar E_{ij})}{\sigma^2 + (\epsilon_m - \bar E_{ij})^2} \,.
\end{equation}
Since $\Gamma_{ij}(\epsilon_m)$ is real, manifestly positive, and even for $\epsilon_m \in \mathbb{R}$, we must have $R_1 = 0$ and $R_2 < 0$.

Putting all of these ingredients together, and using the normalization condition \eqref{gnorm}, we conclude that
\begin{equation}
    \Gamma_{ij}(\epsilon) = \frac{\sigma  \sigma_{E,ij}^2}{\sigma^2 + (\epsilon - \bar E_{ij})^2} \,,\quad g_{ij}(\epsilon) = \frac{\sigma_{E,ij}^2(\epsilon - \bar E_{ij})}{\sigma^2 + (\epsilon - \bar E_{ij})^2} \,
\end{equation}
which was precisely our assumption in \eqref{c8} and \eqref{gij}.

\end{appendix}

%\nocite{apsrev41Control}
\bibliographystyle{jhep.bst}
\bibliography{entropy.bib}

\end{document}